\definecolor{toc}{RGB}{13,55,174}	%dark blue smth
\newtheorem{theorem}{Theorem}[section]
\newtheorem{lemma}[theorem]{Lemma}
\newtheorem{corollary}[theorem]{Corollary}
\newtheorem{definition}{Definition}
\newcommand{\abs}[1]{\vert #1 \vert}
\newcommand{\eps}{\epsilon}
\newcommand{\tO}{\tilde{O}}
\newcommand{\R}{\mathbb{R}}
\newcommand{\C}{\mathcal{C}}
\newcommand{\levy}{L\'{e}vy}
\newcommand{\regcoeff}{\textsc{IrregCoeff}}
\newcommand{\of}{\overline{f}}
\newcommand{\ox}{\overline{x}}
\newcommand{\uf}{\underline{f}}
\newcommand{\ux}{\underline{x}}
\newcommand{\tF}{\tilde{F}}
\newcommand{\Fs}{F^{*}}
\newcommand{\fs}{f^{*}}
\date{} 
\title{Description Complexity of Regular Distributions}
\author{
Renato Paes Leme \\ Google Research \\ {\tt renatoppl@google.com} \and 
Balasubramanian Sivan \\ Google Research \\ {\tt balusivan@google.com} \and
Yifeng Teng \\ Google Research \\ {\tt yifengt@google.com} \and 
Pratik Worah \\ Google Research  \\ {\tt pworah@google.com}
}
\begin{document}

\maketitle

\begin{abstract}
Myerson's regularity condition of a distribution is a standard assumption in economics. In this paper, we study the complexity of describing a regular distribution within a small statistical distance. Our main result is that $\tilde{\Theta}{(\eps^{-0.5})}$ bits are necessary and sufficient to describe a regular distribution with support $[0,1]$ within $\eps$ \levy-distance. We prove this by showing that we can learn the regular distribution approximately with $\tilde{O}(\eps^{-0.5})$ queries to the cumulative density function. As a corollary, we show that the pricing query complexity to learn the class of regular distribution with support $[0,1]$ within $\eps$ \levy-distance is $\tilde{\Theta}{(\eps^{-2.5})}$. To learn the mixture of two regular distributions, $\tilde{\Theta}(\eps^{-3})$ pricing queries are required.
\end{abstract}

\section{Introduction}

A Myerson-regular distribution is a distribution with CDF $F$ such that the revenue curve in quantile space $R(q) = q \cdot F^{-1}(1-q)$ is concave. For distributions that can be represented by a PDF $f$ 
(i.e., have no point masses) this is equivalent to the (non-decreasing) monotonicity of the virtual value function 
\begin{equation}\label{eq:virtual_value}
\phi(v) = v - \frac{1-F(v)}{f(v)}.
\end{equation}

Myerson-regularity (or simply regularity) is a standard condition in Economics that was originally introduced by Myerson in his seminar paper on optimal auctions \cite{myerson1981optimal}. Since then it has played a fundamental role in the design and analysis of various economic setups such as:  bilateral trade \cite{myerson1983efficient}, prior-independent mechanism design \cite{bulow1994auctions,hartline2009simple,roughgarden2012supply,fu2013prior,sivan2013vickrey, allouah2020prior}, auctions from samples \cite{dhangwatnotai2010revenue, cole2014sample, fu2015randomization,huang2015making,morgenstern2015pseudo}, approximation in mechanism design  and revenue management \cite{celis2014buy, paes2016field, alaei2019optimal,golrezaei2021boosted}, ...

Besides being widely used in Economics, they also encompass various important distributions:

\begin{itemize}
\item distributions with log-concave PDF, i.e. distributions where the PDF is of the form $f(x) = \exp(-g(x))$ for a convex function $g$, such as uniform, exponential and normal.
\item distributions satisfying the monotone hazard rate conditions (MHR) condition which is a notion from reliability theory. If a random variable measures the time until a certain failure event happens (e.g. a light bulb goes out) then MHR means that the probability that the failure happens at any given moment conditioned it hasn't yet happened weakly increases over time.
\item equal-revenue distributions where pricing at each point of the support leads to the same revenue. As an example, consider the distribution supported on $[1, \infty)$ with CDF $F(x) = 1 - 1/x$. Those are known to be extremal regular distributions and are often used to derive lower bounds in revenue management.
\item certain distributions arising from machine learning, for example: let $x$ be a random feature vector in $\R^d$ which is sampled from a uniform or Gaussian distribution restricted to a $d$-dimensional convex set $K$ and let $w \in \R^d$ be a fixed vector of weights $w \in \R^d$. Then the distribution of the dot product $\langle w, x \rangle$ is regular by the Prekopa-Leidner Theorem \cite{prekopa1973logarithmic}.%\footnote{The idea being that $\frac{f}{1-F}$ is monotone decreasing in this setting due to their result.}
\end{itemize}

\paragraph{Describing Regular Distributions} In this paper we ask how many bits of information we need to describe a regular distribution. We obtain a sharp bound and apply to derive new tight guarantees for learning regular distributions.

Since a regular distribution is a continuous object with possibly unbounded support, we need to make a few assumptions to make the problem well-defined.

\begin{itemize}
\item {\bf Bounded support}: We will assume that the distribution has support in $[0,1]$. Without this assumption, even representing the subclass of uniform distributions supported on $[n, n+1]$ requires infinitely many bits.
\item {\bf L\'{e}vy-distance approximation} We will allow for $\epsilon$-error in the representation measured in L\'{e}vy distance, which allows an $\epsilon$ error both in values and probabilities. Formally:
\begin{equation}\label{eq:levy}
 \text{\levy}(F,G) := \inf\{ \epsilon \text{ s.t. } F(v-\epsilon) - \epsilon \leq G(v)  \leq F(v+\epsilon) + \epsilon, \forall v \}
 \end{equation}
\end{itemize}

Now we can state our main question formally:

\begin{definition}\label{dfn:dc}
We say that it is possible to describe a class of distributions $\C$ with 
 $b$ bits and $\epsilon$ error, if there is a class of distributions $\C'$ with $\abs{\C'} \leq 2^b$ such that for every distribution $F \in \C$ there exists $F' \in \C'$ such that $ \text{\levy}(F,F') \leq \epsilon$.
\end{definition}

Our main theorem is:

\begin{theorem}\label{thm:main}
It is possible to describe the class of regular distributions bounded in $[0,1]$ within $\epsilon$ \levy-distance using $\tilde{O}(1/\epsilon^{0.5})$ bits. Moreover, this is tight up to polylog factors.
\end{theorem}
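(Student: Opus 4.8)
The plan is to prove the two halves of Theorem~\ref{thm:main} separately, in both cases working through the \emph{revenue curve in quantile space}. The starting observation is a dictionary: a distribution $F$ is regular with support in $[0,1]$ if and only if $R(q):=q\,F^{-1}(1-q)$ is concave on $[0,1]$ with $R(0)=0$ and $R(q)\le q$; moreover any such $R$ is the revenue curve of a regular $F$ on $[0,1]$, because concavity together with $R(0)=0$ forces $q\mapsto R(q)/q$ to be non-increasing (a legitimate quantile function), with range in $[0,1]$ once $R(q)\le q$. Hence for the upper bound it suffices, given such an $R$, to exhibit a \emph{piecewise-linear concave} $\tilde R$ with $\tilde R(0)=0$, $\tilde R\le R$, and only $\tilde{O}(\epsilon^{-1/2})$ breakpoints, whose induced regular distribution $\tilde F$ satisfies $\text{\levy}(F,\tilde F)=O(\epsilon)$. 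Rounding each breakpoint to a $\mathrm{poly}(\epsilon)$-grid costs $O(\log(1/\epsilon))$ bits, so the set $\C'$ of all such $\tilde R$ has size $2^{\tilde{O}(\epsilon^{-1/2})}$. (Equivalently, the construction of $\tilde R$ can be carried out adaptively using $\tilde{O}(\epsilon^{-1/2})$ queries to the CDF of $F$, and the rounded transcript is the short description; this is the form stated in the abstract.)

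To build $\tilde R$, split $[0,1]$ into three regimes, exploiting that \levy\ distance tolerates $\epsilon$-error in \emph{both} coordinates. On $q\in(0,\epsilon]$ (the top $\epsilon$-mass) and $q\in[1-\epsilon,1]$ (the bottom $\epsilon$-mass) the total probability is $\le\epsilon$, so \emph{any} rearrangement of mass keeps the CDF within $\epsilon$; we take $\tilde R$ there to be the chords of $R$ at dyadic breakpoints approaching the endpoint ($q=\epsilon 2^{-j}$, resp.\ $q=1-\epsilon 2^{-j}$), which are concave, continuous, and number only $O(\log(1/\epsilon))$. On the middle regime $q\in[\epsilon,1-\epsilon]$ we must instead match the value $Q(q)=R(q)/q$ to within $O(\epsilon)$, i.e.\ $|R-\tilde R|=O(\epsilon q)$, wherever the value exceeds $\epsilon$ (below that, value-slack applies). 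Counting the linear pieces needed here is the crux: on a sub-interval of length $\ell$ sitting at quantile $\sim q$, the slope $R'$ can drop by at most $O(1+1/\ell)$ on the part we must track --- since $R'\le R(q)/q\le 1$, while a slope below $-c/\ell$ would force $R$ to decrease by more than $1$ over the sub-interval and hence push the value under $\epsilon$ --- so the standard estimate for linear interpolation of a concave function uses $O(1+\sqrt{\ell(1+1/\ell)/(\epsilon q)})$ pieces. Summing over the $O(\log(1/\epsilon))$ dyadic scales (towards $q=0$, towards $q=1$, refining wherever curvature concentrates) gives $\tilde{O}(\epsilon^{-1/2})$ pieces total, after which a routine check confirms $\text{\levy}(F,\tilde F)=O(\epsilon)$ (CDFs agree up to $\epsilon$ in probability on the end regimes; quantiles agree up to $O(\epsilon)$ in value on the middle one).

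For the lower bound we exhibit $2^{\Omega(\epsilon^{-1/2})}$ regular distributions on $[0,1]$ that are pairwise $\Omega(\epsilon)$-far in \levy\ distance; any $\C'$ describing them all with error below half that separation must contain a distinct element per distribution, so $|\C'|\ge 2^{\Omega(\epsilon^{-1/2})}$. Take as base the uniform distribution on $[0,1]$, with revenue curve $R_0(q)=q(1-q)$ (curvature $2$ everywhere, $R_0(0)=0$, $R_0'(0^+)=1$). Fix a smooth nonnegative bump $\psi$ of height $1$ supported in $(-\tfrac12,\tfrac12)$ with all derivatives vanishing at $\pm\tfrac12$, tile $[\tfrac14,\tfrac34]$ by $N=\Theta(\epsilon^{-1/2})$ intervals of width $w=K\sqrt\epsilon$ (with $K$ a large constant) and centers $q_1,\dots,q_N$, and for $S\subseteq[N]$ set $R_S(q):=R_0(q)+\epsilon\sum_{i\in S}\psi((q-q_i)/w)$. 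Then $R_S(0)=0$, $R_S'(0^+)=1$, and $R_S''\le R_0''+\epsilon\|\psi''\|_\infty/w^2=-2+O(1/K^2)<0$, so each $R_S$ is the revenue curve of a regular $F_S$ on $[0,1]$, and there are $2^N$ of them; note that the concavity inequality is exactly what forces $w=\Omega(\sqrt\epsilon)$, pinning the exponent. If $S\neq T$, then at the center $q_i$ of some bump on which they differ $|R_S(q_i)-R_T(q_i)|=\epsilon$, and since $R_S,R_T$ are $1$-Lipschitz their graphs are $\Omega(\epsilon)$-far in Hausdorff distance; by the short lemma that \levy-closeness of CDFs passes to \levy-closeness of quantile functions and then, multiplying the $[0,1]$-valued quantile functions by $q\le1$, to Hausdorff-closeness of revenue curves, this yields $\text{\levy}(F_S,F_T)=\Omega(\epsilon)$.

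The main obstacle is the middle-regime count in the upper bound. A uniform $\epsilon$-approximation of $R$ genuinely needs $\Theta(1/\epsilon)$ pieces, because recovering $F^{-1}(1-q)=R(q)/q$ divides by $q$ and amplifies errors toward $q=0$, and because $R$ may have unbounded curvature; the way out is to use the sharper $O(\epsilon q)$ tolerance in the bulk, to absorb the ends (and the sub-$\epsilon$-value part) into the two-sided \levy-slack, and to use the fact that on the region one truly must track, $R$'s curvature over any window is controlled relative to that window's length and location --- which is what turns the $\Theta(1/\epsilon)$ into $\tilde{O}(\epsilon^{-1/2})$ through a Cauchy--Schwarz/multiscale bound. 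Making this, and the concavity and continuity of the resulting $\tilde R$, fully rigorous is where the effort goes; the lower bound, by contrast, is a self-contained packing argument.
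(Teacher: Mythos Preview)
Your approach is correct in outline and genuinely different from the paper's. The paper never works in revenue-curve space for the upper bound: it operates directly in value space, proving a reconstruction lemma (Lemma~\ref{lem:levy-learn}) that says a CDF is pinned down to \levy-$O(\epsilon)$ once one knows $F$ at points $x_1<\cdots<x_m$ with $(x_{i+1}-x_i)(\overline f_i-\underline f_i)\le\epsilon$ on each gap, and then uses the differential form $f'(1-F)\ge-2f^2$ of regularity to build an adaptive querying scheme (Lemmas~\ref{lem:regpdfdec}--\ref{lem:binary-search-query}, Theorem~\ref{thm:regular}) that finds $\tilde O(\epsilon^{-1/2})$ such points; non-smooth $F$ are handled by a regularity-preserving mollification (Lemma~\ref{lem:smooth}). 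Your route---piecewise-linear approximation of the concave curve $R(q)=qF^{-1}(1-q)$ with error budget $O(\epsilon q)$ on $q\in[\epsilon,1-\epsilon]$, absorbing the tails into the \levy-slack---is more direct for pure description complexity: it sidesteps the PDF analysis, the smoothing step, and the delicate ``PDF can halve at most $32$ times'' combinatorics, replacing them with a single multiscale/Cauchy--Schwarz count on a concave function. The trade-off is that the paper's PDF-space algorithm is what plugs straight into the pricing-query model of Section~\ref{sec:mol}, whereas your construction is phrased as access to $F^{-1}$ and would need a translation layer to yield Theorem~\ref{thm:pricing-query-regular}. For the lower bound, both constructions are packings of size $2^{\Theta(\epsilon^{-1/2})}$; the paper perturbs the PDF $f(v)=2v$ by flattening it on $\Theta(\epsilon^{-1/2})$ windows of width $\Theta(\sqrt\epsilon)$ (Theorem~\ref{thm:description-lb}), while you perturb the revenue curve $R_0(q)=q(1-q)$ by smooth bumps of height $\epsilon$ and width $\Theta(\sqrt\epsilon)$---the concavity constraint $R''<0$ is exactly what forces the width and hence the exponent, which is a clean way to see why $\epsilon^{-1/2}$ is the right answer. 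Two places in your sketch do need to be written out carefully: the $O\bigl(\sqrt{\ell\Delta/\eta}\bigr)$ piece-count for chord interpolation of a \emph{general} concave function (curvature may be a singular measure, so the Cauchy--Schwarz argument must be done at the level of slope increments, not $R''$), and the verification that the chord construction remains globally concave and continuous across the three regimes with $\tilde R(0)=0$ and $\tilde R\le R$ so that $\tilde R(q)/q$ is a valid non-increasing $[0,1]$-valued quantile function.
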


It is useful to contrast this with the class of \emph{general} distribution supported in $[0,1]$ for which $\Omega(1/\epsilon)$ bits are necessary and $\tilde{O}(1/\epsilon)$ bits sufficient. For the sufficient part, we can represent a CDF $F$ by the numbers $\lfloor F(k\epsilon) / \epsilon \rfloor$ for $k = 0, 1, \hdots, \lceil 1/\epsilon \rceil$. Given those, we can construct a distribution with CDF:
$$\hat{F}(x) = \epsilon \lfloor F(k\epsilon) / \epsilon \rfloor \text{ for } x \in [k\epsilon, (1+k) \epsilon)$$
which is $\epsilon$-close to $F$ in \levy-distance. To see that $\Omega(1/\epsilon)$ bits are necessary, it is enough to construct a set of $2^{\Omega(1/\epsilon)}$ distributions such that each pair differ by at least $\epsilon$ in \levy-distance. Given bits $b_0, \hdots, b_k \in \{0,1\}$ for $k = 1/(2\epsilon)$ construct a distribution such that:
$$F_b(x) = 2(k + b_k)\epsilon \text{ for } x \in [2k\epsilon, 2(1+k) \epsilon), k=0, 1, \hdots, 1/(2\epsilon)$$

For regular distributions, we will be able to construct a more succinct representation, using only square root of the number of bits. However, we will need a more sophisticated sampling procedure.

% \pw{newly written, do you guys agree? In which case I can add the missing references... }

% {\em Connection to VC dimension}: By Definition~\ref{dfn:dc}, description complexity of a set $\C$ is simply the minimum sized cover of $\C$ using L\'{e}vy balls of radius $\epsilon$. Computing the optimal cover is NP hard, in geometric settings as well. However, there exist polynomial time approximation algorithms that can compute the minimum size geometric set cover, and therefore the description complexity, with approximation factor $O(\rho_{\C}\log(\rho_{\C} 2^b)) = \tilde{O}(\rho_{\C} b)$, where $\rho_\C$ is the VC dimension of the range space corresponding to $\C$, and $b$ is the description complexity. Therefore, if the VC dimension $\rho_\C$ is bounded we can efficiently approximate description complexity within a square factor. The algorithm works by constructing a cover on a small random sample and then extending it to the ground set. 

% This leads to the question: What is the VC dimension of regular distributions with L\'{e}vy metric balls? While there exist results on estimating the VC dimension of metric balls, for  Fr\'{e}chet and Hausdorff distances of discretized curves\footnote{The VC dimension in these cases is roughly quadratic in the number of discrete points used to approximate the curve.}, the case of L\'{e}vy distance and regular distributions seems unexplored. On the description complexity (set cover) side, this seems to be the first work studying the description complexity of regular distributions.
% \begin{verbatim}
% https://arxiv.org/abs/1903.03211   
% https://en.wikipedia.org/wiki/Geometric_set_cover_problem
% \end{verbatim}

\paragraph{Beyond Regular Distributions}

While regular distributions are common in auction theory, many distributions encountered in practice are irregular. To generalize our results beyond regularity we define a notion of an irregularity coefficient, which measures how close to regular a distribution is:

\begin{equation}\label{eq:regcoeff}
\regcoeff(F) = \inf \left\{ \beta \geq 0 \text{ s.t. } q F^{-1}(1-q) + \beta \int_{1-q}^1 F^{-1}(x) dx \text{ is concave}\right\}
\end{equation}

A $0$-irregular distribution is a regular distribution in the usual sense and an $\infty$-irregular distribution is a general distribution. The $\beta$-irregular class for $\beta>0$ contains irregular distributions that are close enough to regularity to afford a low description complexity:

\begin{theorem}\label{thm:beta-irregular}
It is possible to describe the class of $O(1)$-irregular distributions bounded in $[0,1]$ within $\epsilon$ \levy-distance using $\tilde{O}(1/\epsilon^{0.5})$ bits.
\end{theorem}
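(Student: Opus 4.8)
The plan is to run the proof of Theorem~\ref{thm:main} with the revenue curve replaced by the \emph{regularized} revenue curve. Given $F$ with $\regcoeff(F)\le\beta$ for an absolute constant $\beta$, write $p(q)=F^{-1}(1-q)$, $R(q)=q\,p(q)$, $H(q)=\int_0^q p(t)\,dt=\int_{1-q}^1 F^{-1}(x)\,dx$, and
\[
 G(q):=R(q)+\beta H(q)=q\,F^{-1}(1-q)+\beta\int_{1-q}^1 F^{-1}(x)\,dx,
\]
which is concave on $[0,1]$ by the definition of $\regcoeff$. Since $F^{-1}\in[0,1]$ we get $G(0)=0$, $0\le G(q)\le(1+\beta)q\le 1+\beta$, and hence (by concavity together with this envelope) $|G'(q)|\le(1+\beta)/(1-q)$. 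These are exactly the structural features of the revenue curve used in the proof of Theorem~\ref{thm:main} --- concave, vanishing at $0$, bounded by an $O(1)$ multiple of $q$ near $0$, with a matching pointwise Lipschitz bound --- now with every constant inflated only by the $O(1)$ factor $1+\beta$. So the same net applies: cut $[0,1]$ into the $O(\log(1/\epsilon))$ dyadic windows $[2^{-k-1},2^{-k}]$ and $[1-2^{-k},1-2^{-k-1}]$ down to scale $\epsilon$, affinely rescale each so that $G$ becomes an $O(1)$-bounded, $O(1)$-Lipschitz concave function on $[0,1]$, and take a sup-norm $\epsilon$-net of such functions (of size $\exp(O(1/\sqrt\epsilon))$) on each window; resolve the extra window $[0,\epsilon]$ to sup-error $\epsilon^2$ (still $\tilde{O}(1/\sqrt\epsilon)$ bits, since rescaled it is again an $O(1)$-sized problem at error $\epsilon$); extend $G$'s approximant over the discarded window $[1-\epsilon,1]$ in any concavity-preserving way. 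This produces a family of $\exp(\tilde{O}(1/\sqrt\epsilon))$ concave candidates $\tilde G$, i.e.\ $\tilde{O}(1/\sqrt\epsilon)$ bits, one of which satisfies $|G-\tilde G|(q)=O(\epsilon q)$ on $[\epsilon,1-\epsilon]$ and $|G-\tilde G|(q)=O(\epsilon^2)$ on $[0,\epsilon]$.

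It remains to turn $\tilde G$ into a distribution and bound the error, and the key fact is that $F\mapsto G$ has an explicit, well-conditioned inverse. Differentiating $G=R+\beta\int_0^q p$ and using $p=R/q$ gives the linear ODE $R'(q)+\tfrac\beta q R(q)=G'(q)$ with $R(0)=0$; solving with integrating factor $q^\beta$ and integrating by parts,
\[
 R(q)=G(q)-\beta\,q^{-\beta}\!\int_0^q s^{\beta-1}G(s)\,ds,\qquad p(q)=\frac{R(q)}{q}=\frac{1}{1+\beta}\cdot\frac{\int_0^q s^{\beta}G'(s)\,ds}{\int_0^q s^{\beta}\,ds}.
\]
The right-hand expression exhibits $p$ as a running $s^\beta$-weighted average of the non-increasing function $G'$, so $p$ is automatically non-increasing. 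Hence from any concave candidate $\tilde G$ (replace it by its concave hull if the glued pieces are not exactly concave) we obtain $\tilde R(q)=\tilde G(q)-\beta q^{-\beta}\int_0^q s^{\beta-1}\tilde G(s)\,ds$, a bona fide quantile function $\tilde p=\min\{1,\max\{0,\tilde R/q\}\}$, and a CDF $\tilde F$; the family of such $\tilde F$ is the desired description.

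The error bound is the crux. The operator $G\mapsto R$ above has kernel $\beta q^{-\beta}\int_0^q s^{\beta-1}(\cdot)\,ds$ of total mass $1$, and the point is that the blow-up of $q^{-\beta}$ as $q\to0$ is exactly compensated: integrating the error bound $|G-\tilde G|(s)=O(\epsilon s)$ against $s^{\beta-1}$ gives $O(\epsilon q^{1+\beta})$, which $q^{-\beta}$ turns into $O(\epsilon q)$; integrating the bound $O(\epsilon^2)$ over $[0,\epsilon]$ gives $O(\epsilon^{2+\beta})$, and $q^{-\beta}\cdot O(\epsilon^{2+\beta})=O(\epsilon^2\,(\epsilon/q)^\beta)=O(\epsilon^2)$ for $q\ge\epsilon$. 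Hence $|R-\tilde R|(q)=O(\epsilon q+\epsilon^2)$ on $[\epsilon,1-\epsilon]$, so $\|p-\tilde p\|_{\infty,[\epsilon,1-\epsilon]}=O(\epsilon)$. Since $p$ and $\tilde p$ can disagree arbitrarily only on the two quantile intervals $[0,\epsilon]$ and $[1-\epsilon,1]$, each of mass $\epsilon$, those discrepancies are absorbed by the $\epsilon$ of slack in the \levy{} distance exactly as in Theorem~\ref{thm:main}, yielding $\text{\levy}(F,\tilde F)=O(\epsilon)$; rescaling $\epsilon$ by a constant finishes the proof.

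I expect the main obstacle to be precisely this last step: showing the inversion $G\mapsto R\mapsto p$ does not amplify the approximation error near $q=0$ (where both $q^{-\beta}$ and the division by $q$ blow up) or near $q=1$ (where $G'$, and hence the attainable gap $|G-\tilde G|$, is large). The former is tamed by the two structural facts $G(q)\le(1+\beta)q$ and $G$ concave (concavity also being what makes the per-window Lipschitz bounds, hence the net, cheap); the latter is harmless because for $q\le 1-\epsilon$ the integral $\int_0^q$ never reaches the steep region, so the error there corrupts only the \levy-absorbable bottom $\epsilon$-mass. The hypothesis $\beta=O(1)$ is used throughout to keep all $\beta$-dependent constants bounded --- the bit count in fact grows only like $\mathrm{poly}(\beta)\cdot\tilde{O}(1/\sqrt\epsilon)$ --- consistently with the $\beta\to\infty$ limit being the class of all distributions, where $\Theta(1/\epsilon)$ bits are necessary.
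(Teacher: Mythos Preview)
Your proposal is correct but takes a genuinely different route from the paper.

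The paper's argument for Theorem~\ref{thm:beta-irregular} is a one-line observation at the end of Section~\ref{sec:reg}: nothing in the proof of Theorem~\ref{thm:regular} (nor in the supporting Lemmas~\ref{lem:regpdfdec} and~\ref{lem:binary-search-query}) uses the constant $2$ in the inequality $f'(v)(1-F(v))\ge-2f^2(v)$ except as a constant, so replacing $2$ by $2+\beta$ reproduces the entire value-space, query-based construction verbatim with every constant inflated by a factor depending only on $\beta$.

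Your approach works entirely in quantile space and is non-algorithmic: you exploit the concavity of $G=R+\beta H$ to invoke a Bronshtein-type $\exp(\tilde O(1/\sqrt\epsilon))$ covering of bounded, Lipschitz concave functions on dyadic windows, and then recover the distribution by explicitly solving the linear ODE $R'+(\beta/q)R=G'$; the crucial point, which you identify correctly, is that the integrating-factor kernel $\beta q^{-\beta}\int_0^q s^{\beta-1}(\cdot)\,ds$ has unit mass and propagates an $O(\epsilon q)$ error in $G$ to an $O(\epsilon q)$ error in $R$, hence $O(\epsilon)$ in $p=R/q$. This is conceptually clean---it reduces the theorem to classical metric entropy plus a well-conditioned linear inversion, and makes the polynomial dependence on $\beta$ explicit---but, unlike the paper's route, it is non-constructive and does not directly yield the adaptive query algorithm underlying Theorem~\ref{thm:pricing-query-regular}. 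One small mismatch: your opening line ``run the proof of Theorem~\ref{thm:main}'' does not fit, since the paper's proof of Theorem~\ref{thm:main} never passes through revenue-curve concavity or sup-norm nets; your argument is really a self-contained alternative, not a reduction to it.
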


The notion of $\beta$-irregularity coincides with the notion of $\alpha$-strong regularity of Cole and Roughgarden \cite{cole2014sample} when the sign is negative. The definition was originally intended to interpolate between regular and MHR distributions for positive values of $\alpha$. When one considers negative values for $\alpha$, we obtain a measure of how close a certain distribution is to regularity. A distribution is $\alpha$-strongly-regular in the sense of \cite{cole2014sample} whenever:

$$
f'(v) \cdot (1-F(v)) \geq- (2-\alpha) f^2(v)
$$
and hence it is $\beta$-irregular when it is $(-\beta)$-strongly-regular, i.e.:
\begin{equation}\label{eq:alpha-reg}
f'(v) \cdot (1-F(v)) \geq- (2+\beta) f^2(v)
\end{equation}

\paragraph{Application: Pricing Query Complexity}

Our first application is to settle the pricing query complexity of learning a regular distribution with $\epsilon$-error in \levy-distance. The pricing query complexity model was introduced in \cite{leme2021pricing} with the goal of learning in economic settings where the only viable mechanism is posted prices. In such settings, we only observe if prices posted for different agents led to a sale or no sale. This notion is also useful when the auction of choice is a first-price auction, since we don't have access to truthful bids, but we still know if the bidder chose to bid above the reserve or not. In such settings, we use the binary sale/no-sale outcomes observed from previous periods to optimize the price in future auctions.

In this model a learning algorithm is able to interact with a distribution $F$ via pricing queries: in each query, the algorithm chooses a price $p_t$, a sample $v_t$ is drawn from $F$ and the algorithm only learns the sign $\{+1, -1\}$ of $v_t-p_t$. The goal of the algorithm is to estimate some parameters of the distribution such as the mean, median or monopoly price for a given class of distributions. Paes Leme, Sivan, Teng and Worah \cite{leme2021pricing} give matching upper and lower bounds for several parameters of interest, but the leave a gap in estimating the pricing query complexity of learning the CDF of a regular distribution. They give provide a $\tilde{O}(1/\epsilon^3)$ upper bound and a $\Omega(1/\epsilon^{2.5})$ lower bound.

We settle this question by providing a $\tilde{O}(1/\epsilon^{2.5})$, matching the lower bound up to polylogarithmic factors. This is obtained by computing the $O(1/\epsilon^{0.5})$ description using $O(1/\epsilon^{2})$ pricing queries to evaluate each point using the Chernoff bound:

\begin{theorem}\label{thm:pricing-query-regular}
There is a $\tilde{O}(1/\epsilon^{2.5})$ upper bound on the pricing query complexity of learning the CDF of a regular distribution within $\epsilon$ \levy-distance error. This result is tight up to polylogarithmic factors.
\end{theorem}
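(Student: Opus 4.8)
The upper bound will follow by running the CDF-query learning algorithm behind Theorem~\ref{thm:main} on top of a noisy simulation of CDF queries; the matching lower bound is precisely the $\Omega(1/\eps^{2.5})$ pricing-query bound of \cite{leme2021pricing}, so no new lower-bound construction is needed. Thus the entire task is to turn the $\tO(1/\eps^{0.5})$-bit / $\tO(1/\eps^{0.5})$-CDF-query statement of Theorem~\ref{thm:main} into a pricing-query guarantee.

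\textbf{Step 1: a noise-tolerant CDF learner.} From the proof of Theorem~\ref{thm:main} I would extract an algorithm $\mathcal{A}$ that, using $m=\tO(1/\eps^{0.5})$ (possibly adaptive) queries to the CDF $F$ of a regular distribution on $[0,1]$, outputs a description $\hat F$ with $\levy(F,\hat F)\le\eps$. The crucial point is that $\mathcal{A}$ only uses these answers up to additive precision $\Theta(\eps)$: the succinct $\tO(1/\eps^{0.5})$-bit description is obtained by evaluating $F$ at a carefully chosen set of points and rounding each value onto a grid of width $\Theta(\eps)$, so replacing every exact answer $F(p)$ by any $\tilde F(p)$ with $\abs{\tilde F(p)-F(p)}\le c\eps$ changes the output by only $O(\eps)$ in \levy-distance. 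After rescaling $\eps$ by a constant it therefore suffices to answer each of $\mathcal{A}$'s queries to within additive error $\eps$.

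\textbf{Steps 2--3: simulation and accounting.} To estimate $F(p)$ within additive error $\eps$ with failure probability at most $\delta$, post price $p$ for $n=O(\eps^{-2}\log(1/\delta))$ independent rounds and return the empirical fraction of no-sale ($-1$) outcomes; this has expectation $\Pr{}{v<p}$, which differs from $F(p)$ only by the point mass of $F$ at $p$ (a subtlety for regular distributions with bounded support, such as truncated equal-revenue distributions, that is harmless since it is within the $\eps$-slack and can be handled by the left-/right-continuity convention for $F$), and a Chernoff/Hoeffding bound gives the stated $n$. Now simulate each of $\mathcal{A}$'s $m=\tO(1/\eps^{0.5})$ queries this way with per-query failure probability $\delta=\Theta(\eps^{0.5})$; a union bound shows all $m$ answers are simultaneously $\eps$-accurate with probability at least $2/3$, and on that event Step 1 gives $\levy(F,\hat F)=O(\eps)$. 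Since $\log(1/\delta)=O(\log(1/\eps))$, each query costs $n=\tO(\eps^{-2})$ pricing queries, for a total of $m\cdot n=\tO(\eps^{-0.5})\cdot\tO(\eps^{-2})=\tO(\eps^{-2.5})$; combined with the $\Omega(1/\eps^{2.5})$ lower bound of \cite{leme2021pricing} this proves the theorem.

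I expect the only real obstacle to be Step~1 — verifying that the learner from Theorem~\ref{thm:main} is genuinely robust to $\Theta(\eps)$ noise in its CDF answers rather than needing higher precision at a few ``sensitive'' points, which would inflate the per-query cost above $\tO(\eps^{-2})$ and break the bound. The grid-rounding structure of the $\tO(1/\eps^{0.5})$-bit description should make this robustness essentially immediate, but it must be stated carefully when we formalize $\mathcal{A}$, since the bookkeeping for the overall $\tO(\eps^{-2.5})$ bound rests on it.
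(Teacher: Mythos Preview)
Your high-level outline---simulate each of the $\tO(\eps^{-0.5})$ queries with $\tO(\eps^{-2})$ pricing queries via Chernoff---matches the paper's, but Step~1 has a real gap. The learner behind Theorem~\ref{thm:main} (Theorems~\ref{thm:convex} and~\ref{thm:regular}) does \emph{not} use only CDF queries: it adaptively queries the PDF $f$ to decide where to sample densely (the binary searches in Step~1 of both proofs, and the partition refinement in Step~3, are driven by values of $f$). The succinct description is indeed a list of $\tO(\eps^{-0.5})$ CDF values rounded to precision $\Theta(\eps)$, but \emph{finding which points to query} requires PDF information; you cannot simply hand $\mathcal{A}$ noisy answers to $F(\cdot)$ and run it.

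The paper flags exactly this issue at the start of Section~\ref{sec:mol} and spends the section fixing it. The fix is to simulate a PDF query by a finite difference $\fs_\gamma(x)=(\Fs(x+\gamma)-\Fs(x-\gamma))/(2\gamma)$ of two perturbed CDF queries, and then to check, step by step, that the accuracy this gives---multiplicative error $O(\eps^2/\gamma)$ plus additive error $O(\eps/\gamma)$ at some nearby point---is good enough for each use of $f$ in the algorithm. This is not automatic: Step~1 only needs $f$ up to constant factors (so $\gamma=\Theta(\eps)$ works), but Step~3 needs $f$ to additive precision $\Theta(2^{k+\ell}\sqrt{\eps})$, which forces $\gamma=\Theta(2^{-k-\ell}\sqrt{\eps})$ and a more delicate accounting (Theorems~\ref{thm:perturbed-convex} and~\ref{thm:perturbed-regular}). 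Your worry that ``a few sensitive points'' might need higher precision is on target, but the resolution is this finite-difference machinery rather than a robustness property of CDF-only queries.
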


\paragraph{Application: Mixture Distributions}

We say that a distribution with CDF $F$ is a mixture of $k$ distributions $F_1, \hdots, F_k$ if there are weights $w_1, \hdots, w_k \geq 0$ with $\sum_i w_i = 1$ such that $F(v) = \sum_i w_i F_i(v), \forall v$.
In various applications, it is useful to write distributions as mixtures of other distributions. For example, Sivan and Syrgkanis \cite{sivan2013vickrey} design auctions for mixtures of regular distributions whose performance depends on the number of distributions in the mixture. One may ask how many regular distributions are needed to represent a general distribution. The description complexity bounds automatically imply the following corollary:

\begin{corollary}\label{cor:mix1} There exist a irregular distribution that can't be $\epsilon$-approximated in \levy-distance by a mixture of $o(1/\epsilon^{0.5})$ regular distributions.
\end{corollary}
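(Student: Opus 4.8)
The plan is to derive Corollary~\ref{cor:mix1} as a counting consequence of the lower bound half of Theorem~\ref{thm:main} (the $\tilde\Omega(\eps^{-0.5})$ bits are necessary part) combined with the fact that a mixture of few regular distributions, each specified to sufficient precision, has a short description. First I would observe that, by the lower-bound direction of Theorem~\ref{thm:main}, there is no scheme describing all regular distributions on $[0,1]$ within $\eps$ \levy-distance using $o(\eps^{-0.5})$ bits (up to polylog factors); equivalently, there is a family of $2^{\tilde\Omega(\eps^{-0.5})}$ distributions that are pairwise more than $2\eps$ apart in \levy-distance and that are \emph{hard to approximate} — but we actually need something slightly different, namely an \emph{irregular} target that resists approximation by sparse regular mixtures, so I will instead argue by contradiction about the description length of such mixtures.

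Suppose, for contradiction, that \emph{every} irregular distribution on $[0,1]$ can be $\eps$-approximated in \levy-distance by a mixture of $k = o(\eps^{-0.5})$ regular distributions. I would then build a compact encoding of every irregular (hence every, since regular $\subseteq$ general) distribution $F$ on $[0,1]$: take the promised mixture $\sum_{i=1}^k w_i F_i$ with each $F_i$ regular, replace each $F_i$ by its $\tilde O(\eps^{-0.5})$-bit description guaranteed by the upper-bound half of Theorem~\ref{thm:main} at accuracy $\eps' = \Theta(\eps)$, and round each weight $w_i$ to a multiple of $\eps/k$ using $O(\log(k/\eps))$ bits. The resulting mixture $\sum_i \hat w_i \hat F_i$ is within $O(\eps)$ of $F$ in \levy-distance: replacing $F_i$ by $\hat F_i$ perturbs the mixture CDF pointwise and in value by at most $\eps'$ each (a convex combination of \levy-close CDFs is \levy-close with the same parameter), and rounding the weights shifts the CDF by at most $\sum_i |w_i - \hat w_i| \le \eps$ pointwise. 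The total description length is $k \cdot \tilde O(\eps^{-0.5}) + O(k\log(k/\eps)) = o(\eps^{-0.5}) \cdot \tilde O(\eps^{-0.5}) = \tilde o(\eps^{-1})$ bits.

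This would give an $\tilde o(\eps^{-1})$-bit description of \emph{all} distributions on $[0,1]$ within $O(\eps)$ \levy-distance, contradicting the $\Omega(\eps^{-1})$ lower bound for general distributions stated in the discussion following Theorem~\ref{thm:main} (after rescaling $\eps$ by the implied constant, which only affects constants, not the polynomial order). Hence no such universal sparse-regular-mixture approximation exists, i.e., there is an irregular $F$ not $\eps$-approximable in \levy-distance by any mixture of $o(\eps^{-0.5})$ regular distributions. I expect the only delicate point to be bookkeeping the constants so that the composed error stays $O(\eps)$ while the exponent gap ($0.5 + 0.5 = 1$ versus the hard $\Omega(\eps^{-1})$) is genuinely beaten by the polylog slack hidden in $\tilde o$; one must make sure the polylog factors from the $k$ copies of the per-distribution encoding still multiply out to $\eps^{-1+o(1)}$ rather than $\eps^{-1}\,\mathrm{polylog}$, which is fine precisely because $k = \eps^{-0.5+o(1)}$ leaves a polynomial margin. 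A secondary point is confirming that the general-distribution lower bound in the excerpt is stated as $\Omega(\eps^{-1})$ bits (it is), so the contradiction is with a $\mathrm{poly}(1/\eps)$ — not merely polylog — separation.
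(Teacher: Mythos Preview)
Your proposal is correct and follows essentially the same route as the paper: both argue that a mixture of $k$ regular distributions admits a $\tilde O(k/\eps^{0.5})$-bit description (via the upper bound in Theorem~\ref{thm:main} applied to each component plus a few bits for the weights), and contrast this with the $\Omega(1/\eps)$ lower bound for general distributions stated just after Theorem~\ref{thm:main}. The paper's justification is a single sentence; you have simply filled in the details (closure of \levy-distance under convex combination, weight rounding, and the polylog bookkeeping), and your caveat about the polylog slack is exactly the looseness the paper itself glosses over when it writes $O(k/\eps^{0.5})$ rather than $\tilde O(k/\eps^{0.5})$.
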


This follows directly from the fact that a mixture of $k$ regular distributions can be described by $O(k/ \epsilon^{0.5})$ while a general distribution requires $\Omega(1/\epsilon)$ bits to represent.

It is also useful in ML applications to represent distributions as mixtures of Gaussians. Our result also implies that one may require many Gaussians to represent a regular distribution, since a Gaussian distribution can be represented by $\tilde{O}(1)$ bits.

\begin{corollary}\label{cor:mix2} There exist a regular distribution that can't be $\epsilon$-approximated in \levy-distance by a mixture of $o(1/\epsilon^{0.5})$ Gaussian distributions.
\end{corollary}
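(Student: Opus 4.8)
The plan is to convert an approximation by few Gaussians into a short description and then invoke the lower bound half of Theorem~\ref{thm:main}. Suppose for contradiction that \emph{every} regular distribution supported on $[0,1]$ is within $\epsilon$ \levy-distance of a mixture of $k = o(1/\epsilon^{0.5})$ Gaussians. I would produce a class $\C'$ of distributions with $\abs{\C'} \le 2^{O(k \log(1/\epsilon))}$ such that every regular distribution on $[0,1]$ is within $O(\epsilon)$ \levy-distance of some member of $\C'$; since the lower bound of Theorem~\ref{thm:main} is robust to changing the error by a constant factor, this contradicts that theorem once $k \log(1/\epsilon)$ is $\tilde o(1/\epsilon^{0.5})$, which is exactly the regime $k = o(1/\epsilon^{0.5})$ (the polylog factors being absorbed as in Corollary~\ref{cor:mix1}).

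The technical core is a ``tameness and rounding'' lemma: any mixture $G = \sum_{i=1}^k w_i\, N(\mu_i,\sigma_i^2)$ lying within \levy-distance $\epsilon$ of some distribution supported on $[0,1]$ is within \levy-distance $O(\epsilon)$ of one of at most $2^{O(k \log(1/\epsilon))}$ canonical mixtures. I would prove it in three steps. First, \levy-closeness to a $[0,1]$-distribution forces $G$ to place at least $1 - O(\epsilon)$ of its mass in the window $W = [-O(\epsilon),\, 1+O(\epsilon)]$; an averaging argument then shows that the components placing less than half of their mass in $W$ carry only $O(\epsilon)$ total weight, so those components --- together with all components of weight below $\epsilon/k$ --- can be discarded at total cost $O(\epsilon)$ in \levy-distance. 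Second, a surviving Gaussian has at least half its mass in an interval of length at most $2$, which confines $\mu_i$ to $W$ and bounds $\sigma_i$ above by a constant. Third, I would round each surviving $\mu_i$ to an additive $\epsilon$-grid, each $\sigma_i$ to a multiplicative $(1+\epsilon)$-grid (snapping $\sigma_i$ below roughly $\epsilon/\sqrt{\log(1/\epsilon)}$ down to $0$, i.e.\ to a point mass), and each $w_i$ to a $(\epsilon/k)$-grid; here the key fact is that the \levy-distance between two mixtures is at most the worst per-component \levy-distance plus the $\ell_1$-distance between the weight vectors (a one-line averaging computation), so the total rounding cost is $\tilde O(\epsilon)$. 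Each surviving component is now specified by $O(\log(1/\epsilon))$ bits, giving at most $2^{O(k\log(1/\epsilon))}$ canonical mixtures; let $\C'$ be this set.

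Granting the lemma, the corollary is immediate: by assumption each regular $F$ on $[0,1]$ is $\epsilon$-close to a $k$-Gaussian mixture, which is itself $O(\epsilon)$-close to a member of $\C'$, so $\C'$ is an $O(\epsilon)$-net for the regular class on $[0,1]$ of size $2^{\tilde o(1/\epsilon^{0.5})}$, contradicting Theorem~\ref{thm:main}; hence some regular distribution cannot be $\epsilon$-approximated in \levy-distance by any mixture of $o(1/\epsilon^{0.5})$ Gaussians. I expect the main obstacle to be the first step of the lemma: a Gaussian mixture can have individual components sitting arbitrarily far from $[0,1]$, so the encoding ``$\tilde O(1)$ bits per Gaussian'' only applies after arguing that all such far-away (or tiny-weight) components carry negligible total weight and can be pruned, after which the remaining components are boundedly parametrized. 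The surrounding \levy-distance bookkeeping --- the triangle inequality and the mixture-averaging bound --- is routine but must be tracked carefully so that the accumulated slack remains $O(\epsilon)$.
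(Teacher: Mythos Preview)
Your approach is exactly the paper's: reduce to the description-complexity lower bound of Theorem~\ref{thm:main} by arguing that a mixture of $k$ Gaussians admits a $\tilde O(k)$-bit description. The paper treats this corollary as immediate, simply asserting that ``a Gaussian distribution can be represented by $\tilde O(1)$ bits''; your tameness-and-rounding lemma (pruning far-away or tiny-weight components, then discretizing the surviving $(\mu_i,\sigma_i,w_i)$) is precisely the argument needed to justify that assertion rigorously, and your plan for it is sound.
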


\paragraph{Learning Mixtures of Regular Distributions}

A mixture of two regular distributions can be described using $\tilde{O}(1/\epsilon^{0.5})$ bits since we need $\tilde{O}(1/\epsilon^{0.5})$ bits to describe each distribution and $\tilde{O}(1)$ bits more to describe the weights. Given that, one would guess that the pricing query complexity of learning a mixture is also $\tilde{O}(1/\epsilon^{2.5})$. We conclude the paper with the following rather surprising result:

\begin{theorem}\label{thm:learning_mixture_regular}
Let $\C$ be the class of distributions supported in $[0,1]$ that can be written as a mixture of two regular distributions. To estimate the CDF of a distribution in that class within $\epsilon$ in \levy-distance we require at least $\Omega(1/\epsilon^3)$ pricing queries.
\end{theorem}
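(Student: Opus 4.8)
The plan is to prove the lower bound by a reduction to a noisy sparse‑set identification problem. I will construct a collection of distributions $\{F_S\}$, each a mixture of two regular distributions, indexed by the $k$‑subsets $S$ of a grid $\{1,\dots,n\}$, where $n=\Theta(1/\epsilon)$ and $k=\tilde\Theta(1/\sqrt\epsilon)$, with the following properties: (a) there are disjoint windows $W_1,\dots,W_n\subseteq[0,1]$ of width $\Theta(\epsilon)$, and $F_S$ equals a fixed baseline CDF outside $\bigcup_{\ell\in S}W_\ell$; (b) inside $W_\ell$ with $\ell\in S$, $F_S$ deviates from the baseline by a mass‑preserving local perturbation of magnitude $\Theta(\epsilon)$, while inside $W_\ell$ with $\ell\notin S$ it equals the baseline; (c) the baseline and all perturbations keep the relevant CDF values bounded away from $0$ and $1$; (d) $\levy(F_S,F_{S'})=\Omega(\epsilon)$ whenever $S\ne S'$. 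Since $\log_2\binom nk=\tilde\Theta(1/\sqrt\epsilon)$, this family is consistent with — and in fact saturates — the description‑complexity bound $\tilde O(1/\sqrt\epsilon)$; the content of the theorem is that with pricing queries one cannot recover such a succinct description cheaply. By (d), any learner outputting a CDF within $c\epsilon$ in \levy-distance (for a small constant $c$) must, in effect, identify $S$.

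The lower bound then has two ingredients. First, a pricing query at a price in $W_\ell$ returns a $\text{Bern}(1-F_S(p))$ sample whose parameter changes by only $\Theta(\epsilon)$ between $\ell\in S$ and $\ell\notin S$ — and stays bounded away from $0$ and $1$ by (c) — while a query outside all windows is uninformative about $S$; hence deciding the membership of a single window reliably costs $\Omega(1/\epsilon^2)$ queries inside it, and spending $o(1/\epsilon^2)$ there yields a read that errs with constant probability. Second, because $k=o(n)$, a uniformly random $S$ stays undetermined until the learner has reliably read all but $O(n/k)=o(n)$ of the windows: with $m$ windows unread and $j<k$ members found so far, as soon as $0<k-j<m$ the set of members among the unread windows could be any of exponentially many, all giving $\Omega(\epsilon)$‑separated completions by (d). Combining, the learner must reliably read $\Omega(n)=\Omega(1/\epsilon)$ windows, at $\Omega(1/\epsilon^2)$ queries each (and in fact slightly more per window, to keep the total number of mis‑reads below $k$, which only improves the bound), for a total of $\Omega(1/\epsilon^3)$. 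I would formalize this via an Assouad/Fano‑type argument applied per window, or via a direct potential argument tracking the number of resolved windows; the plain global mutual‑information bound is lossy here — combining $O(\epsilon^2)$ bits per query with the $\tilde O(1/\sqrt\epsilon)$‑bit entropy of $S$ only gives $\Omega(1/\epsilon^{2.5})$, the single‑regular rate — so the refinement to a coverage argument is essential.

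The main obstacle — and the step that genuinely uses that we have \emph{two} regular components — is constructing $F_S$ with properties (a)–(d). A single regular distribution cannot realize these configurations: the concavity of its revenue curve in quantile space, equivalently the constraint \eqref{eq:alpha-reg} with $\beta=0$ on how fast its density can fall, rigidly couples the locations and sizes of its $\Theta(1/\sqrt\epsilon)$ "pieces" — in particular one cannot in general place a mass‑preserving $\Theta(\epsilon)$‑perturbation in an arbitrary window without creating an up‑then‑down density profile incompatible with regularity — which is exactly why single‑regular learning costs only $\tilde\Theta(1/\epsilon^{2.5})$. The additional freedom in a mixture of two regular distributions, whose value‑space revenue function $v(1-F(v))$ is a sum of two single‑peaked curves and hence may be "bimodal", is what lets the perturbations sit at an essentially arbitrary sparse subset of windows with each component still regular. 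Carrying this out requires choosing the per‑window perturbation shape and its split between the two components with care — a natural device is to let one component carry an equal‑revenue‑type scaffold, which is extremal among regular distributions, to absorb the slack needed to keep the other component regular — and then verifying (c) and the pairwise $\Omega(\epsilon)$‑separation (d) for all $\binom nk$ subsets. I expect this construction, rather than the information‑theoretic argument, to be the bulk of the work.
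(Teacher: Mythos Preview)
Your high-level plan---disjoint $\Theta(\epsilon)$-wide windows, a per-window $\Omega(1/\epsilon^2)$ distinguishing cost, and a coverage argument rather than a global mutual-information bound---is exactly the right skeleton, and it matches what the paper does. The gap is in the construction. You insist on indexing the family by $k$-subsets with $k=\tilde\Theta(1/\sqrt\epsilon)$ so that its entropy matches the description complexity, and this forces you to realize $k$ simultaneous local perturbations as a mixture of \emph{two} regular components. You correctly flag this as ``the bulk of the work'' and then do not do it; in fact it is not clear it can be done. A mass-preserving $\Theta(\epsilon)$-CDF bump over a width-$\Theta(\epsilon)$ window means a $\Theta(1)$ PDF excursion, and regularity bounds the rate of PDF decrease by $\abs{f'}\le 2f^2/(1-F)=O(1)$ when $f,1-F=\Theta(1)$, so neither component can shed a $\Theta(1)$ excursion within a width-$\epsilon$ window. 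Packing $\tilde\Theta(1/\sqrt\epsilon)$ such bumps into just two regular pieces is therefore a genuine obstacle, not a detail.

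The point you are missing is that the coverage argument does not need the family to have high entropy; it only needs many disjoint windows, each of which \emph{could} carry the perturbation. The paper takes the family $\{F_a\}$ with a \emph{single} perturbation, located in one of $\Theta(1/\epsilon)$ disjoint windows (parametrized by $a$ on a $\Theta(\epsilon)$-grid). This family has only $O(\log(1/\epsilon))$ bits of entropy, yet any learner must still place $\Omega(1/\epsilon^2)$ queries in every window, since otherwise the input $F_a$ with its bump in an under-queried window is indistinguishable from the uniform baseline. With a single bump the two-regular construction becomes explicit and easy: let $f_{a,\delta}$ be the density that is constant $a$ on $[0,\tfrac{1-\delta}{a}]$ and then linear down to $0$ at $\tfrac{1+\delta}{a}$ (a direct check of \eqref{eq:smooth_reg} shows this is regular), and let $g_{a,\delta}=2-f_{a,\delta}$ (non-decreasing, hence regular); their even mixture is uniform, and using mismatched parameters $\delta=\epsilon/2$ versus $\delta=\epsilon$ in the two pieces gives a mixture that deviates from uniform by $\Theta(\epsilon)$ only on a window of width $O(\epsilon)$ around $1/a$. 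So the construction you anticipated as the hardest step evaporates once you drop the unnecessary requirement that the family saturate the description-complexity bound.
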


\paragraph{Why Levy Distance?} There are different ways to measure the distance between two distributions such as Total Variation (TV), Wasserstein, Kolmogorov, and \levy. We note that bounds on the \levy-distance automatically imply bounds on the Wasserstein. The Kolmogorov and TV distances are stronger notions but it is impossible to obtain any approximation in either one using pricing queries. This is discussed in detail in \cite{leme2021pricing}. The Kolmogorov distance between two distributions $F$ and $G$ is given by:
\begin{equation}\label{eq:kol}
\text{Kol}(F,G) := \inf \{ \epsilon \text{ s.t. } F(v) - \epsilon \leq G(v) \leq F(v) + \epsilon, \forall v \}
 \end{equation}
If a distribution is a deterministic value in $[0,1]$, to get any meaningful approximation in Kolmogorov distance, one needs to estimate this value exactly, which is impossible using pricing queries. 

Another reason to study \levy-distance is that such a metric has been considered in the literature on the sample complexity of learning revenue-optimal auctions, see Brustle et al \cite{brustle2020multi} and Cherapanamjeri et al \cite{cherapanamjeri2022estimation} for examples. Better understanding on the complexity of estimating a distribution within \levy-distance can lead to improved results on the complexity of auction learning.

\paragraph{Related Work} Our work is broadly situated in the theme of sample complexity in algorithmic economics, where the goal is to understand what is the minimal amount of information to describe or optimize a certain economic setup. There are several ways one can explore this question. For example, Dhangwatnotai et al \cite{dhangwatnotai2010revenue} and Fu et al \cite{fu2013prior} ask to what extent one can optimize an auction using a single sample of a distribution. In the other extreme, we can ask how many samples from a certain distribution are required to optimize an auction (Cole and Roughgarden \cite{cole2014sample} and Morgenstern and Roughgarden \cite{morgenstern2015pseudo}) or compute the optimal reserve price (Huang et al \cite{huang2015making}).

Closest to our paper is the paper by Paes Leme at al \cite{leme2021pricing} which considers a restricted query model. Instead of having access to samples of a distribution, we are only allowed to post a price and observe for a fresh draw of that distribution whether the price was above or below the posted price. This is motivated by learning in two important scenarios: (i) settings where posted-price mechanisms are used and we only observe purchase/no-purchase decisions; (ii) learning in non-truthful auctions (like first-price auctions) where the bid is not an unbiased sample of the value but we can still observe whether a bidder decided to bid above the reserve price or not. In this setting, Paes Leme at al \cite{leme2021pricing} provides tight bounds on how to learn the monopoly price of MHR, regular and general distributions and shows that it requires strictly less samples to learn the reserve than to learn the entire CDF of a regular distribution.

However, \cite{leme2021pricing} leaves a gap on the number of pricing queries required to learn the CDF of a regular distribution: they show a lower bound of $\Omega(1/\epsilon^{2.5})$ and an upper bound of $\tilde{O}(1/\epsilon^3)$. We settle this question by providing an algorithm with pricing query complexity $\tilde{O}(1/\epsilon^{2.5})$. 

Our work is also related to the line of work on query complexity in Computer Science, which asks what is the minimum number of queries to a black box required to perform a certain task. This approach has been applied to learning theory, parallel computing, quantum computing, analysis of Boolean functions, and optimization, among others. Since this literature is too broad to be surveyed here, we refer the reader to the book by Kothari, Lee, Newman, and Szegedy \cite{szegedy2023query}.

\section{Description Complexity of Regular Distributions}

An $\Omega(1/\epsilon^{0.5})$ lower bound on the description complexity of regular distributions is implicit in the $\Omega(1/\epsilon^{2.5})$ lower bound on the pricing query complexity given in \cite{leme2021pricing}. We can even modify  their example to strengthen the result to work for the class of distributions with non-decreasing PDF functions (which is a subclass of MHR distributions and regular distributions).

\begin{theorem}
\label{thm:description-lb}
There is a set of $2^{\Omega(1/\epsilon^{0.5})}$ distributions  with non-decreasing PDF supported on $[0,1]$ such that for every given pair the \levy-distance is at least $\Omega(\epsilon)$.
\end{theorem}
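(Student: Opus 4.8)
The plan is to construct an explicit family of regular distributions indexed by bit-strings of length $m = \Theta(1/\epsilon^{0.5})$, where flipping a single bit changes the distribution by $\Omega(\epsilon)$ in \levy-distance. The key structural idea is to encode bits into the \emph{slope} of a piecewise-linear PDF rather than into its \emph{height} (as one does for general distributions): because a non-decreasing PDF must stay non-decreasing, a perturbation to the slope on one subinterval propagates as a constant vertical shift of the PDF on all later subintervals, and the quadratic relationship between the width of an interval, the slope, and the resulting probability mass lets us fit $\Theta(1/\epsilon^{0.5})$ independent choices into $[0,1]$ while each choice still moves the CDF by $\Omega(\epsilon)$ somewhere.

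Concretely, I would partition $[0,1]$ (or a subinterval like $[1/2,1]$ to leave room) into $m = \Theta(1/\epsilon^{0.5})$ consecutive blocks each of width $w = \Theta(\epsilon^{0.5})$. On block $i$ the PDF is linear; its average height is chosen so that the baseline distribution is, say, uniform, and the bit $b_i \in \{0,1\}$ toggles whether the PDF ramps up a little faster or a little slower across that block — the two options differing in slope by an amount that changes the mass on that block by $\Theta(w \cdot \text{(slope difference)} \cdot w) = \Theta(\epsilon)$ while keeping the total over the block fixed is \emph{not} possible, so instead I keep the PDF continuous and let the choice on block $i$ shift the PDF value entering block $i+1$ by a controlled amount; summing these shifts must be kept $O(1)$ so the PDF stays bounded and non-decreasing on $[0,1]$. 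The arithmetic to check is: (i) monotonicity of the PDF throughout, which forces the per-block slope perturbations to be nonnegative or carefully sign-controlled; (ii) that the accumulated area stays $\le 1$ so we get a genuine probability distribution (rescale at the end if needed); and (iii) that for any two distinct bit-strings, at the first coordinate $i$ where they differ, the CDFs have already separated by $\Omega(\epsilon)$ over an interval of length $\Omega(\epsilon)$ near the end of block $i$, which is exactly what is needed to lower-bound the \levy-distance by $\Omega(\epsilon)$ (recall from \eqref{eq:levy} that a gap of $\delta$ in CDF value sustained over a value-window of width $\delta$ forces \levy-distance $\ge \delta$).

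The main obstacle I anticipate is the tension in point (ii)/(iii): the shifts induced by the bits accumulate, so naively one pays $\Omega(m\epsilon) = \Omega(\epsilon^{0.5})$ total PDF budget, which is fine ($O(1)$), but the \emph{later} blocks then sit on a PDF plateau whose height is itself data-dependent, which can mask the $\Omega(\epsilon)$ signal of an individual bit or, conversely, make two different strings agree in CDF even though they differ in a bit. The fix is to make the bit on block $i$ affect the CDF \emph{locally} — e.g. have the PDF on block $i$ dip below and then rise above its baseline trajectory in a way that returns to the same value at the right endpoint of the block (a "bump" of controlled slope that is still globally non-decreasing because it is superimposed on a steep enough baseline ramp), so that the CDF deviation of $\Omega(\epsilon)$ occurs strictly inside block $i$ and does not propagate. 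One then verifies non-decreasingness by choosing the baseline slope large enough to dominate the bump's negative slope, checks total mass, and concludes that distinct strings differ by $\Omega(\epsilon)$ in \levy-distance at their first differing block. Since non-decreasing PDF implies MHR implies regular, this yields the claimed $2^{\Omega(1/\epsilon^{0.5})}$ well-separated family and hence the lower bound.
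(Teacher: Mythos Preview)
Your proposal is correct and is essentially the same construction as the paper's, just described less concretely. The paper takes the baseline PDF $f(v)=2v$ on $[0,1]$, partitions into $n=\Theta(1/\sqrt{\epsilon})$ blocks of width $4\sqrt{\epsilon}$, and for each bit $b_k=1$ flattens the PDF to the constant $2v_0$ on the middle subinterval $(v_0-\sqrt{\epsilon},v_0+\sqrt{\epsilon})$ of block $k$; this is exactly your ``localized bump superimposed on a steep baseline ramp that returns to the same value at the block boundary,'' and because the flat segment carries the same mass as the linear one it replaces, the perturbation has zero net effect on the CDF outside the block, which is your fix for the accumulation issue. One small simplification worth noting: by choosing the bump to be a \emph{flat} segment (slope zero) rather than an arbitrary dip-then-rise, the paper sidesteps any delicate balancing of slopes for monotonicity---the perturbed PDF is non-decreasing for free (linear, then a jump up, flat, a jump up, linear again)---so you may find that concrete choice cleaner than the generic bump you describe.
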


\begin{proof} Let $n = 1/(4 \epsilon^{0.5})$ and let $b = (b_0, \hdots, b_{n-1}) \in \{0,1\}^n$ be a sequence of bits. Now define a PDF $f_b(x)$ such that for $v \in [4\sqrt{\epsilon}k, 4\sqrt{\epsilon}(k+1)]$ we have $f_b(v) = 2v$ if $b_k = 0$ and 

\begin{equation*}
     f_b(v) = \left\{\begin{array}{lr}
        2v, & \text{if } v\in[v_0-2\sqrt{\eps},v_0-\sqrt{\epsilon}]\cup[v_0+\sqrt{\eps},v_0+2\sqrt{\epsilon}];\\
        2v_0, & \text{if } v\in(v_0-\sqrt{\epsilon},v_0+\sqrt{\epsilon}).\\
        \end{array}\right. \qquad \text{for } v_0 = (4k+2)\sqrt{\epsilon}
\end{equation*}
if $b_k = 1$ (see Figure \ref{fig:lower_bound}). First, we observe that all such functions have non-decreasing PDFs. Now, if two functions differ in a certain bit $b_t$ then the \levy-distance must be at least $\Omega(\epsilon)$ by comparing their CDF around $v_0 = (4k+2)\sqrt{\epsilon}$. In particular, let $b,b'\in\{0,1\}^n$ be two sequences that only differs at index $k$, with $b_k=0$ and $b'_k=1$. Let $F,G$ be the CDFs of distributions constructed by sequence $b$ and $b'$ respectively. Then $F$ and $G$ have $\Omega(\eps)$ \levy-distance since $F(v_0)<G(v_0-0.1\eps)-0.1\eps$.
\end{proof}

 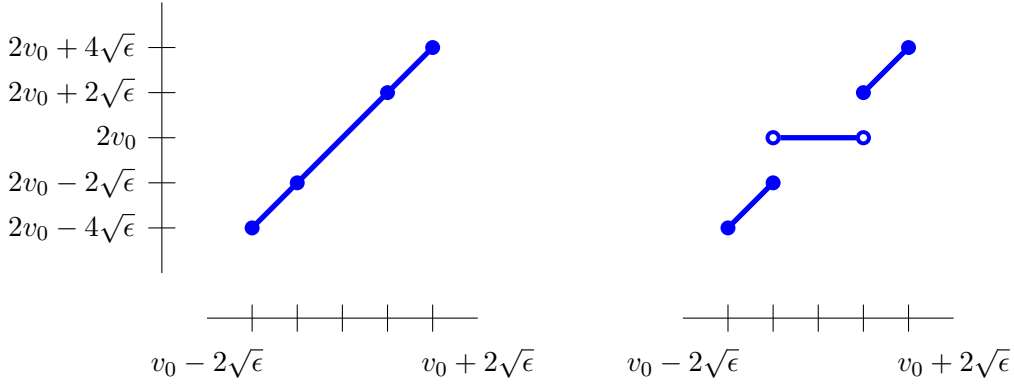
\begin{figure}[h]
\centering
\begin{tikzpicture}[scale=.6]

\node at (0,-1) {$v_0-2\sqrt{\eps}$};
\node at (6,-1) {$v_0 + 2\sqrt{\eps}$};

\draw (0,0) -- (6,0);
\draw (1,-.3) -- (1,.3);
\draw (2,-.3) -- (2,.3);
\draw (3,-.3) -- (3,.3);
\draw (4,-.3) -- (4,.3);
\draw (5,-.3) -- (5,.3);

\draw[line width=2, blue] (1,2)--(5,6);
\node[circle,fill,inner sep=2pt, blue] at (1,2) {};
\node[circle,fill,inner sep=2pt, blue] at (2,3) {};
\node[circle,fill,inner sep=2pt, blue] at (4,5) {};
\node[circle,fill,inner sep=2pt, blue] at (5,6) {};

\draw (-1,1)--(-1,7);
\draw (-1.3,3)--(-0.7,3);
\draw (-1.3,2)--(-0.7,2);
\draw (-1.3,4)--(-0.7,4);
\draw (-1.3,5)--(-0.7,5);
\draw (-1.3,6)--(-0.7,6);

\node at (-3,3) {$2v_0-2\sqrt{\eps}$};
\node at (-3,2) {$2v_0-4\sqrt{\epsilon}$};
\node at (-2,4) {$2v_0$};
\node at (-3,6) {$2v_0+4\sqrt{\epsilon}$};
\node at (-3,5) {$2v_0+2\sqrt{\eps}$};

\begin{scope}[xshift=300]
\node at (0,-1) {$v_0-2\sqrt{\eps}$};
\node at (6,-1) {$v_0 + 2\sqrt{\epsilon}$};

\draw (0,0) -- (6,0);
\draw (1,-.3) -- (1,.3);
\draw (2,-.3) -- (2,.3);
\draw (3,-.3) -- (3,.3);
\draw (4,-.3) -- (4,.3);
\draw (5,-.3) -- (5,.3);

\draw[line width=2, blue] (1,2)--(2,3);
\draw[line width=2, blue] (2,4)--(4,4);
\draw[line width=2, blue] (4,5)--(5,6);
\node[circle,fill,inner sep=2pt, blue] at (1,2) {};
\node[circle,fill,inner sep=2pt, blue] at (2,3) {};
\node[circle,fill,inner sep=2pt, white] at (2,4) {};
\node[circle, draw, inner sep=1.5pt, line width=1.5,blue] at (2,4) {};
\node[circle,fill,inner sep=2pt, white] at (4,4) {};
\node[circle, draw, inner sep=1.5pt, line width=1.5,blue] at (4,4) {};
\node[circle,fill,inner sep=2pt, blue] at (4,5) {};
\node[circle,fill,inner sep=2pt, blue] at (5,6) {};

\end{scope}

\end{tikzpicture}
\caption{The PDF $f_b$ in region $[v_0-2\sqrt{\eps}, v_0+2\sqrt{\epsilon}]$ for bit $b_k=0$ (left) and $b_k=1$ (right).}
\label{fig:lower_bound}
 \end{figure}

The above theorem shows that to describe a regular distribution within $\eps$ \levy-distance, $\Omega(\eps^{-0.5})$ bits are necessary. The main focus of the remaining section is to provide a matching upper bound $\tO(\eps^{-0.5})$ on the description complexity.

\subsection{Learning a smooth distribution}

We will start with the assumption that the distribution is smooth, by which we mean that it has no point masses and is described by a PDF of class $C^1$, i.e., the derivative $f'(v)$ exists and is continuous. Under this assumption, regularity can be written as the following condition:
\begin{equation}\label{eq:smooth_reg}
 f'(v) \cdot (1-F(v)) \geq- 2 f^2(v) 
\end{equation}
which corresponds to the derivative of the virtual value function being non-negative.  We start proving Lemma \ref{lem:levy-learn} with sufficient conditions for learning a smooth distribution. In Section \ref{sec:cvx} we describe a simplified version of our algorithm to learn a distribution with a convex PDF ($ f'(v) \geq 0$). After we describe our main ideas in this special case, we then turn to learn a smooth regular distribution in Section \ref{sec:reg}. Finally, we drop the smoothness assumption in Section \ref{sec:description-non-smooth} by using a regularity-preserving mollification argument.

\begin{lemma}\label{lem:levy-learn}
For any unknown distribution $F$ and $m$ points $x_1<x_2<\cdots<x_m$, such that for each $i$,
\begin{itemize}
\item we can identify $F_i\in[0,1]$ such that $F(x_i)\in [F_i-\eps,F_i+\eps]$, and
\item either $x_{i+1}-x_i\leq \eps$, or there exists $\underline{f}_i\leq \overline{f}_i$ such that $f(x)\in [\underline{f}_i,\overline{f}_i]$ for every $x\in[x_i,x_{i+1}]$, and satisfy $(x_{i+1}-x_i)(\overline{f}_i-\underline{f}_i)\leq \eps$,
\end{itemize}
we can construct a distribution $\hat{F}$ within $O(\eps)$ \levy-distance from $F$ on $[x_1,x_m]$.
% , i.e. $\abs{F(x) - \hat{F}(x)} \leq O(\epsilon)$  for all $ x \in [x_1, x_m]$.
\end{lemma}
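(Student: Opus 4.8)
The plan is to build the approximating CDF $\hat{F}$ as a piecewise-linear interpolation anchored at the sampled points $x_i$, and then verify the two-sided Lévy sandwich condition from \eqref{eq:levy} interval by interval. Concretely, on each interval $[x_i, x_{i+1}]$ I would set $\hat{F}$ to interpolate linearly between the values $F_i$ and $F_{i+1}$ (the estimates we are handed), and outside $[x_1, x_m]$ we do not care since the claim is only about $[x_1,x_m]$. The key quantity to control is, for each $v \in [x_i,x_{i+1}]$, the vertical gap $|\hat{F}(v) - F(v)|$ — if I can show this is $O(\eps)$ for every $v$, then the Lévy distance is trivially $O(\eps)$ (taking horizontal shift zero). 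The subtlety is that the vertical gap need not be $O(\eps)$ pointwise in the second case (when $x_{i+1}-x_i$ is large), so there I will instead exploit the density bounds $\underline{f}_i \le f \le \overline{f}_i$ to argue that a \emph{small horizontal shift} of size $O(\eps)$ brings $\hat{F}$ within $\eps$ of $F$.

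The two cases run as follows. \textbf{Case 1: $x_{i+1}-x_i \le \eps$.} Here $F$ itself varies by at most... well, not necessarily little, but $\hat{F}$ is the linear interpolant between $F_i \in [F(x_i)-\eps, F(x_i)+\eps]$ and $F_{i+1}$, so for $v$ in this short interval, $\hat F(v) \in [\min(F_i,F_{i+1}), \max(F_i,F_{i+1})] \subseteq [F(x_i) - \eps, F(x_{i+1})+\eps]$ up to the estimation error; since $x_{i+1} - x_i \le \eps$, a horizontal shift of $\eps$ maps $v$ into a window containing both $x_i$ and $x_{i+1}$, so monotonicity of $F$ gives $F(v - \eps) - \eps \le F(x_i) - \eps \le \hat F(v)$ and symmetrically $\hat F(v) \le F(x_{i+1}) + \eps \le F(v+\eps) + \eps$, which is exactly the Lévy condition with parameter $O(\eps)$. \textbf{Case 2: $(x_{i+1}-x_i)(\overline f_i - \underline f_i) \le \eps$ and $f \in [\underline f_i, \overline f_i]$ on $[x_i,x_{i+1}]$.} Here $F$ is, on this interval, sandwiched between two affine functions of slope $\underline f_i$ and $\overline f_i$ through (approximately) the point $(x_i, F(x_i))$; moreover the total variation of the slope-difference, namely $(x_{i+1}-x_i)(\overline f_i - \underline f_i) \le \eps$, bounds how far the true $F$ can deviate from \emph{any} fixed affine function with slope in that range. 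Since $\hat F$ restricted to $[x_i,x_{i+1}]$ is affine with slope $(F_{i+1}-F_i)/(x_{i+1}-x_i)$, and $F_{i+1}-F_i$ is within $2\eps$ of $F(x_{i+1}) - F(x_i) = \int_{x_i}^{x_{i+1}} f$, whose value lies in $[(x_{i+1}-x_i)\underline f_i, (x_{i+1}-x_i)\overline f_i]$, I can show $|\hat F(v) - F(v)| = O(\eps)$ directly at the endpoints and $O(\eps) + (x_{i+1}-x_i)(\overline f_i - \underline f_i) = O(\eps)$ in the interior. So in Case 2 we actually get the stronger Kolmogorov-type bound with zero horizontal shift.

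The last bookkeeping step is to stitch the intervals together: the per-interval bounds are all of the form $F(v-c\eps) - c\eps \le \hat F(v) \le F(v+c\eps) + c\eps$ for a universal constant $c$, and since this holds for every $v \in [x_1,x_m]$ (each such $v$ lies in some $[x_i,x_{i+1}]$), we conclude $\text{\levy}(F,\hat F) \le c\eps = O(\eps)$ on $[x_1,x_m]$ as required. I expect the \textbf{main obstacle} to be Case 2: one has to be careful that the linear interpolant $\hat F$, whose slope is only pinned down to lie in an interval of width $\overline f_i - \underline f_i$ (which may itself be large when $x_{i+1}-x_i$ is small), does not drift away from $F$ — the saving grace is precisely that the drift accumulates at rate $(\overline f_i - \underline f_i)$ over a horizontal span of only $x_{i+1}-x_i$, and the product hypothesis caps the total drift at $\eps$. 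Getting the constants and the direction of the one-sided inequalities right (and handling the estimation slack $\pm\eps$ in $F_i$ cleanly) is the fiddly part, but there is no conceptual gap.
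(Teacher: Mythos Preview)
Your proposal is correct and follows essentially the same route as the paper: define $\hat F$ as the piecewise-linear interpolant through the points $(x_i,F_i)$, handle short intervals ($x_{i+1}-x_i\le\eps$) via the Lévy horizontal shift, and on long intervals prove the stronger Kolmogorov bound $|\hat F(v)-F(v)|=O(\eps)$ by sandwiching the interpolant's slope using $\underline f_i,\overline f_i$ and the product hypothesis. The only point you omit that the paper handles explicitly is that the raw estimates $F_1,\dots,F_m$ need not be monotone, so $\hat F$ as defined may fail to be a CDF; the paper fixes this by first sorting the $F_i$ and checking that the hypotheses survive the swap, a small bookkeeping step you should add.
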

\begin{proof}
We can assume that $F_1 \leq F_2 \leq \cdots \leq F_m$. If not, we can sort the values of $F_i$ in increasing order and the properties in the lemma will continue to hold. To see that, first  notice that if $F_i>F_{i+1}$ then we have $F(x_i)$ and $F(x_{i+1})$ are both in range $[F_{i}-\eps,F_{i+1}+\eps]$. This happens because  $F(x_i)\leq F(x_{i+1})$, and $F(x_i)\in [F_i-\eps,F_i+\eps]$, $F(x_{i+1})\in [F_{i+1}-\eps,F_{i+1}+\eps]$.
Then $F(x_i)\in[F_{i}-\eps,F_{i+1}+\eps]\subseteq [F_{i+1}-\eps,F_{i+1}+\eps]$, and $F(x_{i+1})\in[F_{i}-\eps,F_{i+1}+\eps]\subseteq [F_{i}-\eps,F_{i}+\eps]$. 
Therefore, if we switch $F_i$ and $F_{i+1}$, the properties in the lemma still hold.  
%We also assume that $F_1=0$ and $F_{m}=1$, as $F(x_1)=F(0)=0$, and $F(x_m)=F(1)=1$. \rpl{Do we need to assume this last sentence?}

Consider the following distribution $\hat{F}$: For each $x_i$, $\hat{F}(x_i)=F_i$ and $$\hat{F}(x)=F_i+\frac{F_{i+1}-F_i}{x_{i+1}-x_i}(x-x_i) \text{ for } x\in(x_i,x_{i+1}).$$ In other words, $\hat{F}$ is defined by the estimation of $F$ at some points, and filling the curve in between by a linear function.

Now we show that $\hat{F}$ is within $O(\eps)$ \levy-distance from $F$ on $[x_1,x_m]$. We prove this by showing that this is true on every interval $[x_i,x_{i+1}]$.

If $x_{i+1}-x_{i}\leq\eps$, then $F$ and $\hat{F}$ are within $\eps$ \levy-distance on $[x_i,x_{i+1}]$ since for any $x \in [x_i,x_{i+1}]$
\[F(x-\eps)-\eps\leq F(x_i)-\eps\leq F_i\leq \hat{F}(x)\leq F_{i+1}\leq F(x_{i+1})+\eps\leq F(x+\eps)+\eps.\]

If $x_{i+1}-x_i>\eps$, we show a stronger statement that $\hat{F}$ is within $O(\eps)$ Kolmogorov distance from $F$ on $[x_i,x_{i+1}]$, i.e., for any $x\in[x_i,x_{i+1}]$, $|\hat{F}(x)-F(x)| \leq O(\eps)$. By the lemma statement, this is true for every $x_i$. Now we show that this is also true for every $x\in(x_i,x_{i+1})$.
Let $\hat{f}_i=\frac{F_{i+1}-F_i}{x_{i+1}-x_i}$ be the PDF of $\hat{F}$ on $(x_i,x_{i+1})$. Then
\begin{eqnarray}
|\hat{F}(x)-F(x)|&=&\left|(F_i+(x-x_i)\hat{f}_i)-(F(x_i)+\int_{x_{i}}^{x}f(t)dt)\right| \nonumber\\
&\leq&|F_i-F(x_i)|+\left|(x-x_i)\hat{f}_i-\int_{x_{i}}^{x}f(t)dt\right| \nonumber\\
&\leq&\eps+\int_{x_{i}}^{x}|\hat{f}_i-f(t)|dt\leq\eps+\int_{x_{i}}^{x_{i+1}}|\hat{f}_i-f(t)|dt.\label{eqn:dist-f-hatf}
\end{eqnarray}
Here the first line is by the definition of the CDF and PDF; the second line is by $|a+b|\leq |a|+|b|$ for any $a,b$; the third line is by the definition of $F_i$. Now we bound $\hat{f}_i$. As $f(x)\leq \overline{f}_i$ for every $x\in (x_i,x_{i+1})$, we have
\begin{equation*}
(x_{i+1}-x_i)\hat{f}_i-2\eps=(F_{i+1}-\eps) - (F_i+\eps)\leq F(x_{i+1})-F(x_i)=\int_{x_i}^{x_{i+1}}f(t)dt\leq (x_{i+1}-x_i)\overline{f}_i.
\end{equation*}
The same way, as $f(x)\geq \underline{f}_i$ for every $x\in (x_i,x_{i+1})$, we have
\begin{equation*}
(x_{i+1}-x_i)\hat{f}_i+2\eps=(F_{i+1}+\eps) - (F_i-\eps)\geq F(x_{i+1})-F(x_i)=\int_{x_i}^{x_{i+1}}f(t)dt\geq (x_{i+1}-x_i)\underline{f}_i.
\end{equation*}
Combine the two inequalities above, we can bound $\hat{f}_i$ by 
\[\underline{f}_i-\frac{2\eps}{x_{i+1}-x_i}\leq \hat{f}_i\leq\overline{f}_i+\frac{2\eps}{x_{i+1}-x_i}.\]
Thus for any $t\in (x_i,x_{i+1})$, as $f(t)\in [\underline{f}_i,\overline{f}_i]$, we have $|\hat{f}_i-f(t)|\leq \overline{f}_i-\underline{f}_i+\frac{2\eps}{x_{i+1}-x_i}$. Apply this bound to \eqref{eqn:dist-f-hatf}, we get
\begin{eqnarray*}
|\hat{F}(x)-F(x)|&\leq&\eps+\int_{x_{i}}^{x_{i+1}}|\hat{f}_i-f(t)|dt\\
&\leq&\eps+(x_{i+1}-x_i)\left(\overline{f}_i-\underline{f}_i+\frac{2\eps}{x_{i+1}-x_i}\right)\\
&=&3\eps+(x_{i+1}-x_i)(\overline{f}_i-\underline{f}_i)\leq 4\eps.
\end{eqnarray*}
\end{proof}

\subsection{Learning a smooth distribution with convex CDF}\label{sec:cvx}

In this section, we assume we have a smooth distribution $F$ and an oracle that allows us to sample the value of its CDF $F(\cdot)$ and its PDF $f(\cdot)$. We start with a special case of convex CDF ($f'(v) > 0$) to describe a simplified version of our algorithm. 

For a high-level intuition, it is useful to perform the following `heuristic' calculation based on Lemma \ref{lem:levy-learn}: if $\Delta x$ is the distance between query points and $\Delta f$ is the variation of the PDF between those points, our goal is to obtain $\Delta x \cdot \Delta f \leq \epsilon$. Approximating $\Delta f \approx f'(x) \cdot \Delta x$ we obtain $\Delta x \leq \sqrt{\epsilon / f'(x)}$. Hence, we will try to sample points at a rate $\sqrt{f'(x) / \epsilon}$.

We make this formal in the following proof and bound the number of queries we need to guarantee this sample rate.
%\rpl{Re-wrote the last two paragraphs. Yifeng, can you check if what i wrote here is correct?}
Note that our proof is algorithmic and will be easily converted later to a pricing query complexity bound.

 \begin{theorem}
 \label{thm:convex}
 Let $F$ be a smooth distribution with a convex PDF. Then
$\tO(\eps^{-0.5})$ oracle queries to $F$ and $f$ are sufficient to learn the distribution within $\eps$  \levy-distance.
\end{theorem}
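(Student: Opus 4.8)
### Proof proposal for Theorem \ref{thm:convex}

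My plan is to realize the heuristic sampling rate $\sqrt{f'(x)/\epsilon}$ as an explicit greedy procedure and then apply Lemma \ref{lem:levy-learn}. Starting from $x_1 = 0$, I would construct the query points inductively: given $x_i$, I first query $F(x_i)$ and $f(x_i)$ (so the first bullet of Lemma \ref{lem:levy-learn} is immediate with $F_i = F(x_i)$, no rounding needed since we have exact oracle access), and then choose $x_{i+1}$ to be the largest point such that the PDF does not vary by more than $\epsilon/(x_{i+1}-x_i)$ on $[x_i, x_{i+1}]$. Concretely, by convexity $f$ is nondecreasing on $[x_i,x_{i+1}]$, so on that interval $\underline f_i = f(x_i)$ and $\overline f_i = f(x_{i+1})$, and the condition $(x_{i+1}-x_i)(\overline f_i - \underline f_i) \le \epsilon$ becomes $(x_{i+1}-x_i)(f(x_{i+1}) - f(x_i)) \le \epsilon$. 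I would define $x_{i+1}$ as the supremum of $y$ for which $(y - x_i)(f(y) - f(x_i)) \le \epsilon$ (and cap it so that $x_{i+1} - x_i$ never exceeds, say, $1$, which is harmless since the support is $[0,1]$); this supremum can be found by binary search on $f$ up to the precision we need, contributing only polylog factors. When we reach a point with $x_{i+1} \ge 1$ we stop, having covered $[0,1]$, and invoke Lemma \ref{lem:levy-learn} to get a CDF $\hat F$ within $O(\epsilon)$ \levy-distance; rescaling $\epsilon$ by a constant gives the claimed bound.

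The crux is bounding the number of intervals by $\tilde O(\epsilon^{-0.5})$. The key observation is that at each step the stopping rule is tight: either $x_{i+1} = 1$ (terminal) or $(x_{i+1}-x_i)(f(x_{i+1})-f(x_i)) = \epsilon$, so writing $\Delta_i = x_{i+1}-x_i$ and $\delta_i = f(x_{i+1}) - f(x_i)$ we have $\Delta_i \delta_i = \epsilon$ for every non-terminal interval, hence $\Delta_i = \epsilon/\delta_i$ and $\delta_i = \epsilon/\Delta_i$. Now $\sum_i \Delta_i \le 1$ (the $x_i$ partition a sub-interval of $[0,1]$) and $\sum_i \delta_i = f(x_m) - f(x_1) \le f(1)$. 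The subtlety is that $f(1)$ is not bounded a priori — a convex PDF on $[0,1]$ can be very peaked near $1$. So I would first argue we can truncate: since $\int_0^1 f = 1$ and $f$ is nondecreasing, the "tail mass" above any threshold is controlled, and more usefully the \emph{last} interval(s) where $f$ is large contribute little to the \levy-distance and can be handled by the $x_{i+1}-x_i \le \epsilon$ branch of Lemma \ref{lem:levy-learn}. Precisely: once $1 - F(x_i) \le \epsilon$, the CDF is within $\epsilon$ of $1$, so we can terminate the refined sampling and declare $\hat F = 1$ beyond that point (or take $\epsilon$-spaced points there), at the cost of $O(1/\epsilon)$... which is too many. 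So instead I would bound $f(x)$ on the region where $1 - F(x) > \epsilon$: there, monotonicity of $f$ and $\int_x^1 f(t)\,dt = 1 - F(x) \le 1$ give no bound, but $\int_0^x f \le 1$ with $f$ nondecreasing gives $f(x) \cdot x \le$ ... again not quite. The honest resolution, which I expect the paper uses, is: split $[0,1]$ at the point $x^\star$ where $f(x^\star) = 1/\epsilon^{0.5}$ (if it exists). On $[0,x^\star]$, $f \le \epsilon^{-0.5}$, so the number of intervals there is at most $\sum \delta_i / \epsilon \cdot$ ... — more carefully, by Cauchy–Schwarz, $\#\{\text{intervals}\} = \sum_i 1 = \sum_i \sqrt{\Delta_i \delta_i}/\sqrt{\epsilon} \le \epsilon^{-1/2}\sqrt{(\sum_i \Delta_i)(\sum_i \delta_i)} \le \epsilon^{-1/2}\sqrt{1 \cdot \epsilon^{-1/2}} = \epsilon^{-3/4}$ — hmm, that overshoots, so $f$ must be bounded by $O(1)$, not $\epsilon^{-1/2}$, on the relevant region.

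Let me restate the bound cleanly, since this is the heart of the matter. On any region where $f \le M$, Cauchy–Schwarz gives $\#\{\text{intervals}\} \le \epsilon^{-1/2}\sqrt{(\sum\Delta_i)(\sum\delta_i)} \le \epsilon^{-1/2}\sqrt{1\cdot M}= \sqrt{M/\epsilon}$. So I need $f = O(1)$ on the region that matters, and the region where $f$ is large must be coverable cheaply. The clean way: for $j = 0, 1, 2, \dots, O(\log(1/\epsilon))$, let $I_j = \{x : f(x) \in [2^j, 2^{j+1})\}$; on $I_j$ the $\delta_i$ summing over intervals inside $I_j$ total at most $2^{j+1}$, and each interval has $\Delta_i = \epsilon/\delta_i$; but also $|I_j| \le 1$ always and, crucially, $|I_j| \le (1 - F(\inf I_j)) / 2^j \le 1/2^j$ because $f \ge 2^j$ throughout a region of mass $\int_{I_j} f \ge 2^j |I_j|$ which is at most $1$. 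So within $I_j$, Cauchy–Schwarz gives at most $\epsilon^{-1/2}\sqrt{|I_j|\cdot 2^{j+1}} \le \epsilon^{-1/2}\sqrt{2^{-j}\cdot 2^{j+1}} = \sqrt{2/\epsilon} = O(\epsilon^{-1/2})$ intervals, plus possibly $O(1)$ boundary intervals where Lemma \ref{lem:levy-learn} forces the $\Delta_i \le \epsilon$ branch. Summing over the $O(\log(1/\epsilon))$ values of $j$ gives $\tilde O(\epsilon^{-1/2})$ total intervals, hence $\tilde O(\epsilon^{-1/2})$ queries. The main obstacle, and where I would spend the most care, is exactly this dyadic decomposition of the PDF range and verifying the mass bound $|I_j| \le 2^{-j}$ together with the handling of the (at most one) interval per dyadic band that straddles a boundary and might need the short-interval branch of the lemma.
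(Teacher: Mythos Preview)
Your greedy-plus-Cauchy--Schwarz argument is a legitimate alternative to the paper's proof, and the underlying combinatorics is the same. The paper is more explicitly algorithmic: it first locates the dyadic bands $I_k = \{v : 2^{k-1} \le f(v) < 2^k\}$ by binary search (Step~1), then lays a uniform grid of spacing $2^{-k}\sqrt{\epsilon}$ inside each $I_k$ (Step~2), and finally subdivides any grid cell whose PDF variation exceeds $2^k\sqrt{\epsilon}$ (Step~3), counting each step directly. Your version packages the two-scale refinement into a single greedy rule $\Delta_i\delta_i=\epsilon$ and replaces the explicit count by $\sum_i 1 = \epsilon^{-1/2}\sum_i\sqrt{\Delta_i\delta_i} \le \epsilon^{-1/2}\sqrt{(\sum\Delta_i)(\sum\delta_i)}$. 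What the paper's route buys is a clean, non-adaptive query set; what yours buys is a shorter counting argument.

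There is one genuine gap. You assert ``summing over the $O(\log(1/\epsilon))$ values of $j$'', but nothing in your greedy procedure enforces this: a smooth nondecreasing PDF on $[0,1]$ may have $f(1)$ arbitrarily large, so the number of nonempty dyadic bands is $\Theta(\log f_{\max})$, and your per-band bound of $O(\epsilon^{-1/2})$ sums to $O(\epsilon^{-1/2}\log f_{\max})$, which is not $\tilde{O}(\epsilon^{-1/2})$. The fix is exactly the truncation you considered and then discarded: stop the greedy once $f(x_i) > 1/\epsilon$. The region $\{f > 1/\epsilon\}$ has length at most $\epsilon$ (since $\int f \le 1$ and $f$ is nondecreasing), so it is covered by a \emph{single} interval via the first branch of Lemma~\ref{lem:levy-learn}; your worry that this costs $O(1/\epsilon)$ points was mistaken. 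Two smaller loose ends: your bands start at $j=0$, so you should add that all of $\{f<1\}$ is one extra band with $\sum\Delta_i \le 1$, $\sum\delta_i \le 1$, giving another $O(\epsilon^{-1/2})$; and ``binary search for $x_{i+1}$ contributing only polylog factors'' is not automatic, since $(y-x_i)(f(y)-f(x_i))$ can cross the threshold $\epsilon$ on an arbitrarily short $y$-interval---but running $O(\log(1/\epsilon))$ bisection steps and outputting \emph{both} bracket endpoints $a<b$ (so $b-a\le\epsilon$ is handled by the short-interval branch, and you resume the greedy from $b$, which already overshoots the true greedy point) repairs this at a constant-factor cost.
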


\begin{proof} We will show it is possible to find a set of $O(1/\epsilon^{0.5})$ points satisfying the conditions of Lemma \ref{lem:levy-learn}. We start by noticing that for a convex CDF, the PDF $f$ is monotone.

% We start by observing that we only need to learn the CDF $F$ in the interval $[\epsilon, 1-\epsilon]$ since defining $\hat{F}(v) = 0$ for $v < \epsilon$ and $\hat{F}(v)=1$ for $v>1-\epsilon$ is always valid for the \levy-distance approximation. Our goal then will be to query a sequence of points $0=x_1 < x_2 < \hdots < x_m=1-\epsilon$ satisfying the conditions of Lemma \ref{lem:levy-learn}.

% For a convex CDF, we start by noticing that $f$ is monotone.

\paragraph{Step 1} We set $K=\log(1/\epsilon)$ and use binary search to identify points $x_0, x_1, \hdots, x_{2K+3}$ such that for each $k=0,\hdots, K+1$ we have $\abs{x_{2k+1} - x_{2k+2}} \leq \epsilon$ and for the interval $I_k = [x_{2k}, x_{2k+1}]$ it holds that:
\begin{itemize}
\item $f(v) \leq 1$ for $v \in I_0$
\item $2^{k-1} \leq f(v) \leq 2^k$ for $v \in I_k$, $k=1,...,K$
\item $f(v) \geq 2^K$ for $v \in I_{K+1}$
\end{itemize}
Identifying each point takes $O(\log(1/\epsilon)) = \tilde{O}(1)$ queries to $f(\cdot)$. Given that we have $\tilde{O}(1)$ such points, we used a total of $\tilde{O}(1)$ so far. We observe that the last interval $I_{K+1}$ already satisfies the conditions of Lemma \ref{lem:levy-learn} since its length $|I_{K+1}|$ is at most $\epsilon$:
% Let $K=\log\frac{1}{\eps}$ and partition the interval $[0,1]$ to $2+K$ intervals $I_0,I_1,\cdots,I_{1+K}$, such that interval $I_k$ contains $v$ such that $2^{k-1}<f(v)\leq 2^k$, with two exceptions: $I_0$ contains all $v$ such that $f(v)\leq 1$, and $I_{K+1}$ contains all $v$ such that $f(v)>2^K$. The endpoints of each interval are unknown, but using binary searches with oracle queries to $f$ we can find all endpoints within $\eps$ accuracy in $O(\log 1/\eps)=\tO(1)$ queries for each interval, and thus $\tO(1)$ queries for all intervals since there are only $K+2=\tO(1)$ intervals.\footnote{To be more accurate, as we can only learn the endpoints with $\eps$ accuracy via binary search, we actually partition $[0,1]$ to intervals $I_0$, $I'_0$, $I_1$, $I'_1$, $\cdots$, $I_{K}$, $I'_{K}$, $I_{K+1}$ where for every $k\in [K]$ and $v\in I_k$, $2^{k-1}<f(v)\leq 2^k$, except when $k=0$, $f(v)\leq 1$, and when $k=K+1$, $f(v)>2^K$; each interval $I'_{k}$ has length at most $\eps$. Later when we apply Lemma~\ref{lem:levy-learn} we are able to deal with each $I'_k$ since such an interval has length $|I'_k|\leq \eps$.} We do not need to learn $I_{K+1}$, as the length of the interval $|I_{K+1}|$ is at most $\eps$:
\[1\geq \int_{v\in I_{K+1}}f(v)dv\geq 2^K |I_{K+1}|=\frac{1}{\eps}|I_{K+1}|.\]

\paragraph{Step 2} For each interval $I_k$ with $ k=0,..., K$, partition $I_k$ to intervals $I_{k,1},I_{k,2},\cdots$ of length $\frac{1}{2^k}\sqrt{\eps}$. The number of endpoints added is at most $\frac{|I_k|}{\sqrt{\eps}/2^k}=2^k\eps^{-1/2}|I_k|$ for each interval $I_k$, and sums up to at most
\begin{eqnarray*}
\sum_{k=0}^{K}2^k\eps^{-1/2}|I_k|&=&\eps^{-1/2}|I_0|+\sum_{k=1}^{K}2^k\eps^{-1/2}|I_k|\\
&\leq&\eps^{-1/2}+\sum_{k=1}^{K}\eps^{-1/2}\int_{v\in I_k}2f(v)dv\\
&\leq&\eps^{-1/2}+\eps^{-1/2}\int_{v\in [0,1]}2f(v)dv\\
&=&\eps^{-1/2}+2\eps^{-1/2}=3\eps^{-1/2}.
\end{eqnarray*}
Thus if we query $f$ and $F$ for all the endpoints of the intervals from the first two steps, the total number of queries needed is at most $O(\eps^{-1/2})$.

\paragraph{Step 3} For each interval $I_{k,j}=[v_{k,j},v_{k,j}+2^{-k}\sqrt{\eps}]\subseteq I_k$ partitioned in Step 2, define $\underline{f}_{k,j} = f(v_{k,j})$ and $\overline{f}_{k,j} =f(v_{k,j}+2^{-k}\sqrt{\eps}) $. If $\overline{f}_{k,j} - \underline{f}_{k,j} =t_{k,j}2^k\sqrt{\eps}$, then partition the interval to $\tilde{t}_{k,j}:=\max(1,\lceil{t_{k,j}}\rceil)$ intervals of length $\frac{1}{\tilde{t}_{k,j}2^k}\sqrt{\eps}$, and query $F$ for the newly added endpoints.
As any neighboring points $\overline{v}, \underline{v}$ have distance $\overline{v} - \underline{v} \leq \frac{1}{\tilde{t}_{k,j}2^k}\sqrt{\eps}$. Hence this interval satisfies the second condition in Lemma \ref{lem:levy-learn} :
$$ (\overline{v} -  \underline{v}) (\overline{f} -  \underline{f}) \leq \frac{1}{\tilde{t}_{k,j}2^k}\sqrt{\eps}\cdot t_{k,j}2^k\sqrt{\eps}\leq\eps$$
which allows us to learn the distribution up to $\epsilon$ \levy-distance.

Finally, we only need to bound the number of queries needed in this step. At most $\lfloor t_{k,j} \rfloor$ are needed in $I_{k,j}$ which has length $2^{-k} \sqrt{\epsilon}$. Now if $I_k = [x_{2k}, x_{2k+1}]$ we have:
\begin{eqnarray*}
2^{k-1} \geq f(x_{2k+1})-f(x_{2k})=\sum_{j}(f(v_{k,j+1})-f(v_{k,j}))=\sum_j t_{k,j}2^k\sqrt{\eps},
\end{eqnarray*}
% In this step, at most $\tilde{t}_{k,j}-1\leq t_{k,j}$ queries are needed for each interval of length $2^{-k}\sqrt{\eps}$.  Notice that if $I_k=[\underline{v_k},\overline{v_k}]$, then 
% \begin{eqnarray*}
% 2^{k-1}=f(\overline{v_k})-f(\underline{v_k})=\sum_{j}(f(v_{k,j+1})-f(v_{k,j}))=\sum_j t_{k,j}2^k\sqrt{\eps},
% \end{eqnarray*}
which implies $\sum_{j}t_{k,j}\leq 1/(2\sqrt{\eps})$. Thus in this step, for all $k$, the total number of queries on $F$ is $\tO(\eps^{-1/2})$.
\end{proof}

%\rpl{I re-wrote parts of the above lemma. Can you check that I didn't introduce any bugs.}

\subsection{Learning a smooth regular distribution}\label{sec:reg}

Unlike convex functions, the PDF of a regular distribution can increase or decrease, so we can't easily partition the interval as in the previous lemma. However, regularity imposes a rate at which the PDF can decrease (equation \eqref{eq:smooth_reg}). Our first lemma shows that the PDF of a regular distribution can't decrease by a factor of $2$ too many times. We make this intuition formal in the next lemma:

\begin{lemma}\label{lem:regpdfdec}
Let $F$ be a smooth regular distribution with PDF $f$. Fix an integer $\ell$. Now assume $S$ is a collection of disjoint intervals of the form $[v_i, v'_i)$ such that:
\begin{itemize}
\item for each interval $[v_i, v'_i)$ we have $2^{-\ell}\leq 1-F(v'_i) \leq 1-F(v_i)<2^{-\ell+1}$
\item For each $v \in [v_i, v'_i)$ we have
$ f(v'_i) \leq f(v) \leq f(v_i)  $ and $f(v'_i) \leq \frac{1}{2}f(v_i)$
\end{itemize}
Then the number of intervals is bounded by a constant $\abs{S} \leq 32$.
\end{lemma}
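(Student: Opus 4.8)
The plan is to bound $|S|$ by showing that each interval $[v_i, v_i')$ in $S$ must ``consume'' a definite fraction of some conserved quantity, so only constantly many intervals can fit. The natural quantity to track is the integral $\int_{v_i}^{v_i'} f(v)\,dv = F(v_i') - F(v_i)$. Since all intervals lie in the region where $1-F \in [2^{-\ell}, 2^{-\ell+1})$, the total mass available across all of $S$ is at most $2^{-\ell+1} - 2^{-\ell} = 2^{-\ell}$. So it suffices to show that each interval carries mass at least $c \cdot 2^{-\ell}$ for some absolute constant $c \geq 1/32$; then $|S| \le 1/c \le 32$. Equivalently, I want to lower bound $F(v_i') - F(v_i) = \int_{v_i}^{v_i'} f(v)\,dv$ in terms of $2^{-\ell}$, using the two hypotheses: that $f$ drops by at least a factor $2$ on the interval, and that regularity (equation~\eqref{eq:smooth_reg}) controls how fast $f$ can decay relative to $1-F$.

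The key step is translating the regularity inequality $f'(v)(1-F(v)) \geq -2 f^2(v)$ into a bound relating the drop in $f$ to the length (or mass) of the interval. Dividing by $(1-F(v))$ and writing things in terms of logarithmic derivatives: $\frac{d}{dv}\log f(v) = f'(v)/f(v) \geq -2 f(v)/(1-F(v))$. Now $f(v)/(1-F(v))$ is the hazard rate, and $\frac{d}{dv}\log(1-F(v)) = -f(v)/(1-F(v))$. So regularity says $\frac{d}{dv}\log f(v) \geq 2\,\frac{d}{dv}\log(1-F(v))$, i.e.\ $\frac{d}{dv}\log\!\big(f(v)(1-F(v))^{-2}\big) \geq 0$, so $f(v)(1-F(v))^{-2}$ is nondecreasing (this is exactly the ``revenue curve concave'' fact re-derived). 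Integrating this monotonicity from $v_i$ to $v_i'$: $f(v_i')(1-F(v_i'))^{-2} \geq f(v_i)(1-F(v_i))^{-2}$. Combined with $f(v_i') \leq \tfrac12 f(v_i)$, this forces $(1-F(v_i'))^{-2} \geq \tfrac12 (1-F(v_i))^{-2} \cdot \frac{f(v_i)}{f(v_i')} \cdot \frac{f(v_i')}{f(v_i)}$... let me instead get a handle directly: from $f(v_i') (1-F(v_i'))^{-2} \ge f(v_i)(1-F(v_i))^{-2}$ and $f(v_i') \le \tfrac12 f(v_i)$ we get $(1-F(v_i'))^{2} \le \tfrac12 (1-F(v_i))^{2}$, hence $1-F(v_i') \le \tfrac{1}{\sqrt2}(1-F(v_i))$. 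Thus $F(v_i') - F(v_i) = (1-F(v_i)) - (1-F(v_i')) \ge (1 - \tfrac{1}{\sqrt2})(1-F(v_i)) \ge (1-\tfrac{1}{\sqrt2}) \cdot 2^{-\ell}$. Since $1 - 1/\sqrt{2} > 1/4 > 1/32$, this gives the required per-interval mass lower bound and hence $|S| \le 1/(1-1/\sqrt2) < 4 \le 32$, which is even stronger than the claimed bound.

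The main obstacle is being careful with the smoothness hypotheses needed to integrate the differential inequality: I am using that $f$ is $C^1$ and, crucially, that $f$ does not vanish on the interval (so that $\log f$ is well-defined and the division by $f$ is legitimate). If $f(v) = 0$ somewhere in $[v_i, v_i')$ then $1-F$ is constant there and the hypothesis $2^{-\ell} \le 1-F(v_i') \le 1-F(v_i)$ together with $f(v_i') \le \tfrac12 f(v_i)$ still needs handling — but $f \equiv 0$ on a subinterval is compatible with $f(v_i')$ small only at the cost of $f$ being nonzero elsewhere, and one can restrict attention to the sub-interval where $f > 0$; alternatively one argues by the already-established monotonicity of $f(v)(1-F(v))^{-2}$, which is the integrated (hence continuous, limit-robust) form and sidesteps pointwise division. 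A second, minor point: the intervals in $S$ are half-open and disjoint but need not be contiguous or ordered, so I should note that disjointness alone suffices for the masses $F(v_i')-F(v_i)$ to sum to at most the total variation of $F$ over the union, which is at most $2^{-\ell}$. Modulo these care points, the argument is short. (The slack between the $4$ I obtain and the $32$ in the statement presumably absorbs a weaker constant the authors use elsewhere, so matching $32$ is not a concern.)
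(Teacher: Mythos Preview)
Your proof is correct and takes a genuinely different route from the paper. Both arguments share the same high-level plan: show that each interval $[v_i,v_i')$ contributes at least a fixed fraction of the total mass $2^{-\ell}$ available in the quantile band, and then count. The difference is in how that per-interval mass lower bound is obtained.

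The paper works pointwise: it places $f(v_i)$ in a dyadic bin $[2^{k-1},2^k)$, uses the raw inequality $f'(v)\ge -2f(v)^2/(1-F(v))\ge -2^{2k+\ell+1}$ to lower-bound the interval length $v_i'-v_i\ge 2^{-k-\ell-3}$, and then multiplies by $f(v_i')\ge 2^{k-2}$ to get $F(v_i')-F(v_i)\ge 2^{-\ell-5}$, yielding $|S|\le 32$. You instead integrate the regularity inequality once and for all into the monotonicity of $g(v)=f(v)/(1-F(v))^2$; combining $g(v_i')\ge g(v_i)$ with $f(v_i')\le \tfrac12 f(v_i)$ gives $(1-F(v_i'))\le (1-F(v_i))/\sqrt{2}$ and hence $F(v_i')-F(v_i)\ge (1-1/\sqrt{2})\cdot 2^{-\ell}$, so $|S|\le \sqrt{2}/(\sqrt{2}-1)<4$.

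Your argument is shorter, avoids the dyadic case split on $k$, and gives a sharper constant. The paper's more explicit derivative bound, on the other hand, is the form that gets reused verbatim in Step~3 of Theorem~\ref{thm:regular}, so their proof style is chosen for consistency with the surrounding algorithmic analysis rather than for economy here. Your remark that the monotonicity of $f/(1-F)^2$ is the ``integrated, limit-robust'' form is exactly right: it holds wherever $1-F>0$ regardless of whether $f$ vanishes, since the derivative $(f'(1-F)+2f^2)/(1-F)^3$ is nonnegative by regularity without dividing by $f$. The residual edge case $f(v_i)=0$ (which would make both $f(v_i')\le\tfrac12 f(v_i)$ and the monotonicity inequality vacuous) is a lacuna in the lemma statement itself rather than in either proof; it does not arise in the paper's applications.
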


% For a general regular distribution, the PDF function may not always be monotone non-decreasing. However, as the derivative of the PDF function $\phi'(v)$ is still lower bounded by $-\frac{2f^2(v)}{1-F(v)}$, the PDF function cannot decrease a lot for many times. We use the following lemma to characterize the intuition.

% \begin{lemma}\label{lem:regpdfdec}
% There exists at most $32=O(1)$ intervals $[v_i,v'_i)\subseteq[0,1]$ that satisfy the following properties:
% \begin{enumerate}
% \item The intervals are disjoint: $[v_i,v'_i)\cap [v_j,v'_j)=\emptyset$ for any $i\neq j$.
% \item The intervals roughly have the same quantile: there exists $\ell\in\mathbb{N}$ such that for every $[v_i,v'_i)$ $\eps\leq 2^{-\ell_i}\leq Q(v_i),Q(v'_i)<2^{-\ell_i+1}$.
% \item The PDF function in each interval $[v_i,v'_i)$ is bounded between $f(v'_i)$ and $f(v_i$), and decreases a lot between the two endpoints: $f(v'_i)\leq\frac{1}{2}f(v_i)$.
% \end{enumerate}
% \end{lemma}

% The interpretation of Lemma~\ref{lem:regpdfdec} is that the PDF function of any regular distribution is mostly increasing, except for only $\tO(1)$ times $f$ can drop by a factor of $\frac{1}{2}$.\footnote{To be more rigorous, only when $F(v)\leq 1-\eps$. However, as we only care about $\eps$-accurate CDF in Levy distance, the distribution of values with $Q(v)<\eps$ can be ignored.} 

\begin{proof}%[Proof of Lemma~\ref{lem:regpdfdec}]
% Fix $\ell\in\mathbb{N}$ such that $\ell\leq \log_2\frac{1}{\eps}$. 
We will count the maximum number of intervals in $S$. We can assume w.l.o.g. that $f(v'_i)=\frac{1}{2}f(v_i)$ as we can always take a subinterval where the last condition holds with equality.

% We count the maximum number of intervals $[v_i,v'_i]$ satisfying the properties in the statement of the lemma.
% % such that $2^{-\ell}\leq Q(v_i),Q(v'_i)<2^{-\ell+1}$. 
% Without loss of generality, we assume $f(v'_i)=\frac{1}{2}f(v_i)$, as we can study a subinterval of $[v_i,v'_i)$ if $f(v'_i)<\frac{1}{2}f(v_i)$.

Consider an interval $[v_i,v'_i)$ in the collection $S$ with $f(v_i)\in[2^{k-1},2^k)$. As for any $v\in[v_i,v'_i)$, 
\begin{equation}\label{eqn:df_lb}
f'(v)\geq-\frac{2f^2(v)}{1-F(v)}\geq -\frac{2f^2(v_i)}{1-F(v'_i)}\geq -\frac{2\cdot (2^k)^2}{2^{-\ell}}=-2^{2k+\ell+1}.
\end{equation}
Here the first inequality is by the definition of regular distributions; the second and the third inequalities are by $f$ is bounded between $f(v'_i)$, thus $f(v)\leq f(v_i)\leq 2^k$, and $1-F(v)\geq 1-F(v_i)\geq 2^{-\ell}$. Therefore 
\begin{equation}\label{eqn:interval-length}
-\frac{1}{2}\cdot 2^{k-1}\geq f(v'_i)-f(v_i)=\int_{v_i}^{v'_i}f'(v)dv\geq -2^{2k+\ell+1}(v'_i-v_i).
\end{equation}
Here the first inequality is by $f(v'_i)=\frac{1}{2}f(v_i)$ and $f(v_i)\geq 2^{k-1}$; the second inequality is by inequality \eqref{eqn:df_lb}. By inequality \eqref{eqn:interval-length} we can lower bound the length of the interval by $v'_i-v_i\geq\frac{1}{2}2^{k-1-(2k+\ell+1)}=2^{-k-\ell-3}$. Thus the CDF change in the interval can be lower bounded as follows:
\begin{equation}\label{eqn:interval-cdf-diff}
F(v'_i)-F(v_i)=\int_{v_i}^{v'_i}f(v)dv\geq f(v'_i)(v'_i-v_i)\geq 2^{k-2}\cdot2^{-k-\ell-3}=2^{-\ell-5}.
\end{equation}
Here the first inequality is by $f(v)\geq f(v_i)$ in the interval with a non-increasing PDF function, and the second inequality is by $f(v'_i)=\frac{1}{2}f(v_i)\geq 2^{k-2}$, and $v'_i-v_i\geq2^{-k-\ell-3}$. Since the intervals $[v_i,v'_i)$ are disjoint, the sum of $F(v'_i)-F(v_i)$ is at most $2^{-\ell+1}-2^{-\ell}=2^{-\ell}$ Therefore, there are at most $2^5=O(1)$ such intervals. 
% Since there are at most $\log\frac{1}{\eps}=\tO(1)$ distinct values of integer $\ell$, the cardinality of any collection of intervals satisfying properties in the lemma is $\tO(1)$.
\end{proof}

% \rpl{I edited the previous lemma, both the statement and the proof. I think I didn't change anythign significant, but it is worth checking. I am replacing $Q(x)$ with $1-F(x)$ since I think it is not worth adding more notation and $1-F(x)$ is short enough.}

Lemma~\ref{lem:regpdfdec} shows that in an interval with the quantile bounded by a factor of 2, the PDF can decrease by a factor of $\frac{1}{2}$ for at most 32 times. A useful corollary is that in this case the PDF can decrease by a constant factor of at most $1-2^{-33}$ between two points.

\begin{corollary}\label{cor:pdfdec-32}
Let $F$ be a smooth regular distribution and $\ell$ an integer. 
For any two points $x<x'$ with $2^{-\ell}\leq 1-F(x')<1-F(x)\leq 2^{-\ell+1}$ it holds that $f(x')>2^{-33}f(x)$.
\end{corollary}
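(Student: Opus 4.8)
The plan is to argue by contradiction and reduce to Lemma~\ref{lem:regpdfdec}. Suppose $f(x')\le 2^{-33}f(x)$; I will exhibit $33$ pairwise disjoint subintervals of $[x,x')$ satisfying the hypotheses of Lemma~\ref{lem:regpdfdec} for this $\ell$, contradicting the bound $\abs{S}\le 32$. One should first dispose of the degenerate case $f(x)=0$: since the hypothesis $1-F(x')<1-F(x)$ forces $F(x')>F(x)$, the density $f$ is not identically $0$ on $[x,x']$; and regularity forbids $f$ from returning to $0$ while $1-F$ stays positive (otherwise $\phi(v)=v-\frac{1-F(v)}{f(v)}$ would tend to $-\infty$ at both ends of such a ``bump'' and fail to be non-decreasing), so $f(x')>0=2^{-33}f(x)$. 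Hence assume $f(x)>0$ from now on.

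To build the intervals, chain a greedy ``halving'' construction. Set $t_0=x$, so $f(t_0)=f(x)$. Given $t_{k-1}$ with $f(t_{k-1})=2^{-(k-1)}f(x)$, let $t_k:=\inf\{v>t_{k-1}:\ f(v)\le \tfrac12 f(t_{k-1})\}$. By continuity of $f$ we get $f(t_k)=\tfrac12 f(t_{k-1})=2^{-k}f(x)$ and $f(v)\ge \tfrac12 f(t_{k-1})$ for all $v\in[t_{k-1},t_k]$. Because $f(x')\le 2^{-33}f(x)$, the points $t_1<t_2<\cdots<t_{33}$ are all well defined and lie in $(x,x']$, with $t_k<x'$ for $k\le 32$ (if $t_k=x'$ then $f(x')=2^{-k}f(x)>2^{-33}f(x)$, impossible). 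Now, the lower bound $f\ge \tfrac12 f(t_{k-1})$ over $[t_{k-1},t_k]$ is exactly what Lemma~\ref{lem:regpdfdec} wants for the \emph{left} endpoint of the PDF window, but it gives no \emph{upper} bound, since $f$ may spike above $f(t_{k-1})$ inside the interval. The fix: let $\tau_k\in[t_{k-1},t_k]$ be a maximizer of $f$ on $[t_{k-1},t_k]$ and take the window $[\tau_k,t_k)$. Then for $v$ in it, $f(t_k)\le f(v)\le f(\tau_k)$ by construction, and $f(t_k)=\tfrac12 f(t_{k-1})\le \tfrac12 f(\tau_k)$ since $f(\tau_k)\ge f(t_{k-1})$; so both PDF conditions of Lemma~\ref{lem:regpdfdec} hold with $v_i=\tau_k$, $v_i'=t_k$. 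The intervals $[\tau_k,t_k)$, $k=1,\dots,33$, are pairwise disjoint because $\tau_{k+1}\ge t_k$.

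It remains to check the quantile condition and collect the contradiction. All the $\tau_k,t_k$ lie in $[x,x']$ and $1-F$ is non-increasing, so $2^{-\ell}\le 1-F(x')\le 1-F(t_k)\le 1-F(\tau_k)\le 1-F(x)\le 2^{-\ell+1}$; the only boundary subtlety is that $1-F(\tau_1)$ might equal $2^{-\ell+1}$ (when $\tau_1=x$ and $1-F(x)=2^{-\ell+1}$), but the proof of Lemma~\ref{lem:regpdfdec} uses only the weak inequality $1-F(v_i)\le 2^{-\ell+1}$ (through the total-mass bound $\sum_i (F(v_i')-F(v_i))\le 2^{-\ell+1}-2^{-\ell}$), so this does not affect its conclusion. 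Thus $\{[\tau_k,t_k):k=1,\dots,33\}$ is a valid collection $S$ of size $33>32$, contradicting Lemma~\ref{lem:regpdfdec}, and therefore $f(x')>2^{-33}f(x)$.

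The main obstacle is precisely the one handled above: a naive dyadic partition of $[x,x']$ at the points where $f$ halves controls $f$ only from below (it is essentially tracking a running minimum), whereas Lemma~\ref{lem:regpdfdec} needs the PDF trapped on \emph{both} sides within each interval. Re-anchoring each ``halving step'' at an interior maximizer is the trick that upgrades the one-sided control to the two-sided control the lemma requires; everything else (continuity of $f$, monotonicity of $1-F$, and the $33$-versus-$32$ bookkeeping) is routine.
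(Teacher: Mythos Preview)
Your proof is correct and follows essentially the same approach as the paper's: assume $f(x')\le 2^{-33}f(x)$, use continuity to extract $33$ disjoint ``halving'' intervals inside $[x,x')$, and contradict Lemma~\ref{lem:regpdfdec}. Your maximizer re-anchoring step (passing from $[t_{k-1},t_k)$ to $[\tau_k,t_k)$) is a genuine improvement in rigor over the paper's terse proof, which asserts the existence of such a partition without verifying the two-sided PDF bound $f(v_i')\le f(v)\le f(v_i)$ that Lemma~\ref{lem:regpdfdec} actually requires.
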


\begin{proof}
If $f(x)\geq 2^{33}f(x')$, as $f$ is continuous, $[x,x')$ can be partitioned to 33 intervals $[x,v_1)=[v_0,v_1)$, $[v_1,v_2)$, $[v_2,v_3)$, $\cdots$, $[v_{31},v_{32})$, $[v_{32},x')=[v_{32},v_{33})$ such that for any interval $[v_i,v_{i+1})$, $f(v_{i+1})\leq \frac{1}{2}f(v_i)$ which contradicts Lemma~\ref{lem:regpdfdec}.
\end{proof}

With Lemma~\ref{lem:regpdfdec} we can also obtain the following key component of the learning algorithm.

\begin{lemma}\label{lem:binary-search-query}
Let $F$ be a smooth regular distribution and $\ell$ an integer and $\epsilon > 0$ a parameter.
For any interval $[\ux,\ox]$ with $2^{-\ell}\leq 1-F(\ox)<1-F(\ux)\leq 2^{-\ell+1}$, using $\tO(1)$ queries to $f$ we can find a sequence of points $\ux=x_1\leq x_2\leq\cdots\leq x_m=\ox$ in the interval, such that: for any sub-interval $[x_i,x_{i+1})$, either
\begin{itemize}
\item $x_{i+1}-x_i\leq \eps$;
\item $f(x_{i+1})<2^{-33}$, which implies for any point $x\in[x_i,x_{i+1})$, $f(x)\leq 1$;
\item for any two points $x,x'\in [x_i,x_{i+1})$, $2^{-36}f(x)<f(x')<2^{36}f(x)$.
\end{itemize}
\end{lemma}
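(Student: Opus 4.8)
The plan is to cut $[\ux,\ox]$ into three regions — a large-density tail adjacent to $\ox$, a small-density head adjacent to $\ux$, and a ``moderate'' middle — and to show each is partitioned into only $\tilde O(1)$ pieces of the required form, all of them located with $\tilde O(1)$ queries. The backbone is Corollary~\ref{cor:pdfdec-32}: since every $v\in[\ux,\ox]$ satisfies $1-F(v)\in[2^{-\ell},2^{-\ell+1}]$, for all $v<v'$ in $[\ux,\ox]$ we have $f(v')>2^{-33}f(v)$, so $f$ can only decrease slowly; equivalently the running maximum $g(v):=\max_{\ux\le u\le v}f(u)$ is non-decreasing and sandwiches $f$ via $2^{-33}g(v)\le f(v)\le g(v)$, i.e.\ $f$ is ``essentially non-decreasing''. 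I would also use Lemma~\ref{lem:regpdfdec} in the form it really gives: $[\ux,\ox]$ is a union of at most $33$ consecutive sub-intervals on each of which $f$ never drops by a factor $2$.

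The large-density tail is free. If $f(v)\ge 1/\epsilon$ for some $v\in[\ux,\ox]$, then $f(u)>2^{-33}/\epsilon$ for all $u\in[v,\ox]$, so $\int_v^{\ox}f\ge 2^{-33}(\ox-v)/\epsilon$; since $\int_v^{\ox}f=(1-F(v))-(1-F(\ox))\le 2^{-\ell}\le 1$, we get $\ox-v\le 2^{33}\epsilon$. Hence all such $v$ lie in $[\,\ox-2^{33}\epsilon,\ox\,]$, which I cut into $\lceil 2^{33}\rceil$ intervals of length $\le\epsilon$ (the first alternative), requiring no query; on what remains $f<1/\epsilon$. For the small-density head I would binary-search for the first point at which $f$ reaches $2^{-33}$: a probe $c$ with $f(c)<2^{-66}$ certifies, via Corollary~\ref{cor:pdfdec-32}, that $f$ has not yet reached $2^{-33}$ on $[\ux,c]$, a probe with $f(c)\ge 2^{-33}$ certifies the crossing lies to the left, and a probe landing in the ``ambiguous'' band $[2^{-66},2^{-33})$ is itself a legal right endpoint for the second alternative. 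With $\tilde O(1)$ probes this produces a head interval $[\ux,z')$ with $f(z')<2^{-33}$ (second alternative), possibly plus one leftover interval of length $\le\epsilon$, such that $f\ge 2^{-66}$ on all of $[z',\ox]$.

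On the middle region $f$ therefore ranges in $[2^{-66},1/\epsilon)$, and since $g$ is monotone it passes through only $66+\log_2(1/\epsilon)=\tilde O(1)$ octaves there. On a piece where $g$ stays in a fixed octave $[2^{k},2^{k+1})$ we have $2^{k-33}\le f<2^{k+1}$, a window of ratio $2^{34}<2^{36}$ — the third alternative. Thus the target partition exists with $m=\tilde O(1)$ pieces; the ``at most $33$ no-big-drop sub-intervals'' structure is what bounds how many times the construction below can be forced to refine.

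The main obstacle is \emph{locating} the octave boundaries of $g$ using only $\tilde O(1)$ queries: $f$ is not monotone, so I cannot directly binary-search for ``the first time $f$ reaches level $\theta$''. The one-sided guarantee still helps (a probe far below $\theta$ certifies no crossing yet, a probe at or above $\theta$ certifies a crossing to the left), and to rule out a density spike hiding between consecutive probes I would additionally query $1-F$: a spike of height $H$ at $v_0\in[a,b]$ forces $F(b)-F(v_0)\ge 2^{-33}H(b-v_0)$, so a tall narrow spike can only hide behind concentrated probability mass, and such mass is localizable by binary search on the monotone function $1-F$ — any sub-interval found to carry concentrated mass is short enough to be absorbed into the first alternative, and Lemma~\ref{lem:regpdfdec} caps how often this happens. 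I expect the genuinely delicate part to be chasing the absolute constants ($2^{-33}$ from Corollary~\ref{cor:pdfdec-32}, the $32$ of Lemma~\ref{lem:regpdfdec}, the $2^{36}$ slack in the third alternative) so that localization errors do not compound across the $\tilde O(1)$ octaves, and confirming the total query count stays polylogarithmic; the structural bookkeeping above is otherwise routine.
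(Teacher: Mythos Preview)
Your structural picture is correct and is essentially the same as the paper's: Corollary~\ref{cor:pdfdec-32} makes $f$ ``almost non-decreasing'' on $[\ux,\ox]$, the high-density tail has length $O(\eps)$, and the middle spans only $\tilde O(1)$ doubling levels. Where you diverge from the paper is in how you locate the level crossings, and that is exactly where your proposal has a gap.

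You correctly flag the obstacle: $f$ is not monotone, so a probe $f(c)\in[2^{-33}\theta,\theta)$ tells you nothing about whether the running max $g$ has crossed $\theta$. Your proposed fix --- query $1-F$ to detect a spike hiding behind concentrated mass --- is both outside the lemma's query model (only $f$-queries are allowed) and not actually worked out; it is not clear the bookkeeping terminates in $\tilde O(1)$ steps. The paper sidesteps this entirely. It never tries to locate where $g$ crosses a fixed level. Instead it runs a single recursive bisection on $[\ux,\ox-\eps]$ with three stopping rules: (i) length $\le\eps$, (ii) $f(b)<2^{-33}$, (iii) $f(b)<4f(a)$. Rule (iii) compares only the two endpoints --- no running max, no hidden-spike detection. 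That (iii) yields a valid third-alternative interval is a direct consequence of Lemma~\ref{lem:regpdfdec}: if two points of $[a,b)$ had $f$-ratio $\ge 2^{36}$ while $f(b)<4f(a)$, one could exhibit $33$ disjoint halving sub-intervals of $[\ux,\ox]$, contradicting the lemma. The query count follows because every ``one-level-above-a-leaf'' interval $[a,b]$ satisfies $f(b)\ge 4f(a)$; listed left to right these form a sequence in which $f$ at the right endpoint at least doubles between consecutive members, except for at most $32$ drops (again Lemma~\ref{lem:regpdfdec}), so there are $O(\log(1/\eps))$ of them, and the tree has depth $O(\log(1/\eps))$.

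So the paper's key move, which you are missing, is that the endpoint test $f(b)<4f(a)$ alone suffices as a stopping criterion --- you do not need to control or locate the running maximum at all.
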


\begin{proof}
Consider the following binary search algorithm: For any interval $[a,b]$ (begin with $[a,b]=[\ux,\ox-\eps]$ and only $\ux,\ox-\eps,\ox$ in the output sequence),
\begin{enumerate}
\item If $b-a\leq \eps$, stop searching $[a,b]$;
\item If $f(b)<4f(a)$, stop searching $[a,b]$;
\item If $f(b)<2^{-33}$, stop searching $[a,b]$;
\item Otherwise, add $\frac{a+b}{2}$ to the output sequence, then search both $[a,\frac{a+b}{2}]$ and $[\frac{a+b}{2},b]$.
\end{enumerate}
We show that all the first three cases in the algorithm are valid stopping conditions.

In Case 1, we don't need to search the interval $[a,b]$ as the endpoints are close enough to satisfy the required property of the lemma. 

In Case 2, consider interval $[a,b]$ with $f(b)\leq 4f(a)$, and any $x<x'$ in $[a,b]$. By Corollary~\ref{cor:pdfdec-32} $f(x')<2^{36}f(x)$.
%\rpl{I can't understand the rest of this case. I think there is a type as we have $x'$ in both sides. But even if it is $f(x') \geq 2^{36} f(x)$ I don't get the rest.} \ytnote{It's indeed a typo. I want to describe the following idea: if $x<x'$ and $f(x)<2^{-36}f(x')$, then suppose that $2^{-k-1}f(a)< f(x)\leq 2^{-k}f(a)$, and $2^{\ell}f(b)\leq f(x')<2^{\ell+1}$, then in the interval $[a,b]$, $f$ first decreases from $f(a)$ to $f(x)$ by halving at least $k$ times; then decreases from $f(x')$ to $f(b)$ by halving at least $\ell$ times. By Lemma 2.4 $k+\ell\leq 32$. By $f(b)\leq 4f(a)$, it is impossible that $f(x')>2^{36}f(x)$.}
If $f(x')\geq 2^{36}f(x)$, as $f(b)<4f(a)$, there exist at least 33 intervals $[a,v_1)$, $[v_1,v_2)$, $\cdots$, $[v_{j-2},v_{j-1})$, $[v_{j-1},x)$, $[x',v_{j+1})$, $[v_{j+1},v_{j+2})$, $\cdots$, $[v_{31},v_{32})$, $[v_{32},b)$ such that for any interval listed above, the PDF between the two endpoints decreases by at least a factor of $\frac{1}{2}$. This also contradicts Lemma~\ref{lem:regpdfdec} that in $[\ux,\ox]$ the PDF can only decrease by half for at most 32 times. Similarly, $f(x') > 2^{-36} f(x)$ otherwise there are at least $33$ intervals between $x$ and $x'$ where the PDF decreases by half.
%\rpl{I added a sentence arguing the other direction as well, which is simpler. Can you check?}

In Case 3, if $f(b)<2^{-33}$, by Corollary~\ref{cor:pdfdec-32} there does not exist any $x\in[a,b)$ such that $f(x)>1>2^{33}f(b)$. Thus for any $x\in[a,b)$, $f(x)\leq 1$.\\

Thus, all first 3 cases of the algorithm are valid stopping conditions. Now we analyze how many points are added to the sequence. First notice that for any $v\leq\ox-\eps$, $f(v)<\frac{2^{33}}{\eps}$: otherwise if $f(v)\geq\frac{2^{33}}{\eps}$, as $f(x)>2^{-33} f(v)\geq\frac{1}{\eps}$ for every $x\in [v,v+\eps]$ by Corollary~\ref{cor:pdfdec-32}, we have $F(v+\eps)-F(v)=\int_{v}^{v+\eps}f(t)dt>\frac{1}{\eps}((v+\eps)-v)=1$, which is impossible.

Consider the following set $S$ containing all searched intervals $[a_i,b_i]$ that are one level from the leaves: in other words, the algorithm searches $[a_i,\frac{a_i+b_i}{2}]$ and $[\frac{a_i+b_i}{2},b]$, but does not search deeper into either interval. If we list all intervals in $S$ in increasing order (of either endpoint) $[a_1,b_1], [a_2,b_2], \cdots, [a_s,b_s]$, then $f$ increases by a factor of 4 from $a_i$ to $b_i$, but decreases by at least half from $b_i$ to $a_{i+1}$ for at most 32 times due to Lemma~\ref{lem:regpdfdec}. Therefore, $f(b_{i+1})>2f(b_i)$ for all but at most 32 times. Partition $S$ to at most 32 subsequence of intervals $S_d=([a_{d+1},b_{d+1}]$, $[a_{d+2},b_{d+2}]$, $\cdots$, $[a_{d+j},b_{d+j}]$) with $f(b_{d+i+1})>2f(b_{d+i})$ for every $i\in[j-1]$. Since for every $[a_i,b_i]\in S$, $2^{-33}\leq f(b_i)<\frac{2^{33}}{\eps}$, $S_d$ contains at most $O(\log\frac{1}{\eps})=\tO(1)$ intervals, thus $S$ also contains $\tO(1)$ intervals.

Notice that each interval in the search tree satisfies one of the 4 conditions:
\begin{enumerate}
    \item The interval is in $S$;
    \item The interval is a child of an interval in $S$;
    \item In the search tree, the interval is an ancestor of an interval in $S$;
    \item In the search tree, the interval is a child of an ancestor of an interval in $S$.
\end{enumerate}
Since the length of each searched interval is at least $\eps$, the depth of the search tree is $O(\log\frac{1}{\eps})=\tO(1)$. Therefore, each interval $[a_i,b_i]\in S$ defined in the previous paragraph has $\tO(1)$ ancestors in the search tree. Thus each interval in $S$ can define at most $\tO(1)$ intervals in the above 4 conditions, which means there are $\tO(|S|)=\tO(1)$ intervals in the search tree. We conclude that the proposed algorithm uses $\tO(1)$ queries to $f$ to output a sequence of points satisfying the conditions in the lemma statement. 

\end{proof}

Now we are ready to generalize the algorithm for distributions with monotone PDF to arbitrary regular distributions. The algorithm has almost the same structure, but we need to be more careful in the analysis of query complexity.

\begin{theorem}
\label{thm:regular}
Let $F$ be a smooth regular distribution. Then
$\tO(\eps^{-0.5})$ oracle queries to $F$ and $f$ are sufficient to learn the distribution within $\eps$  \levy-distance.

\end{theorem}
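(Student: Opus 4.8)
The plan is to mirror the structure of the convex-PDF proof (Theorem \ref{thm:convex}), replacing the crude ``$f$ is monotone, so partition by dyadic levels of $f$'' argument with the more robust machinery developed in Lemmas \ref{lem:regpdfdec}--\ref{lem:binary-search-query}, which lets us handle a PDF that goes up and down. Concretely, I would first dyadically partition the \emph{quantile} axis rather than the value axis: for each integer $\ell = 1, 2, \hdots, L$ with $L = O(\log(1/\epsilon))$, use binary search to locate (up to $\pm\epsilon$ in value) the endpoints $\ux_\ell, \ox_\ell$ of the region $I_\ell$ where $2^{-\ell} \le 1 - F(v) < 2^{-\ell+1}$; the tail region $1 - F(v) < 2^{-L}$ has total probability mass $\le \epsilon$, so by the same computation as the $I_{K+1}$ step in Theorem \ref{thm:convex} it contributes an interval of length $\le \epsilon$ that automatically satisfies the first bullet of Lemma \ref{lem:levy-learn}. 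This costs only $\tilde O(1)$ queries.

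Within each quantile band $I_\ell = [\ux_\ell, \ox_\ell]$ I would invoke Lemma \ref{lem:binary-search-query} to produce, with $\tilde O(1)$ queries, a subdivision of $I_\ell$ into sub-intervals $[x_i, x_{i+1})$ each of which is either (a) of length $\le \epsilon$, or (b) a region where $f \le 1$, or (c) a region where $f$ varies by at most a multiplicative factor $2^{36}$. Cases (a) and (b) are handled directly: for (a) the first bullet of Lemma \ref{lem:levy-learn} applies; for (b), since $f$ is bounded by $1$ throughout, I can further cut the interval into pieces of length $\sqrt\epsilon$, on each of which $(\overline f - \underline f)\cdot(\text{length}) \le 1 \cdot \sqrt\epsilon$ — wait, that gives $\sqrt\epsilon$, not $\epsilon$, so instead I treat (b) exactly like a ``convex'' piece and apply the Step 2 / Step 3 refinement of Theorem \ref{thm:convex} restricted to the range $f \le 1$, where the total-variation bound $\int f \le 1$ still controls the count. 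For case (c), the key point is that on such an interval $f \in [f_0, 2^{36} f_0]$ for some scale $f_0$, so I partition it into sub-intervals of length $\sqrt\epsilon / (2^{36} f_0)$ — on each such piece the variation of the (bounded-slope-controlled) PDF times the length is $O(\epsilon)$ after one further adaptive refinement exactly as in Step 3 of Theorem \ref{thm:convex}, using $f' \ge -2 f^2/(1-F) \ge -2 (2^{36}f_0)^2 \cdot 2^{\ell}$ to bound how much $f$ can drop, and the total-variation-style telescoping $\sum_j t_{k,j} = O(f_0 \cdot |I_\ell|)$ to bound the number of pieces added.

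The main obstacle — and the reason the analysis is ``more careful'' than in the convex case — is bounding the total number of query points summed over all quantile bands $\ell$ and all the case-(c) scale-intervals inside them. In the convex proof, the telescoping $\sum_k 2^k |I_k| \le \int 2 f \le 2$ works because the dyadic level of $f$ itself is monotone along the axis; here $f$ is not monotone, so I need a different accounting. The right invariant is to charge each added point to the probability mass it sits above: within a band $I_\ell$, an interval at PDF-scale $f_0$ of length $\sqrt\epsilon/(2^{36}f_0)$ covers probability mass $\approx f_0 \cdot \sqrt\epsilon/(2^{36}f_0) = \sqrt\epsilon/2^{36}$, so the number of such intervals needed to cover $I_\ell$ (total mass $\le 2^{-\ell+1}$) is $O(2^{-\ell}/\sqrt\epsilon)$, and summing the geometric series over $\ell = 1, \hdots, L$ gives $O(1/\sqrt\epsilon)$; the adaptive Step-3 refinement multiplies this only by a telescoping factor that, via \eqref{eqn:interval-length}-style bounds on the number of halvings of $f$ (Lemma \ref{lem:regpdfdec}), contributes at most another $O(1/\sqrt\epsilon)$ overall. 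Assembling the constantly-many $\tilde O(1)$-query pieces (band endpoints), the case-(b) pieces (bounded by $\int f \le 1$), and the case-(c) pieces (bounded by the mass argument), the total is $\tilde O(\epsilon^{-0.5})$ points, all of which are evaluated by $\tilde O(1)$ oracle queries each, and Lemma \ref{lem:levy-learn} then yields an $\hat F$ within $O(\epsilon)$ \levy-distance; rescaling $\epsilon$ by a constant finishes the proof.
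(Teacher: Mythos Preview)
Your overall plan --- dyadic quantile bands $I^{(\ell)}$, Lemma~\ref{lem:binary-search-query} inside each band, PDF-scale grouping, two rounds of refinement --- is exactly the paper's Steps 0--3. But there is a real gap in your Step-2 length. You cut a case-(c) interval at PDF-scale $f_0$ into pieces of length $\sqrt\epsilon/(2^{36}f_0)$; the paper uses $2^{-(k+\ell)}\sqrt\epsilon$, i.e.\ your length divided by an additional factor $2^{\ell}$. This factor is not cosmetic: regularity gives $f'\ge -2f^2/(1-F)$, and in band $\ell$ the denominator $1-F$ is $\approx 2^{-\ell}$, so $f$ can drop at rate $\Theta(f_0^2\,2^{\ell})$. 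On a piece of your length $d\approx\sqrt\epsilon/f_0$ the downward drift of $f$ is therefore of order $f_0\,2^{\ell}\sqrt\epsilon$, and $d\cdot(\bar f-\underline f)$ is of order $2^{\ell}\epsilon$, not $O(\epsilon)$. Your adaptive Step-3 refinement does not rescue this, because $\tilde t_{k,j}$ is set from the \emph{net} change $f(v_{k,j+1})-f(v_{k,j})$, which can be zero even when $f$ dips and recovers inside the interval. The paper's fix is precisely to shrink the base length by $2^{-\ell}$, after which the drift contribution becomes $2^{8-\ell}\epsilon=O(\epsilon)$ regardless of $\tilde t_{k,j}$; the Step-2 count remains $O(\epsilon^{-1/2})$ per band because the total mass in band $\ell$ is only $2^{-\ell}$, so the extra $2^{\ell}$ in the reciprocal length is absorbed.

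A second gap: your Step-3 count (``telescoping factor via Lemma~\ref{lem:regpdfdec}'') is not an argument. Since $f$ is non-monotone, $\sum_j\max(0,t_{k,j})$ does not telescope to the range of $f$ on $I_k$. The paper handles this by splitting $\{x:f(x)\in(2^{k-70},2^{k+3})\}$ into its increasing part $I_{k,\uparrow}$ and decreasing part $I_{k,\downarrow}$, bounding $\sum_j\max(0,t_{k,j})\le (2^{k+\ell}\sqrt\epsilon)^{-1}\int_{I_{k,\uparrow}}f'$, and then controlling the integral via $\int_{I_{k,\uparrow}}f'\le 2^{k+3}-\int_{I_{k,\downarrow}}f'$, using the regularity lower bound $f'>-2^{2k+\ell+7}$ on $I_{k,\downarrow}$ together with $|I_{k,\downarrow}|\le 2^{-\ell}/2^{k-70}$ (which comes from $\int_{I_{k,\downarrow}}f\le 2^{-\ell}$). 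Lemma~\ref{lem:regpdfdec} alone does not supply this step; it is the ``more careful analysis'' the paper warns about just before the theorem.
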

\begin{proof}

Let $L=\log\frac{1}{\eps}$. We describe the steps of the algorithm as follows. We start with ``Step 0'' just to keep the correspondence of Steps 1, 2, and 3.

\paragraph{Step 0} Use binary search to identify $2(L+1)$ points $z_i$ such that for the intervals $I^{(\ell)} = [z_{2\ell-1}, z_{2\ell}]$ the CDF satisfies $1-2^{-\ell+1}\leq F(v)<1- 2^{-\ell}$ and $I^{(L+1)}$ contains all values with $1-F(v)\geq 1-2^{-L}=1-\eps$. And $z_{2\ell+1} - z_{2\ell} \leq \epsilon$. Binary search requires 
%Partition $[0,1]$ into $K+1$ intervals $I^{(1)}, I^{(2)}$, $\cdots$, $I^{(L+1)}$, such that interval $I^{(\ell)}$ contains all $v$ with $1-2^{-\ell+1}\leq F(v)<1- 2^{-\ell}$ for every $\ell$ such that $1\leq \ell\leq L$, and $I^{(L+1)}$ contains all values with $1-F(v)\geq 1-2^{-L}=1-\eps$. Using binary searches with oracle queries to $F$ we can determine all endpoints with $\eps$ additive accuracy in 
$O(\log\frac{1}{\eps})=\tilde{O}(1)$ queries for each interval, thus $\tilde{O}(1)$ queries in total for all intervals as there are only $\tilde{O}(1)$ intervals. We are going to run an algorithm similar to the case with monotone PDF for each interval and show that $\tilde{O}(\eps^{-0.5})$ queries to $F$ and $f$ are enough to learn the CDF for each $I^{(\ell)}$, thus $\tilde{O}(\eps^{-0.5})$ queries are enough in total.

\paragraph{Step 1} All operations below are applied to interval $I^{(\ell)}$ with fixed $\ell$. We only focus on $I^{(\ell)}$ with $\ell\leq L$, as there is no need to learn $I^{(L+1)}$ with $\eps$ Levy distance. When there is no ambiguity, we omit the superscript $\ell$. 

For interval $I^{(\ell)}=[\ux,\ox]$, by Lemma~\ref{lem:binary-search-query} we can find a sequence of points $x_1=\ux\leq x_2\leq\cdots\leq x_m=\ox$ in the interval, such that: for any subinterval $[x_i,x_{i+1})$, either (a). $x_{i+1}-x_i\leq \eps$; or (b). $f(x_{i+1})<2^{-33}$, which implies for any point $x\in[x_i,x_{i+1})$, $f(x)\leq 1$; or (c). for any two points $x,x'\in [x_i,x_{i+1})$, $2^{-36}f(x)<f(x')<2^{36}f(x)$. 

Construct interval collections $I_0,I_1,\cdots,I_{K},I_{K+1}$ for some $K=\tO(1)$ as follows.
For all subintervals $[x_i,x_{i+1})$ satisfying (a), group them to an interval collection $I_{K+1}$; for all subintervals satisfying (b), group them to an interval collection $I_{0}$; for all subintervals satisfying (c), add $[x_i,x_{i+1})$ to interval collection $I_k$, if $2^{k-1}\leq 2^{33}f(x_{i+1})<2^{k}$. Different from the analysis for monotone non-decreasing $f$, for any interval $[x_i,x_{i+1})$ in $I_k$, point $x\in[x_i,x_{i+1})$ satisfies $f(x)<2^{36}f(x_{i+1})<2^{k+3}$, and $f(x)>2^{-36}f(x_{i+1})\geq 2^{k-70}$. As the intervals in the collections are constructed by Lemma~\ref{lem:binary-search-query} which only uses $\tO(1)$ queries, the total number of queries to $f$ used in this step is $\tO(1)$.

\paragraph{Step 2} For each interval collection $I_k$ with $0\leq k\leq K$, partition all points in $I_k$ to intervals $I_{k,1},I_{k,2},\cdots$ of length $\frac{1}{2^{k+\ell}}\sqrt{\eps}$. Denote by $|I_k|$ the total length of all intervals in $I_k$. The number of endpoints added is at most $\frac{|I_k|}{\sqrt{\eps}/2^{k+\ell}}=2^{k+\ell}\eps^{-1/2}|I_k|$ for each interval collection $I_k$, and sums up to at most 
%\rpl{Here $I_k$ is a collection of intervals rather than a single interval. I think we alos need to define $\abs{I_k}$ as being the sum of lengths of intervals in the collection.}
\begin{eqnarray*}
\sum_{k=0}^{K}2^k\eps^{-1/2}|I_k|&=&\eps^{-1/2}|I_0|+\sum_{k=1}^{K}2^{k+\ell}\eps^{-1/2}|I_k|\\
&\leq&\eps^{-1/2}+\sum_{k=1}^{K}\eps^{-1/2}\int_{v\in I_k}2^{70+\ell}f(v)dv\\
&\leq&\eps^{-1/2}+2^{70+\ell}\eps^{-1/2}\int_{v\in I^{(\ell)}}f(v)dv\\
&=&\eps^{-1/2}+2^{70+\ell}\eps^{-1/2}\cdot2^{-\ell}=O(\eps^{-1/2}).
\end{eqnarray*}
Thus if we query $f$ and $F$ for all the endpoints of the intervals from this step, the total number of queries needed is at most $O(\eps^{-1/2})$.

\paragraph{Step 3} For each interval $I_{k,j}=[v_{k,j},v_{k,j}+2^{-k-\ell}\sqrt{\eps}]\subseteq I_k$ partitioned in Step 2, if $f(v_{k,j}+2^{-k-\ell}\sqrt{\eps})-f(v_{k,j})=t_{k,j}\cdot 2^{k+\ell}\sqrt{\eps}$, we further partition the interval to $\tilde{t}_{k,j}:=\max(1,\lceil{t_{k,j}}\rceil)$ intervals $[v_{k,j,0},v_{k,j,1}]$, $[v_{k,j,1},v_{k,j,2}]$, $\cdots$, $[v_{k,j,\tilde{t}_{k,j}-1},v_{k,j,\tilde{t}_{k,j}}]$ of length $\frac{1}{\tilde{t}_{k,j}2^{k+\ell}}\sqrt{\eps}$ (with $v_{k,j,0}=v_{k,j}$ and $v_{k,j,\tilde{t}_{k,j}}=v_{k,j+1}$), and query $F$ for the newly added endpoints. By the definition of regular distribution, for any $x\in [v_{k,j},v_{k,j+1}]$,
\[f'(x)\geq -\frac{2f^2(x)}{1-F(x)}>\frac{2\cdot (2^{k+3})^2}{2^{-\ell}}=-2^{\ell+2k+7}\]
by $f(x)<2^{k+3}$ and $F(x)<1-2^{\ell}$. Thus we can bound $f(x)$ for $x\in [v_{k,j},v_{k,j+1}]$ as follows: 
\begin{eqnarray*}
f(x)&\leq& f(v_{k,j+1})-(-2^{\ell+2k+7})\cdot(v_{k,j+1}-v_{k,j})\\
&<&f(v_{k,j})+t_{k,j}\cdot 2^{k+\ell}\sqrt{\eps}+2^{\ell+2k+7}2^{-k-\ell}\sqrt{\eps}\\
&=&f(v_{k,j})+t_{k,j}\cdot2^{k+\ell}\sqrt{\eps}+2^{k+7}\sqrt{\eps}\mathop{=}^{def} \of,
\end{eqnarray*}
and 
\begin{eqnarray*}
f(x)&\geq& f(v_{k,j})+(-2^{\ell+2k+7})\cdot(v_{k,j+1}-v_{k,j})\\
&>&f(v_{k,j})-2^{\ell+2k+7}2^{-k-\ell}\sqrt{\eps}\\
&=&f(v_{k,j})-2^{k+7}\sqrt{\eps}\mathop{=}^{def} \uf,
\end{eqnarray*}
Thus for any neighboring points $v_{k,j,a},v_{k,j,a+1}$ in the partition with distance $d=\frac{1}{\tilde{t}_{k,j}2^{k+\ell}}\sqrt{\eps}$, if $x\in[v_{k,j,a},v_{k,j,a+1}]$, we have $f(x)\in(\uf,\of)$ with $\of-\uf<t_{k,j}\cdot2^{k+\ell}\sqrt{\eps}+2^{k+8}\sqrt{\eps}$, and
\begin{eqnarray*}
(v_{k,j,a+1}-v_{k,j,a})(\of-\uf)&<&\frac{1}{\tilde{t}_{k,j}2^{k+\ell}}\sqrt{\eps}(t_{k,j}\cdot2^{k+\ell}\sqrt{\eps}+2^{k+8}\sqrt{\eps})\\
&=&\frac{t_{k,j}}{\tilde{t}_{k,j}}\eps+2^{8-\ell}\eps=O(\eps).
\end{eqnarray*}

% \rpl{Is there some problem with the above equation. I thought we wanted $(\overline{v}-\underline{v})(\of-\uf) \leq O(\epsilon)$. Here we have $\sqrt{\epsilon}$ instead of $\epsilon$. Maybe it is a typo, but it is worth checking.}
% \ytnote{It's just $\eps$, instead of $\sqrt{\eps}$}

\begin{figure}[h]
\centering
\includegraphics[width=0.9\textwidth]{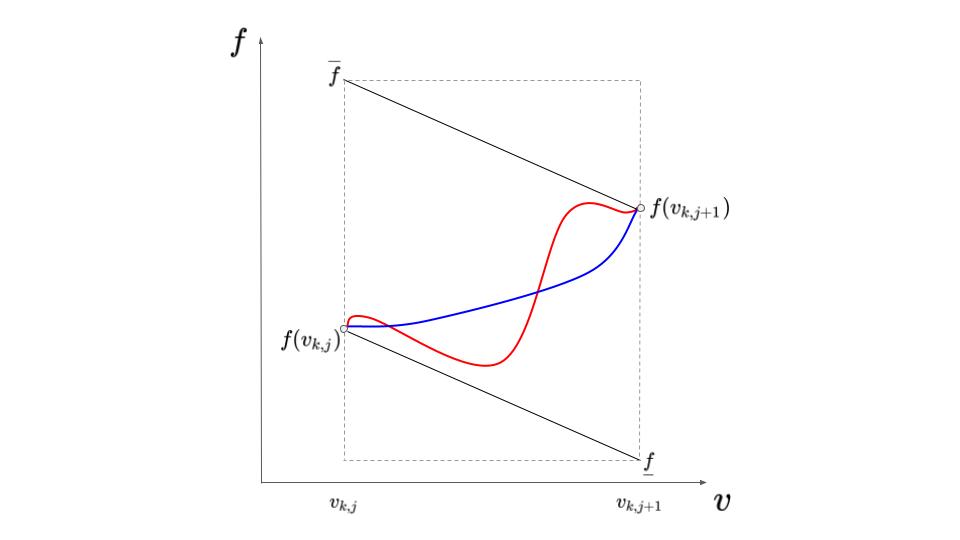}
\caption{Illustration of the analysis on interval $[v_{k,j},v_{k,j+1}]$ with $t_{k,j}\leq 1$. The two black solid lines have slope $-2^{\ell+2k+7}$, which is a lower bound of $f'(v)$ in this range by the definition of regular distribution. Given $f(v_{k,j})$ and $f(v_{k,j+1})$, the PDF function $f$ of a regular distribution cannot go beyond the two black lines. As the CDF of any distribution is the area below the PDF curve, the CDF difference of any two regular distributions given the CDF and the PDF at the $v_{k,j}$ and $v_{k,j+1}$ (for example, the red curve and the blue curve) is smaller than the area of the dashed rectangle, which is $(v_{k,j+1}-v_{k,j})(\of-\uf)$.}
\label{fig:regular-example}
\end{figure}

Figure~\ref{fig:regular-example} provides an illustrative example of the analysis. By Lemma~\ref{lem:levy-learn} we can learn a distribution within $\eps$ \levy-distance from $F$ on $\bigcup I_{k,j}$ for $k\in[0,K]$. As all intervals in $I_{K+1,j}$ have length at most $\eps$, we do not need to learn those intervals to satisfy $\eps$ \levy-distance. 

Now we analyze how many additional points (i.e. queries to $F$) are needed in this step. $\tilde{t}_{k,j}-1=\max(1,\lceil t_{k,j}\rceil)-1\leq \max(0,t_{k,j})$ additional queries are needed for each interval $I_{k,j}$. This means that if $f$ increases on $I_{i,j}$, i.e. $f(v_{k,j+1})>f(v_{k,j})$, then at most $t_{k,j}=\frac{f(v_{k,j+1})-f(v_{k,j})}{2^{k+\ell}\sqrt{\eps}}$ additional queries are needed; otherwise when $f$ does not increase on $I_{i,j}$, no additional query is required. 

Let $I_{k,\uparrow}=\{x:f'(x)\geq 0,2^{k-70}< f(x)< 2^{k+3}\}$ and $I_{k,\downarrow}=\{x:f'(x)<0,2^{k-70}< f(x)< 2^{k+3}\}$ be the sets of values such that $f(x)$ are bounded in range $(2^{k-70},2^{k+3})$, while $f$ is non-decreasing and decreasing respectively. Notice that $I_{k}$ is a subset of $I_{k,\uparrow}\cup I_{k,\downarrow}$. Therefore, we can upper-bound the total number of additional queries in $I_k$ by 
\begin{equation}\label{eqn:additional-num}
\sum_{j}\max(0,t_{k,j})\leq \sum_{j} \frac{f(v_{k,j+1})-f(v_{k,j})}{2^{k+\ell}\sqrt{\eps}}
\leq\frac{1}{2^{k+\ell}\sqrt{\eps}}\int_{I_{k,\uparrow}}f'(x)dx.
\end{equation}

As on $I_{k,\uparrow}\cup I_{k,\downarrow}$, $f(x)$ can increase to at most $2^{k+3}$, we have $\int_{I_{k,\uparrow}}f'(x)dx\leq 2^{k+3}-\int_{I_{k,\downarrow}}f'(x)dx$. Notice that for any $x$ such that $2^{k-70}< f(x)< 2^{k+3}$, when $f'(x)<0$, $f'(x)>-\frac{2f^2(x)}{1-F(x)}>-2^{2k+\ell+7}$. Also by 
\[2^{k-70}|I_{k,\downarrow}|<\int_{I_{k,\downarrow}}f(x)dx\leq 2^{-\ell}\] 
we have $|I_{k,\downarrow}|<2^{70-k-\ell}$.
Therefore,
\begin{eqnarray}
\int_{I_{k,\uparrow}}f'(x)dx&\leq& 2^{k+3}-\int_{I_{k,\downarrow}}f'(x)dx<2^{k+3}+\int_{I_{k,\downarrow}}2^{2k+\ell+7}dx\nonumber\\
&<&2^{k+3}+2^{2k+\ell+7}|I_{k,\downarrow}|<2^{k+3}+2^{2k+\ell+7}\cdot 2^{70-k-\ell}\nonumber\\
&<&2^{k+3}+2^{k+77}<2^{k+78}.\label{eqn:uprange}
\end{eqnarray}
Combine \eqref{eqn:additional-num} and \eqref{eqn:uprange}, we have the total number of additional query in $I_k$ is at most 
\[\frac{1}{2^{k+\ell}\sqrt{\eps}}\int_{I_{k,\uparrow}}f'(x)<\frac{1}{2^{k+\ell}\sqrt{\eps}}\cdot 2^{k+78}=2^{78-\ell}\sqrt{\eps}.\]
Thus for every $k$, the number of queries in Step 3 is $O(\eps^{-1/2})$, which means that the total number of queries in Step 3 is $\tO(\eps^{-1/2})$ since there are $\tO(1)$ different values of $k$.

\end{proof}

% \ytnote{Below is not correct.}\rpl{TODO: change the notation to $\beta$-irregular to match the intro which now tries to relate to Cole-Roughgarden. For now, we can jsut }

\paragraph{Discussion on $\beta$-irregular distributions.} Observe that there is nothing special about the factor of $2$ in equation \eqref{eq:smooth_reg}. By repeating the proof with a different irregularity coefficient we obtain a proof of Theorem \ref{thm:beta-irregular}.

% \begin{corollary}
% It is possible to describe the class of $O(1)$-irregular distributions bounded in $[0,1]$ within $\epsilon$ in \levy-distance using $\tilde{O}(1/\epsilon^{0.5})$
% \end{corollary}

%\rpl{changed the text above. If we can find an upper bound of $O(\sqrt{\beta/\epsilon})$ or $O(\beta/\epsilon^{0.5})$ it would be nice. If we can only prove $O(\exp(\beta)/\epsilon^{0.5})$ without showing tightness, then the way to state it above may be the best, wdyt?}

% proof of Theorem~\ref{thm:regular} also works for $\alpha$-regular distribution, where $f'(x)\geq-\frac{(2+\alpha)f^2(v)}{1-F(v)}$. We argue that after slightly modify the proof, we can show that the query complexity of the class of $\alpha$-regular distributions is $\tilde{O}((1+\sqrt{\alpha})\eps^{-0.5})$. 
%In particular, each interval $I_{k,j}$ will have length $\frac{2^{-k-\ell}}{1+\sqrt{\alpha}}\sqrt{\eps}$, instead of $2^{-k-\ell}\sqrt{\eps}$.

% \rpl{TODO: mention that this proof also works for $\alpha$-regular and discuss how $\alpha$ affects the bound}
% \ytnote{I read from other papers (Eg. [Cole Roughgarden STOC'14]) where the }

\subsection{Description complexity of a general (non-smooth) regular distribution}
\label{sec:description-non-smooth}

When the regular distribution does not have a smooth CDF function, the PDF function may not exist, and our algorithm for learning the CDF distribution using oracle queries to the PDF function may not apply to general regular distributions. However, we show in the following lemma, that any regular distribution $F$ can be uniformly approximated by a smooth regular distribution $\tilde{F}$ within arbitrarily small \levy-distance $\delta$. Thus, to describe $F$ within $\eps$ \levy-distance, it suffices to describe $\tilde{F}$ within $\eps$ \levy-distance, which is doable via the algorithm in previous subsections.

What we will describe above is a variation of the well-known mollification procedure in real analysis, which replaces a function with its convolution with a very concentrated $C^\infty$-function. The only specific detail below is that we argue we can do mollification in a way that is regularity-preserving.

\begin{lemma}\label{lem:smooth}
Given any regular distribution $F$, for any $\delta>0$ there is another regular distribution $\tilde{F}$ such that $\text{\levy}(F, \tilde{F}) \leq \delta$ and $\tilde{F}$ has no point-masses and has a $C^\infty$ CDF.
\end{lemma}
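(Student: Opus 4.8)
The natural idea is mollification, but to make it regularity-preserving I would run it not on the CDF but on the \emph{revenue curve} $R_F(q)=q\,F^{-1}(1-q)$: regularity is exactly the concavity of $R_F$ on $[0,1]$, and the convolution of a concave function with a nonnegative bump is again concave, whereas there is no reason for convolving \emph{values} (adding noise to $V$) to preserve regularity. I would first record the dictionary between distributions on $[0,1]$ and revenue curves: $R$ is the revenue curve of such a distribution iff $R$ is concave, $R(0)=0$, $R\ge 0$ on $[0,1]$, and $R'(0^+)\le 1$, the distribution being recovered through $F^{-1}(1-q)=R(q)/q$. Two bookkeeping facts, each a short lemma, then carry the argument: (i) if two valid revenue curves are uniformly $\rho$-close then the corresponding distributions are $O(\sqrt\rho)$-close in \levy distance — the only care needed is that dividing by $q$ inflates errors for $q=O(\sqrt\rho)$, i.e.\ for the top $O(\sqrt\rho)$-fraction of quantiles, which contributes only $O(\sqrt\rho)$ to the \levy distance; and (ii) if $R$ is \emph{strictly} concave then its distribution has no point masses (a point mass at $v_0$ forces a segment of $R$ onto the line $q\mapsto v_0 q$ through the origin), and if in addition $R\in C^\infty$ then $q\mapsto R(q)/q$ is $C^\infty$ and strictly decreasing on $(0,1]$, so the distribution has a $C^\infty$ density on the interior of its support.

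With this in hand I would build $\tilde F$ in three steps. \textbf{Step A.} Replace $R_F$ by $R_1=(1-\eta)R_F+\eta\,R_U$, where $R_U(q)=q(1-q)$ is the revenue curve of the uniform distribution on $[0,1]$. Since $R_U$ is smooth and strictly concave and $R_F$ is concave, $R_1$ is a valid revenue curve whose distributional second derivative is $\le-2\eta$; by (i) its distribution $F_1$ satisfies $\text{\levy}(F,F_1)=O(\sqrt\eta)$, and by (ii) it is regular with no point masses. \textbf{Step B.} Extend $R_1$ to a neighborhood of $[0,1]$ by its endpoint tangent lines (preserving concavity), put $R_2=R_1*\psi_\mu$ for a symmetric $C^\infty$ bump $\psi_\mu$ supported in $[-\mu,\mu]$; then $R_2\in C^\infty$ and $R_2''\le-\eta$ on all of $[0,1]$ (using the symmetry of $\psi_\mu$ at the boundary), so $R_2$ is still strictly concave there. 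The mollification moves the endpoint values and the slope at $0$ only by $O(\mu)$ (times constants depending on $F$), so I can restore an exactly valid revenue curve $\tilde R$ by subtracting the affine function that makes it agree with $R_1$ at $q\in\{0,1\}$ and then dividing by $\max(1,\tilde R'(0^+))$; subtracting an affine function and scaling preserve smoothness and strict concavity. By (i) again, the distribution $\tilde F$ of $\tilde R$ satisfies $\text{\levy}(F_1,\tilde F)=O(\sqrt\mu)$. \textbf{Step C.} Pick $\eta$ and $\mu$ small enough, in terms of $\delta$ and $F$, that $\text{\levy}(F,\tilde F)\le\delta$. Then $\tilde F$ is regular ($\tilde R$ concave), has no point masses, and has a $C^\infty$ density on the interior of its support — in particular the $C^1$ density there that the algorithm of Section~\ref{sec:reg} uses.

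If one insists on the CDF being $C^\infty$ on all of $\R$ — currently the density jumps from $0$ at the two endpoints of $\mathrm{supp}(\tilde F)$ — one extra tapering is needed there: reparametrize the quantile near $q=0$ and near $q=1$ by a fixed $C^\infty$ map that drives $(R(q)/q)'$ to $-\infty$, equivalently giving $\tilde R$ a vertical tangent at the endpoints (as $\sqrt{1-q}$ has at $q=1$), which is still compatible with concavity and changes the distribution by an arbitrarily small amount. I expect this endpoint tapering, together with the quantitative step (i) that converts uniform closeness of revenue curves into \levy closeness of distributions (where the $q\to0$ blow-up must be absorbed by a small-mass argument), to be the only genuinely delicate points; the conceptual heart — mollify in quantile space rather than value space, and pre-mix with a fixed strictly concave revenue curve to kill point masses — is clean.
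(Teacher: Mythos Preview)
Your proposal is correct and follows essentially the same route as the paper: mollify the revenue curve $R(q)=qF^{-1}(1-q)$ rather than the CDF (so that concavity, hence regularity, is preserved), add a multiple of $q(1-q)$ to force strict concavity and kill point masses, and convert uniform $\rho$-closeness of revenue curves into $O(\sqrt{\rho})$ \levy-closeness of the distributions by dividing by $q$ and absorbing the top $O(\sqrt{\rho})$ quantiles into the \levy slack. The paper does the two moves in the opposite order (it first cites a black-box $C^\infty$ concave approximation $\hat R$ with $\hat R(0)=0$ and $\lvert R-\hat R\rvert\le\delta^2/2$, then adds $\delta^2 q(1-q)$), and is terser about the boundary bookkeeping and the endpoint smoothness that you spell out; your extra care there is welcome but not a genuinely different argument.
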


\begin{proof}
Let $R(q) = q F^{-1}(1-q)$ be the revenue curve associated $F$. Since $F$ is regular, the curve $R$ is concave, but it may not be $C^\infty$. There are different techniques to obtain a uniform approximation of concave functions by $C^\infty$ concave functions, one of which is taking a convolution with a non-negative mollifier (see \cite{azagra2015global} for example). For our purpose, it is enough that it exists a function $C^\infty$-concave $\hat{R}(x)$ such that $\hat{R}(0) = 0$ and $\abs{R(x) - \hat{R}(x)} \leq \delta^2/2, \forall x \in [0,1]$. Now, if we take $\tilde{R}(x) = \hat{R}(x) + \delta^2 x(1-x)$ we obtain a function that is $C^\infty$ strongly concave and $\abs{R(x) - \tilde{R}(x)} \leq \delta^2$.

Now, define $\tilde{F}$ such that $\tilde{R}(q) = q \tilde{F}^{-1}(1-q)$. Because $\tilde{R}$ is strongly convex, $\tilde{F}$ has no point masses. The function $\tilde{F}$ is strictly increasing and $C^\infty$, so $\tilde{f}(v) > 0$ in its domain. As a consequence, $\tilde{F}(v)$ is also $C^\infty$.

Finally observe that for $q \in [\delta, 1-\delta]$ we have $\abs{\tilde{F}^{-1}(q) - F^{-1}(q)} \leq \delta^2 / \delta = \delta$. Now, for a given $q$, let $v = F^{-1}(q)$. Then we know that:
$v-\delta \leq \tilde{F}^{-1}(q) \leq v+\delta$. Hence
$\tilde{F}(v-\delta) \leq q = F(v) \leq \tilde{F}(v+\delta)$
which implies that $\text{\levy}(F,\tilde{F}) \leq \delta$.
\end{proof}

%\rpl{Changed the text below a bit. Please take a look.}

\paragraph{Putting it all together} 
We can apply Theorem \ref{thm:regular} to the smooth approximation from Lemma \ref{lem:smooth} and obtain a $\tilde{O}(1/\epsilon^{0.5})$ bound on the number of queries to $F$ and $f$ to learn any regular distribution.

To complete the result, we still need to argue about the number of bits necessary to represent the result since both the points $x$ queried and the results $F(x)$ and $f(x)$ are in principle real numbers. However, we can see each query to $F$ (or $f$) as follows: query $F(x)$ (or $f(x)$)  where $x$ is an integer times $\eps$, and an approximate answer within $F(x)\pm\epsilon$ (or $f(x)\pm\epsilon$). All of the lemmas and algorithms already accommodate $\pm \epsilon$ errors both in $x$, $F$, and $f$. This automatically leads to a $\tilde{O}(1/\epsilon^{0.5})$ bit complexity bound. Combined with the lower bound in Theorem~\ref{thm:description-lb}, this proves our main result (Theorem~\ref{thm:main}).

\section{Pricing Query Complexity of Learning a Regular Distribution}\label{sec:mol}

In this section, we study the pricing query complexity of learning a regular distribution within $\eps$ \levy-distance. In particular, for each pricing query, we submit a price $x$, and a value $v\sim F$ is realized. We cannot directly observe $v$, but can observe $\text{sign}(v-x) \in \{-1,+1\}$. By Chernoff bound, $\tilde{O}(1/\eps^2)$ pricing queries are sufficient to learn $F(x)$ with $\eps$ additive error. However, the $\tilde{O}(\eps^{-0.5})$ description complexity upper bound result is not sufficient to show that the pricing query complexity of learning a regular distribution on $[0,1]$ within $\eps$ \levy-distance. There are two problems remaining. Firstly, the algorithms for showing description complexity upperbound results rely on queries to not only the CDF, but also the PDF. Secondly, for non-smooth regular distributions, we prove the description complexity result by learning a smooth regular distribution $\tF$ that is uniformly close to $F$, but not exactly $F$. Thus when we query $F(x)$, we are only able to get $\tF(x+\eps_1)+\eps_2$ where $|\eps_1|$ can be arbitrarily small (and we use the following bound $|\eps_1|\leq \frac{\eps^2}{2}$), and $|\delta_2| < \epsilon$. Thus, the problem we want to solve is equivalent to the following theorem: 

\begin{theorem}
\label{thm:perturbed-regular}
For any regular distribution $F$ with smooth PDF function, suppose that we have perturbed oracle access to $F$ that every time when querying $x$, the oracle $F^*$ returns $F^{*}(x)=F(x+\eps_1)+\eps_2$ for some unknown $\eps_1\in[-\frac{\eps^2}{2},\frac{\eps^2}{2}]$, and $\eps_2\in[-\eps,\eps]$. Then $\tilde{O}(\eps^{-0.5})$ queries to the perturbed oracle $\Fs$ are sufficient to learn $F$ within $\eps$ \levy-distance.
\end{theorem}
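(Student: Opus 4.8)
The plan is to re-run the algorithm of Theorem \ref{thm:regular}, but replace every exact query to $F$ and $f$ by an approximate one computed solely from the perturbed CDF oracle $\Fs$, and then verify that the error budgets of Lemmas \ref{lem:levy-learn}, \ref{lem:regpdfdec}, \ref{lem:binary-search-query} and Corollary \ref{cor:pdfdec-32} absorb the extra slack. First, observe that a single query $\Fs(x)$ already gives $F(x)$ up to additive error $\eps_2 + (\text{Lipschitz slack from }\eps_1)$; since $f\le 1/\eps$ on the relevant range (as shown inside Lemma \ref{lem:binary-search-query}, any point with $f(x)\ge 2^{33}/\eps$ would force a CDF increment exceeding $1$), the $\eps_1$-shift contributes at most $\frac{\eps^2}{2}\cdot\frac{1}{\eps}=O(\eps)$, so $\Fs(x)\in[F(x)-O(\eps),F(x)+O(\eps)]$. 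This is exactly the ``$F_i\in[F(x_i)-\eps,F(x_i)+\eps]$'' hypothesis of Lemma \ref{lem:levy-learn} up to a constant, so the final reconstruction step needs no change.

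The real work is simulating PDF queries. The idea is a finite-difference estimate: to learn $f(x)$ at resolution needed in interval collection $I_k$ (where the query-point spacing is $h:=2^{-k-\ell}\sqrt{\eps}$, possibly refined further in Step 3), estimate $f(x)\approx \frac{\Fs(x+h)-\Fs(x)}{h}$. The numerator carries additive error $O(\eps)$, so the estimate has error $O(\eps/h)=O(2^{k+\ell})$. We must check this is small relative to $f(x)\asymp 2^{k-33-\ell\cdot 0}$... more precisely, on $I_k$ we have $f(x)\in(2^{k-70},2^{k+3})$ independent of $\ell$, whereas the finite-difference error is $O(2^{k+\ell})$ — which is \emph{not} automatically dominated when $\ell$ is large. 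The fix is to realize we never actually need $f$ itself: Lemma \ref{lem:binary-search-query} and the Step-2/Step-3 partition only use $f$ to (i) locate the ``$f(b)<4f(a)$ vs.\ $f(b)\ge 4f(a)$'' branch in the binary search, and (ii) bound $t_{k,j}=\frac{f(v_{k,j+1})-f(v_{k,j})}{2^{k+\ell}\sqrt{\eps}}$, i.e.\ the \emph{average} PDF over a query cell of length $h$. But $\frac{\Fs(v_{k,j+1})-\Fs(v_{k,j})}{h}$ is precisely an $O(\eps/h)$-accurate estimate of that average, and both the regularity slope bound \eqref{eqn:df_lb} and the length-of-interval bound \eqref{eqn:interval-length} still go through verbatim when $f(v_i)$ is read as ``average PDF on a cell'' up to constants, because a convex-type distortion argument (Figure~\ref{fig:regular-example}) controls the deviation between pointwise and averaged PDF by exactly the rectangle area $(v_{k,j+1}-v_{k,j})(\of-\uf)=O(\eps)$ that Lemma \ref{lem:levy-learn} already tolerates.

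So the key steps, in order, are: (1) from $\Fs$ define, for any dyadic scale $h$, the estimator $\hf_h(x):=\frac{\Fs(x+h)-\Fs(x)}{h}$ and record that $\abs{\hf_h(x)-\frac{1}{h}\int_x^{x+h}f}=O(\eps/h)$; (2) re-prove Corollary \ref{cor:pdfdec-32} and Lemma \ref{lem:binary-search-query} with $f$-queries replaced by $\hf_h$ at the scale $h\ge\eps$ used at that node, checking that the constants $2^{-33},2^{\pm36}$ degrade only to larger absolute constants $2^{\pm c}$ (harmless, since everything downstream only uses ``some constant''); (3) re-run Steps 0–3 of Theorem \ref{thm:regular} where each PDF query is an $\hf_h$ call costing one oracle query to $\Fs$ — the query-count bookkeeping ($\tO(\eps^{-1/2})$) is unchanged since it counts points, not oracle type; (4) feed the resulting points, with their $\Fs$-values as the $F_i$'s and the $\hf_h$-values as the $\uf_i,\of_i$'s, into Lemma \ref{lem:levy-learn}, whose $O(\eps)$ conclusion is exactly what we want. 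I expect step (2) to be the main obstacle: one must argue that replacing pointwise $f$ by cell-averaged $\hf_h$ does not break the ``PDF can't halve more than $32$ times'' counting argument of Lemma \ref{lem:regpdfdec}, since a cell-average can hide a brief spike; the resolution is that the binary search only ever compares averages over \emph{nested dyadic cells} of length $\ge\eps$, and on such cells the regularity inequality \eqref{eq:smooth_reg} forces the average to be within a constant factor of the left endpoint value, so the geometric-decay counting survives with a worse—but still $O(1)$—constant.
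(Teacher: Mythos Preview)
Your high-level plan---re-run the algorithm of Theorem~\ref{thm:regular} with every $f$-query replaced by a finite-difference estimate from $\Fs$, and check that the existing error budgets absorb the slack---is exactly what the paper does. The paper first carries out the convex case (Theorem~\ref{thm:perturbed-convex}) in full detail and then observes that the same modifications transplant to the regular case step by step.

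The one place where the paper is cleaner than your sketch is precisely your ``main obstacle'' (step (2)): you frame the finite-difference estimate as a \emph{cell average} $\frac{1}{h}\int_x^{x+h} f$ and then worry about reconciling averages with the pointwise quantities appearing in Lemma~\ref{lem:regpdfdec} and Lemma~\ref{lem:binary-search-query}. The paper sidesteps this entirely by invoking the Lagrange mean value theorem: using the symmetric difference $\fs_\gamma(x)=\frac{\Fs(x+\gamma)-\Fs(x-\gamma)}{2\gamma}$, one gets $\fs_\gamma(x)=(1+\delta_1)f(x^*)+\delta_2$ for an actual \emph{point} $x^*\in[x-2\gamma,x+2\gamma]$, with $|\delta_1|\le\eps^2/(2\gamma)$ and $|\delta_2|\le\eps/\gamma$ (this is where the tight $|\eps_1|\le\eps^2/2$ hypothesis is used). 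So there is no ``average vs.\ pointwise'' issue at all: the perturbed PDF query returns a genuine pointwise value of $f$, just at a slightly shifted location and with controlled multiplicative and additive error. Your forward difference would give the same thing by MVT, so your concern about ``a cell-average can hide a brief spike'' is unfounded---you are already holding a pointwise value.

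With this in hand, the paper chooses two different scales: $\gamma=O(\eps)$ for Step~1 and the binary search of Lemma~\ref{lem:binary-search-query} (where only constant-factor PDF comparisons are needed, so the constants $2^{-33},2^{\pm36}$ merely degrade to worse absolute constants, as you anticipated), and $\gamma=O(2^{-k-\ell}\sqrt{\eps})$ for Step~3 (where the additive error $O(2^{k+\ell}\sqrt{\eps})$ in $f$ translates to an $O(1)$ additive error in $t_{k,j}$, absorbed by replacing $\lceil t_{k,j}\rceil$ with $\max(1,\lceil t_{k,j}\rceil-C)$ for a fixed constant $C$). Your proposal reaches the same conclusions but via a more circuitous route.
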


%\rpl{ I think it is worth defining what arbitrarily small is. Maybe we can explicitly say $\epsilon_2 = \epsilon^2$ or $\epsilon^2$. The way I would say is: fix some parameter $\delta \ll \epsilon$ and the guarantee is that $\epsilon_1 \in [-\delta, \delta]$.}
%\ytnote{Agree. I found that if we set $|\eps_1|\leq\frac{\eps^2}{2}$, then the proofs later do not need to change. So I have changed everywhere accordingly.}

\subsection{Query complexity of distributions with convex $F$}
Before proving Theorem~\ref{thm:perturbed-regular} for general regular distributions, we first prove the theorem for a distribution with convex $F$ to give more intuition. The spirit of the general case is almost identical.

\begin{theorem}
\label{thm:perturbed-convex}
For any regular distribution $F$ with smooth PDF function, suppose that we have perturbed oracle access to $F$ that every time when querying $x$, the oracle $F^*$ returns $F^{*}(x)=F(x+\eps_1)+\eps_2$ for some unknown $\eps_1\in[-\frac{\eps^2}{2},\frac{\eps^2}{2}]$, and $\eps_2\in[-\eps,\eps]$. Then $\tilde{O}(\eps^{-0.5})$ queries to the perturbed oracle $\Fs$ are sufficient to learn $F$ within $\eps$ \levy-distance.
\end{theorem}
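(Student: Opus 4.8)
The plan is to re-run, essentially verbatim, the three-step algorithm behind Theorem~\ref{thm:convex}, with the single modification that every call to the PDF oracle $f(\cdot)$ is simulated by a \emph{finite difference} of two calls to the perturbed CDF oracle $\Fs$, and then to check that the two noise sources ($\eps_2\in[-\eps,\eps]$ and the hidden argument shift $\eps_1\in[-\tfrac{\eps^2}{2},\tfrac{\eps^2}{2}]$) perturb the quantities that matter only by lower-order amounts. The one structural fact we use is that, since $F$ has a convex CDF, $f$ is non-decreasing, so for any $a<b$ and any $\ell,r>0$, every $x\in[a,b]$ satisfies $\tfrac{F(a)-F(a-\ell)}{\ell}\le f(a)\le f(x)\le f(b)\le\tfrac{F(b+r)-F(b)}{r}$: a secant slope of $F$ just to the left of $[a,b]$ is a valid lower bound for $f$ on $[a,b]$, and one just to the right is a valid upper bound. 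Replacing the true CDF values in such a secant by the oracle's answers inflates the error by at most $\tfrac{2\eps}{w}$ (from $\eps_2$) plus a term of order $\tfrac{(\max_{\mathrm{loc}}f)\,\eps^2}{w}$ (from $\eps_1$, using monotonicity of $F$ to turn the shift into a one-sided bound), where $w$ is the window width; with $w$ of width $\Theta(2^{-k}\sqrt\eps)$ at a place where $f\approx 2^k$ and $2^k\le 1/\eps$, both terms are an $O(\sqrt\eps)$ fraction of $f$. Hence from $\Fs$ alone we can still manufacture \emph{valid} bounds $\uf\le f(x)\le\of$ on any target interval of length $\ge w$, with $\of-\uf$ at most a constant times the true oscillation of $f$ over a $3w$-neighbourhood plus an additive $O(\eps/w)$ --- exactly the input Lemma~\ref{lem:levy-learn} consumes.

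\textbf{The three steps.} In \textbf{Step~1} we cannot query $f$, so to locate the points delimiting the level set $\{f\approx 2^k\}$ we binary-search on $x$ and test ``$f(x)\gtrless 2^k$'' via the secant slope $\tfrac{\Fs(x+\delta)-\Fs(x)}{\delta}$ over a window $\delta=\Theta(\eps\,2^{-k})$ chosen with a large constant, so that $f$ is localised (its crossings move by at most $\delta$) while the signal $\delta f=\Theta(\eps)$ is a large multiple of the combined noise; running this for the $K=\log(1/\eps)$ thresholds $2^0,\dots,2^K$ costs $\tO(1)$ queries and yields intervals $I_0,\dots,I_{K+1}$ on which $f$ is pinned to within a constant factor of $2^k$, with the tail $I_{K+1}$ (where $f\ge 1/\eps$, hence $|I_{K+1}|\le\eps$) needing nothing from Lemma~\ref{lem:levy-learn}. \textbf{Step~2} is unchanged: partition each $I_k$ into pieces of length $2^{-k}\sqrt\eps$; the endpoint count is still $\eps^{-1/2}|I_0|+\sum_{k\ge1}2^k\eps^{-1/2}|I_k|\le\eps^{-1/2}+\eps^{-1/2}\!\int_0^1 2f=O(\eps^{-1/2})$, now charging two $\Fs$-calls per endpoint. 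In \textbf{Step~3}, on a piece $I_{k,j}=[v,v']$ of length $w=2^{-k}\sqrt\eps$ we replace the exact values $f(v),f(v')$ by the certified bounds from secants of $\Fs$ over $[v-w,v]$ (lower) and $[v',v'+w]$ (upper); these are valid on $I_{k,j}$ and their gap is at most $\mathrm{osc}\,f|_{[v-w,v'+w]}+O(2^k\sqrt\eps)$, so $(\text{length})(\of-\uf)\le 2^{-k}\sqrt\eps\cdot\mathrm{osc}+O(\eps)$. Subdividing $I_{k,j}$ into $\tilde t_{k,j}=\max(1,\lceil t_{k,j}\rceil)$ equal pieces, with $t_{k,j}\,2^k\sqrt\eps:=\mathrm{osc}\,f|_{[v-w,v'+w]}$, makes every sub-piece satisfy $(\text{length})(\of-\uf)=O(\eps)$; and since the stretched neighbourhoods $[v_{k,j}-w,v_{k,j+1}+w]$ overlap only $O(1)$ times while $f$ ranges within a constant factor of $2^k$ on $I_k$, we get $\sum_j t_{k,j}=O(1/\sqrt\eps)$, hence $\sum_j\tilde t_{k,j}=O(\eps^{-1/2})$ per $k$ once the ``$+1$''s are absorbed into Step~2's bound. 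Summing over the $\tO(1)$ values of $k$, the algorithm makes $\tO(\eps^{-1/2})$ calls to $\Fs$, and feeding the collected perturbed values of $F$ at the chosen points into Lemma~\ref{lem:levy-learn} produces a piecewise-linear $\hat F$ with $\text{\levy}(F,\hat F)=O(\eps)$; running the whole procedure with $\eps$ scaled down by the hidden constant gives the stated bound.

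\textbf{Main obstacle.} The delicate point is \textbf{Step~1}: with no PDF oracle, the order of magnitude of $f$ at a point must be read off from noisy CDF values, and the finite-difference window that makes this reading trustworthy itself depends on that unknown magnitude --- this is what forces the window to be tied to the threshold being tested ($\delta\sim\eps\,2^{-k}$) and calls for a short monotonicity/convexity argument that such a secant localises the level set despite $f$ varying across the window. A secondary bookkeeping point is that, because $\Fs$ returns $F$ at an unknown nearby argument, every inequality ``$\uf\le f(x)$'' or ``$f(x)\le\of$'' must be derived so as to hold for the true $f$ regardless of the $\eps_1$-shift; this is routine since $|\eps_1|\le\eps^2/2$ is negligible next to every window width used and next to the $\Theta(\eps)$ or larger signal each estimate is designed to detect. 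Everything else is a line-by-line rerun of Theorem~\ref{thm:convex}, and the same argument with the irregularity coefficient replacing the constant $2$ (cf.\ the discussion after Theorem~\ref{thm:regular}) is what will subsequently handle the general regular and $\beta$-irregular cases.
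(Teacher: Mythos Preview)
Your proposal is correct and follows essentially the same route as the paper: simulate the PDF oracle by finite differences of $\Fs$, re-run Steps~1--3 of Theorem~\ref{thm:convex} with left/right secants in Step~3 serving as certified $\uf,\of$, and telescope to bound the extra subdivisions. The only cosmetic discrepancies are that the paper uses a \emph{fixed} window $\gamma=4\eps$ in Step~1 (your $k$-dependent $\delta\sim\eps\,2^{-k}$ also works), and in Step~3 the paper defines the subdivision count from the \emph{computable} gap $\of_{k,j}-\uf_{k,j}$ (subtracting a constant to absorb the $O(2^k\sqrt\eps)$ slack) rather than from the true oscillation of $f$, which your algorithm cannot actually evaluate; this is a trivial fix.
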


\begin{proof}
We first show how to define a (perturbed) query to $f$ via the perturbed oracle. Suppose that we want to query $f$. When we query $x+\gamma$ and $x-\gamma$ to the perturbed oracle $\Fs$, we will get $\Fs(x+\gamma)=F(x+\gamma+\eps_1)+\eps_2$, and $\Fs(x-\gamma)=F(x-\gamma+\eps_3)+\eps_4$ for some unknown $\eps_1,\eps_3\in[-\frac{\eps^2}{2},\frac{\eps^2}{2}]$, and $\eps_2,\eps_4\in[-\eps,\eps]$. Let 
\[\fs_\gamma(x)=\frac{\Fs(x+\gamma)-\Fs(x-\gamma)}{2\gamma}.\]
By Lagrange mean value theorem, for smooth function $F$, there exists $x^*\in [x-\gamma+\eps_3,x+\gamma+\eps_1]$ such that 
\begin{eqnarray*}
f(x^*)&=&\frac{F(x+\gamma+\eps_1)-F(x-\gamma+\eps_3)}{(x+\gamma+\eps_1)-(x-\gamma+\eps_3)}\\
&=&\frac{(\Fs(x+\gamma)-\eps_2) - (\Fs(x-\gamma) - \eps_4)}{2\gamma+\eps_1-\eps_3}\\
&=&\frac{2\gamma\fs_\gamma(x)-\eps_2+\eps_4}{2\gamma+\eps_1-\eps_3}.
\end{eqnarray*}

Then we have
\begin{equation}\label{eqn:pdf-query}
\fs_{\gamma}(x)=\frac{2\gamma+\eps_1-\eps_3}{2\gamma}f(x^*)+\frac{\eps_2-\eps_4}{2\gamma}=(1+\delta_1)f(x+\gamma_1)+\delta_2,
\end{equation}
for some $\delta_1,\delta_2,\gamma_1$ where $|\delta_1|\leq\frac{\eps^2}{2\gamma}$, $|\delta_2|\leq\frac{\eps}{\gamma}$, $|\gamma_1|<2\gamma$. Later, we will make perturbed queries to $f$ via $\fs_{\gamma}(x)$, and specify the accuracy parameter $\gamma$ needed in each step of the learning algorithm.

\paragraph{\textit{Changes to Lemma~\ref{lem:levy-learn}.}} Suppose that we are only able to identify $F_i\in[0,1]$ such that $F(x_i+\eps_i)\in [F_i-\eps,F_i+\eps]$ for some unknown small $\eps_i\in[-\eps^2/2,\eps^2/2]$. We show that Lemma~\ref{lem:levy-learn} still holds. Observe that the distribution constructed from input $(x_1,\cdots,x_m)$ and $(F_1,\cdots,F_m)$ in the proof of Lemma~\ref{lem:levy-learn} is only $O(\eps)$ \levy-distance from the distribution constructed from input $(x_1',x_2',\cdots,x_m')=(x_1+\eps_1,x_2+\eps_2,\cdots,x_m+\eps_m)$ and $(F_1,\cdots,F_m)$. Furthermore, as either $x'_{i+1}-x'_i$ is $O(\eps)$, or $(x'_{i+1}-x'_i)(\of_i-\uf_i)<2(x_{i+1}-x_i)(\of_i-\uf_i)=O(\eps)$ (since $x'_{i+1}-x'_i\leq x_{i+1}-x_{i}+\eps^2 < 2(x_{i+1}-x_{i})$), the distribution constructed from input $(x_1',\cdots,x_m')$ and $(F_1,\cdots,F_m)$ is within $(\eps)$ \levy-distance from $F$. Thus the distribution constructed from input $(x_1,\cdots,x_m)$ and $(F_1,\cdots,F_m)$ is within $O(\eps)$ \levy-distance from $F$ on $[x_1,x_m]$. To summarize, to apply Lemma~\ref{lem:levy-learn}, we do not need $F_i$ to be an estimate of $F(x)$ with $O(\eps)$ error; letting $F_i=F^*(x)$ is good enough.

\paragraph{\textit{Changes to Theorem~\ref{thm:convex}}.} We now describe in detail when $F$ is convex (i.e. with monotone $f$), how does the algorithm change.

\paragraph{Step 1} We set $K=\log(1/\epsilon)$ and use binary search to identify points $x_0, x_1, \hdots, x_{2K+3}$ such that for each $k=0,\hdots, K+1$ we have $\abs{x_{2k+1} - x_{2k+2}} =O(\epsilon)$ and for the interval $I_k = [x_{2k}, x_{2k+1}]$ it holds that: 
%\rpl{The intervals are now potentially overlapping, right? Maybe it is worth mentioning this explicitly.}\ytnote{No, the intervals are not overlapping. Although a value $v$ may satisfy the properties of multiple intervals, it only belongs to one interval, the same as Theorem 2.7.}\rpl{That sounds good, thanks. Now I see that the $f$-range overlap so that the values don't need to overlap.}
\begin{itemize}
\item $f(v) \leq 2$ for $v \in I_0$
\item $2^{k-2} \leq f(v) \leq 2^{k+1}$ for $v \in I_k$, $k=1,...,K$
\item $f(v) \geq 2^{K-1}$ for $v \in I_{K+1}$
\end{itemize}
Notice that now we do not have access to $f$. Thus in order to perform the binary search, we query $\fs_{\gamma}$ with $\gamma=4\eps$. By \eqref{eqn:pdf-query}, for any query $x$, $\fs_{\gamma}(x)=(1+\delta_1)f(x+\gamma_1)+\delta_2$ for some $\delta_1,\delta_2,\gamma_1$ with $|\delta_1|\leq\frac{\eps^2}{2\gamma}\leq\frac{1}{8}$, $|\delta_2|\leq \frac{\eps}{\gamma}=\frac{1}{4}$, $|\gamma_1|\leq2\gamma=8\eps$. Thus
\[\frac{7}{8}f(x-8\eps)-\frac{1}{4}\leq \fs_{\gamma}(x)\leq\frac{9}{8}f(x+8\eps)+\frac{1}{4}.\]
Therefore, when $\fs_{\gamma}(x)\geq 2^{k-1}$, $f(x+8\eps)>2^{k-2}$; when $\fs_{\gamma}(x)\leq 2^{k}$, $f(x-8\eps)<2^{k+1}$. Thus, suppose we do binary search on $\fs_{\gamma}$ to find a sequence of points $x'_0,x'_1,\cdots,x'_{2K+3}$ such that for every $k=0,\cdots,K$, $\fs_{\gamma}(x'_{2k+1})\leq 2^{k}$, $\fs_{\gamma}(x'_{2k+2})\geq 2^{k}$. Then if we set $x_{2k+1}=x'_{2k+1}-8\eps$, $x_{2k+2}=x'_{2k+2}+8\eps$, the sequence $x_0,\cdots,x_{2K+3}$ satisfies all the properties required for the step.

Identifying each point still takes $O(\log(1/\epsilon)) = \tilde{O}(1)$ queries to $\fs_{\gamma}(\cdot)$ (thus $F(\cdot)$). Given that we have $\tilde{O}(1)$ such points, we used a total of $\tilde{O}(1)$ queries. The last interval $I_{K+1}$ again satisfies the conditions of Lemma \ref{lem:levy-learn}.

\paragraph{Step 2} For each interval $I_k$ with $ k=0,..., K$, partition $I_k$ to intervals $I_{k,1},I_{k,2},\cdots$ of length $\frac{1}{2^k}\sqrt{\eps}$. The number of endpoints added is at most $\frac{|I_k|}{\sqrt{\eps}/2^k}=2^k\eps^{-1/2}|I_k|$ for each interval $I_k$, and sums up to at most
\begin{eqnarray*}
\sum_{k=0}^{K}2^k\eps^{-1/2}|I_k|&=&\eps^{-1/2}|I_0|+\sum_{k=1}^{K}2^k\eps^{-1/2}|I_k|\\
&\leq&\eps^{-1/2}+\sum_{k=1}^{K}\eps^{-1/2}\int_{v\in I_k}4f(v)dv\\
&\leq&\eps^{-1/2}+\eps^{-1/2}\int_{v\in [0,1]}4f(v)dv\\
&=&\eps^{-1/2}+4\eps^{-1/2}=5\eps^{-1/2}.
\end{eqnarray*}
The only change compared to the previous section is that $2^k\leq 4f(v)$ instead of $2f(v)$ in the second line.
Thus if we query $F^*$ for all the endpoints of the intervals from the first two steps, the total number of queries needed is at most $O(\eps^{-1/2})$.

\paragraph{Step 3} For each interval $I_{k,j}=[v_{k,j},v_{k,j+1}]=[v_{k,j},v_{k,j}+2^{-k}\sqrt{\eps}]\subseteq I_k$ partitioned in Step 2, assume that $2^{-k}\sqrt{\eps}>10\eps$, otherwise as $|I_{k,j}|=O(\eps)$, it already satisfies the condition of Lemma~\ref{lem:levy-learn}.

Define $\uf_{k,j}$ and $\of_{k,j}$ as follows. Let $\gamma^*=2^{-k-2}\sqrt{\eps}$. If $I_{k,j}$ is the leftmost interval in $I_k$, then $\uf_{k,j}=2^{k-2}$, otherwise $\uf_{k,j}=\fs_{\gamma^*}(v_{k,j}-2^{-k-1}\sqrt{\eps})-2^{k+3}\sqrt{\eps}$; if $I_{k,j}$ is the rightmost interval in $I_k$, then $\of_{k,j}=2^{k+1}$, otherwise $\of_{k,j}=\fs_{\gamma^*}(v_{k,j}+2^{-k}\sqrt{\eps}+2^{-k-1}\sqrt{\eps})+2^{k+3}\sqrt{\eps}$.
Notice that by \eqref{eqn:pdf-query}, if $\uf_{k,j}\neq 2^{k-2}$, then
\[\fs_{\gamma^*}(v_{k,j}-2^{-k-1}\sqrt{\eps})=(1+\delta_1)f(v_{k,j}-2^{-k-1}\sqrt{\eps}+\gamma_1)+\delta_2\] 
for some $\delta_1,\delta_2,\gamma_1$ where $|\delta_1|\leq\frac{\eps}{2\gamma^*}=2^{k+3}\eps^{1.5}$, $\delta_2\leq\frac{\eps}{\gamma^*}=2^{k+2}\sqrt{\eps}$, $|\gamma_1|<2\gamma=2^{-k-1}\sqrt{\eps}$. This means that $v_{k,j}-2^{-k-1}\sqrt{\eps}+\gamma_1\in[v_{k,j}-2^{-k}\sqrt{\eps},v_{k,j}]$. Since $f(v_{k,j})\leq 2^{K+1}=\frac{2}{\eps}$, we have $f(v_{k,j})|\delta_1|\leq \frac{2}{\eps}\cdot2^{k+1}\eps^{1.5}=2^{k+2}\sqrt{\eps}$. Then
\begin{eqnarray*}
\uf_{k,j}&=&\fs_{\gamma^*}(v_{k,j}-2^{-k-1}\sqrt{\eps})-2^{k+3}\sqrt{\eps}\\
&=&(1+\delta_1)f(v_{k,j}-2^{-k-1}\sqrt{\eps}+\gamma_1)+\delta_2-2^{k+3}\sqrt{\eps}\\
&\leq&(1+\delta_1)f(v_{k,j})+2^{k+2}\sqrt{\eps}-2^{k+3}<f(v_{k,j}).
\end{eqnarray*}
This means that $\uf_{k,j}$ is indeed a lower bound of $f(v_{k,j})$. Symmetrically, $\of_{k,j}$ is indeed an upper bound of $f(v_{k,j+1})$.

If $\overline{f}_{k,j} - \underline{f}_{k,j} =t_{k,j}2^k\sqrt{\eps}$, then partition the interval to $\tilde{t}_{k,j}:=\max(1,\lceil{t_{k,j}}\rceil-2^5)$ intervals of length $\frac{1}{\tilde{t}_{k,j}2^k}\sqrt{\eps}$, and query $F$ for the newly added endpoints.
As any neighboring points $\overline{v}, \underline{v}$ have distance $\overline{v} - \underline{v} \leq \frac{1}{\tilde{t}_{k,j}2^k}\sqrt{\eps}$. Hence this interval satisfies the second condition in Lemma \ref{lem:levy-learn} :
$$ (\overline{v} -  \underline{v}) (\overline{f}_{k,j} -  \underline{f}_{k,j}) \leq \frac{1}{\tilde{t}_{k,j}2^k}\sqrt{\eps}\cdot (\tilde{t}_{k,j}+2^5)2^k\sqrt{\eps}=O(\eps)$$
which allows us to learn the distribution up to $\epsilon$ \levy-distance.

Finally, we only need to bound the number of queries needed in this step. Observe that if $I_{k,j}$ is not the leftmost interval in $I_k$,
\begin{eqnarray*}
\uf_{k,j}&=&(1+\delta_1)f(v_{k,j}-2^{-k-1}\sqrt{\eps}+\gamma_1)+\delta_2-2^{k+3}\sqrt{\eps}\\
&\geq&(1-|\delta_1|)f(v_{k,j}-2^{-k}\sqrt{\eps})-2^{k+2}\sqrt{\eps}-2^{k+3}\sqrt{\eps}\\
&\geq&f(v_{k,j-1})-2^{k+4}\sqrt{\eps}.
\end{eqnarray*}
Symmetrically, if $I_{k,j}$ is not the rightmost interval in $I_k$, $\of_{k,j}\leq f(v_{k,j+2})+2^{k+4}\sqrt{\eps}$. 
Thus, 
\[\of_{k,j}-\uf_{k,j}\leq f(v_{k,j+2})-f(v_{k,j-1})+2^{k+5}\sqrt{\eps}.\]
Since the interval $I_{k,j}$ is partitioned to $\max(1,\lceil{t_{k,j}}\rceil-2^5)$ intervals, the number of queries to $\fs_{\gamma^*}$ is at most $\frac{f(v_{k,j+2})-f(v_{k,j-1})}{2^{k}\sqrt{\eps}}$. If we sum up $\frac{f(v_{k,j+2})-f(v_{k,j-1})}{2^{k}\sqrt{\eps}}$ for every $j$, we get the total number of queries to $\fs_{\gamma^*}$ is at most\footnote{To be more precise, for the rightmost interval in $I_k$, $f(v_{k,j+2})$ is replaced by $2^{k+1}$; for the leftmost interval in $I_k$, $f(v_{k,j-1})$ is replaced by $2^{k-2}$.}
\begin{eqnarray*}
\sum_{j}\frac{f(v_{k,j+2})-f(v_{k,j-1})}{2^{k}\sqrt{\eps}}\leq \frac{3}{2^k\sqrt{\eps}}(2^{k+1}-2^{k-2})=O(\eps^{-0.5})
\end{eqnarray*}
by simplifying the telescoping sum. Thus in this step, for each $k$, the number of queries on $F^*$ for added points is $\tO(\eps^{-1/2})$. Also, the number of queries on $\fs_{\gamma^*}$ for $I_k$ is twice the total number of endpoints in $I_{k,j}$, which is $O(\eps^{-1/2})$ by Step 2. As there are only $\tilde{O}(1)$ different values for $k$, the total number of queries on $F^*$ is $\tO(\eps^{-1/2})$.

To summarize, we have described how to modify the algorithm for learning a distribution with smooth convex CDF with oracle queries to $F$ and $f$, to an algorithm with only oracle queries to $\Fs$. The query complexity is asymptotically the same, which is $\tO(\eps^{-1/2})$.
\end{proof}

\subsection{Query complexity of general regular distributions}

Essentially the same modifications described for the convex CDF can be applied to the general regular distributions, obtaining a proof of Theorem \ref{thm:perturbed-regular}. We omit the details since it is essentially a re-writing of the proof of Theorem~\ref{thm:regular} and instead describe the main modifications.

In Step 0, the partition of $[0,1]$ to intervals with quantile within a factor of 2 can also be done via queries to $\Fs$.
In Step 1 of the proof of Theorem~\ref{thm:regular}, the partition is doable with queries to $f$ replaced by queries to $\fs_{O(\eps)}$, as Lemma~\ref{lem:binary-search-query} only requires estimating $f$ within a constant factor, which is the same as Step 1 for Theorem~\ref{thm:perturbed-convex}. The only difference is that the value range of $f$ in each interval may get expanded by a constant factor. This leads to the number of queries in Step 2 increasing by a constant factor. In the analysis of Step 3, similar to Step 3 of the modified algorithm in Theorem~\ref{thm:perturbed-convex} the accuracy needed for estimating $f$ is $O(2^{k+\ell}\sqrt{\eps})$, which is achievable via queries to $\fs_{O(2^{-k-\ell}\sqrt{\eps})}$.

A simple Chernoff bound shows that a query to $F^*$ (defined in Theorem \ref{thm:perturbed-regular}) can be simulated by $O(1/\epsilon^2)$ pricing queries.
Combining Theorem \ref{thm:perturbed-regular} with Chernoff bound we obtain a proof of Theorem \ref{thm:pricing-query-regular}, matching the lower bound established in \cite{leme2021pricing}.

% \ytnote{I'm not sure write a full proof for the general case with only queries to perturbed CDF function $\Fs$. The idea is almost the same as convex CDF case, but it would be very complicated and does not bring any new intuition.}\rpl{I think we don't need a full proof. It would be tedious to write and read anyway. I added a bit more text to this section. Let me know what you think.} \ytnote{Agree. Removed the comment.}

\section{Mixture Distributions}

Our second application is the analysis of mixtures of regular distributions. Corollaries \ref{cor:mix1} and \ref{cor:mix2} are straightforward from Theorem \ref{thm:main}.

We end the paper with the proof of the more surprising result (Theorem \ref{thm:learning_mixture_regular}) that even though mixtures of $O(1)$ regular distributions can be described with $\tilde{O}(1/\epsilon^{0.5})$ bits, they require $\Omega(1/\epsilon^{3})$ pricing queries to be learned within $\epsilon$ errors, the same asymptotic amount as a general distribution.

% \rpl{Added some text above. Maybe we can remove the statement below and refer to the statement of Theorem \ref{thm:learning_mixture_regular} in the intro. We can also re-state if you prefer.}\pw{better to restate, since the reader probably does not remember the statement by this time, but upto you.}

\newtheorem*{theorem*}{Theorem}

\begin{theorem*}[Restatement of Theorem \ref{thm:learning_mixture_regular}]
Let $\C$ be the class of distributions supported in $[0,1]$ that can be written as a mixture of two regular distributions. To estimate the CDF of a distribution in that class within $\epsilon$ in \levy-distance we require at least $\Omega(1/\epsilon^3)$ pricing queries.
\end{theorem*}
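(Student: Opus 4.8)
The plan is to exhibit a family of distributions, each a mixture of two regular distributions, such that distinguishing between members of the family requires $\Omega(1/\epsilon^3)$ pricing queries, and such that members that are hard to distinguish are nonetheless $\Omega(\epsilon)$-far apart in \levy-distance. The key idea is that a mixture of two regular distributions can encode a PDF that oscillates: one regular component has a steeply increasing PDF and the other a steeply decreasing (equal-revenue-like) PDF, so their convex combination, with the weight chosen appropriately, can produce a PDF that is flat, or that dips and recovers, on a short interval — exactly the kind of local perturbation that the description-complexity lower bound (Theorem~\ref{thm:description-lb}) forbids for a single regular distribution. I would first check that on a short interval of length $\Theta(\sqrt\epsilon)$ (say around a point $v_0$), one can use the two-component mixture to realize \emph{either} the ``baseline'' PDF \emph{or} a perturbed PDF that differs from the baseline by $\Theta(\sqrt\epsilon)$ in value over a sub-interval of length $\Theta(\sqrt\epsilon)$, hence inducing a CDF gap of $\Theta(\epsilon)$ — giving $\Omega(\epsilon)$ \levy-distance between the two. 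Crucially, unlike the single-regular-distribution lower bound where only $\Theta(1/\sqrt\epsilon)$ independent bits could be packed in, with mixtures we should be able to place such a perturbation \emph{anywhere} (or at $\Theta(1/\epsilon)$ disjoint locations, or at a single location whose position carries $\Theta(\log(1/\epsilon))$ bits), recovering the full $\Omega(1/\epsilon)$ description complexity of general distributions; but the more delicate point is the factor $1/\epsilon^2$ coming from the pricing-query (binary feedback) model.

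Next I would set up the information-theoretic lower bound. The standard approach in the pricing-query model (as in \cite{leme2021pricing}) is: fix a hard instance where the CDF has a value $F(p)$ near some price $p$ that is only determined up to an additive $\Theta(\epsilon)$ window, and where learning the CDF to \levy-distance $\epsilon$ forces pinning down $F(p)$ to within $o(\epsilon)$; distinguishing two distributions whose CDFs differ by $\delta$ at a single price requires $\Omega(1/\delta^2)$ queries \emph{at that price} by a Chernoff/KL argument, and if the hard region has ``width'' requiring such discrimination at $\Omega(1/\sqrt\epsilon)$ essentially-independent places (or a careful reduction forcing the queries to be spread out) one multiplies to get the bound. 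Here the target exponent is $3$: we want $\Omega(1/\sqrt\epsilon)$ ``hard sub-problems,'' each needing to resolve a probability gap of order $\epsilon$, hence $\Omega(1/\epsilon^2)$ queries each, for a total of $\Omega(1/\epsilon^{2.5})$ — which is only the regular-distribution bound, so in fact we need the mixture to let us make the \emph{local value gap} of the PDF be as small as $\Theta(\epsilon)$ rather than $\Theta(\sqrt\epsilon)$ while still spanning an interval of length $\Theta(\sqrt\epsilon)$: then resolving it needs discriminating CDF increments of order $\epsilon^{1.5}$, costing $\Omega(1/\epsilon^{3})$ queries per location, and with even a constant number of such locations (or one, carrying enough bits) we are done. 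So I would carefully choose the mixture weight $w = \frac12 \pm \Theta(\sqrt\epsilon)$ so that the ``increasing'' and ``decreasing'' regular PDFs nearly cancel their variation, leaving a residual slope small enough that a $\Theta(\epsilon)$ vertical perturbation is achievable within regularity of each component.

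Concretely, the steps in order: (1) construct two regular ``template'' distributions $F_1$ (increasing PDF, e.g. $f_1(v)=2v$ type) and $F_2$ (decreasing PDF, equal-revenue type) supported on $[0,1]$; (2) show that for a suitable mixing weight the mixture's PDF on a window $[v_0-c\sqrt\epsilon, v_0+c\sqrt\epsilon]$ is controllable, in the sense that we can add or not add a localized bump/dent of height $\Theta(\epsilon)$ and width $\Theta(\sqrt\epsilon)$ to one of the components while keeping it regular (using the slack in \eqref{eq:smooth_reg}, exactly as in the proof of Theorem~\ref{thm:description-lb} but now the ambient slope can be made small by the mixture); (3) verify that ``bump'' vs.\ ``no bump'' induces $\Omega(\epsilon)$ \levy-distance, so any learner must decide between them; (4) show the two instances are statistically $o(1)$-indistinguishable under $o(1/\epsilon^3)$ pricing queries, via the fact that at any price $p$ the sale probabilities $F(p)$ differ by only $O(\epsilon^{1.5})$ in the region where they differ at all, so by a KL / Pinsker (or Le Cam two-point) argument $o(1/\epsilon^{3})$ Bernoulli samples cannot separate them; (5) conclude by Yao's principle / Fano over the family. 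The main obstacle I expect is step (2)–(4) bookkeeping: making the mixture construction \emph{simultaneously} (a) keep both components genuinely regular, (b) force the hard-to-estimate CDF gap down to $\Theta(\epsilon^{1.5})$, and (c) keep the \levy-distance gap at $\Omega(\epsilon)$ — these pull in opposite directions and the construction has to thread the window-width $\Theta(\sqrt\epsilon)$ and perturbation-height $\Theta(\epsilon)$ against each other exactly; getting the query lower bound to land on the exponent $3$ rather than $2.5$ is precisely where the extra power of mixtures (a second tunable regular component) must be spent, and verifying that is the crux.
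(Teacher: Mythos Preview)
Your proposal contains a genuine gap at the crux you yourself flag. You want a two-point instance where (b) the pointwise CDF gap is $\Theta(\epsilon^{1.5})$ yet (c) the \levy-distance is $\Omega(\epsilon)$. This is impossible: \levy-distance is always at most Kolmogorov distance, so a uniform $O(\epsilon^{1.5})$ CDF gap forces \levy-distance $O(\epsilon^{1.5}) = o(\epsilon)$, and the learner would not be required to distinguish the two instances at all. No choice of mixing weight can thread this needle, because the obstruction is metric, not structural.

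The right accounting is the one you mention in passing and then abandon. With a mixture of two regular distributions you can place a perturbation on an interval of width $\Theta(\epsilon)$ --- not $\Theta(\sqrt\epsilon)$; that window size was forced by single-component regularity, which no longer constrains the mixture --- at any of $\Theta(1/\epsilon)$ disjoint locations in $[0,1]$. On each such short interval the CDF gap from the baseline is $\Theta(\epsilon)$, so by the standard two-point pricing lower bound (Lemma~3.5 of \cite{leme2021pricing}) $\Omega(1/\epsilon^2)$ queries \emph{inside that interval} are needed to detect the perturbation. Since the $\Theta(1/\epsilon)$ candidate intervals are disjoint, any algorithm using $o(1/\epsilon^3)$ queries leaves some interval with $o(1/\epsilon^2)$ queries and fails on the instance whose perturbation sits there. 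Your miscount was writing ``$\Omega(1/\sqrt\epsilon)$ hard sub-problems'' where it should have been $\Omega(1/\epsilon)$; with the correct count you reach exponent $3$ directly without ever shrinking the CDF gap below $\epsilon$. The paper implements exactly this: for each $a \in \{1.5 + 4k\epsilon : k=1,\dots,\lfloor 1/(40\epsilon)\rfloor\}$ it builds a mixture $F_a$ that agrees with the uniform distribution except on an interval of length $\Theta(\epsilon)$ near $1/a$, with one component a uniform-then-linear-ramp-down density $f_{a,\delta}$ and the other its reflection $g_{a,\delta}(v)=2-f_{a,\delta}(v)$, both easily checked to be regular.
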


\begin{proof}[Proof of Theorem \ref{thm:learning_mixture_regular}]
We will use Lemma 3.5 in~\cite{leme2021pricing} which says that it takes  $\Omega(1/\epsilon^2)$ pricing queries to distinguish between any two distributions with CDFs $F_1$ and $F_2$ such that 
\begin{equation}\label{eqn:closedist}
    \frac{1}{1+\epsilon}\le\frac{F_1(v)}{F_2(v)},\frac{1-F_1(v)}{1-F_2(v)}\le 1+\epsilon,\ \forall v.
\end{equation}

Now, we will construct a set $S$ of $\Omega(1/\epsilon)$ distributions such that each of which is: (1) a mixture of two regular distributions and (2) each follows Equation~\ref{eqn:closedist}. Moreover, the \levy-distance between any two pairs is $\Omega(\epsilon)$.

Construct a class of regular distributions parameterized by $a<2,\delta>0$ as follows. For a distribution with PDF $f_{a,\delta}$, 
\begin{equation*}
     f_{a,\delta}(v) = \left\{\begin{array}{lr}
        a, & \text{if } v\in[0,\frac{1-\delta}{a}],\\
        \frac{a^2}{2\delta}(\frac{1+\delta}{a}-v), & \text{if } v\in(\frac{1-\delta}{a},\frac{1+\delta}{a}),\\
        0, & \text{if } v\in[\frac{1+\delta}{a},1].\\
        \end{array}\right. 
\end{equation*}
In other words, $f_a$ is a distribution that is uniform with density $a$, until the point $v=\frac{1-\delta}{a}$ where the CDF reaches $1-\delta$. Then $f_{a,\delta}$ decreases linearly to 0, and remains 0 afterwards. We first show that any distribution in this class is regular. We want to show that for any $x\in[\frac{1-\delta}{a},\frac{1+\delta}{a}]$, $f_{a,\delta}'(x)\geq -\frac{2f^2(x)}{1-F(x)}$. Firstly, in this range, $f_{a,\delta}'(x)=-\frac{a^2}{2\delta}$. Secondly, as in this range $1-F_{a,\delta}(x)=\frac{1}{2}(\frac{1+\delta}{a}-x)f_{a,\delta}(x)$, we have
\[-\frac{2f_{a,\delta}^2(x)}{1-F_{a,\delta}(x)}=-\frac{2f_{a,\delta}(x)}{\frac{1}{2}(\frac{1+\delta}{a}-x)}=-\frac{2\cdot\frac{a^2}{2\delta}(\frac{1+\delta}{a}-x)}{\frac{1}{2}(\frac{1+\delta}{a}-x)}=-\frac{2a^2}{\delta}<-\frac{a^2}{2\delta}=f_{a,\delta}'(x).\]
Thus for any $a,\delta$, $f_{a,\delta}$ is the PDF of a regular distribution. Consider any distribution with PDF $g_{a,\delta}(v)=2-f_{a,\delta}(v)$ for every $v\in[0,1]$. Then such a distribution is also regular, since its PDF is monotone non-decreasing. Furthermore, the uniform mixture of $f_{a,\delta}$ and $g_{a,\delta}$ is a uniform distribution on $[0,1]$.

Let $F$ be the uniform distribution on $[0,1]$.
Define the following class of distributions with CDF $F_a$ and PDF $f_a$ parameterized by $a\in[1.5,1.6]$ as follows. For any $v\in[0,1]$, 
\[f_a(v)=f_{a,\eps/2}(v)+g_{a,\eps}.\]
Then $f_a$ is a uniform distribution, except for $v\in[\frac{1-\eps}{a},\frac{1+\eps}{a}]$.
For any $v\in [\frac{1-\eps}{a},\frac{1+\eps}{a}]$, $F_{a,\eps}(v)$ and $F_{a,\eps/2}(v)$ differs by at most $\eps$, thus $F_{a}(v)=\frac{F_{a,\eps/2}(v)+G_{a,\eps}(v)}{2}$ and $F(v)=\frac{F_{a,\eps}(v)+G_{a,\eps}(v)}{2}$ differs by at most $\eps$. Furthermore, since $F_a(v)$ and $F(v)$ are both constant away from 1, thus \eqref{eqn:closedist} is satisfied and $\Omega(1/\eps^2)$ pricing queries are needed to distinguish $F_a$ and $F$.

Observe that for any $a$, $F_a$ and the uniform distribution $F$ has \levy-distance at least $\Omega(\eps)$, as $F_{a,\eps/2}(\frac{1-\eps/2}{a})-F_{a,\eps}(\frac{1-\eps/2}{a})=\Omega(\eps)$. Take $S=\{F_a|a=1.5+4k\eps\textrm{ for }k=1,2,\cdots,\frac{1}{40\eps}\}$. As any $F_a\in S$ and the uniform distribution $F$ differs only on $[\frac{1-\eps}{a},\frac{1+\eps}{a}]$, we know that $\Omega(1/\eps^{2})$ pricing queries are needed on $[\frac{1-\eps}{a},\frac{1+\eps}{a}]$ to distinguish $F_a$ and $F$. Also since for every $a'<a$, $\frac{1+\eps}{a}<\frac{1-\eps}{a'}$ for $a,a'\in[1.5,1.6]$ and small enough $\eps$, we know that for different $a$, the interval where $F_a$ and $F$ differs are non-overlapping. Thus given any distribution $F_a$ where $a=1.5+4k\eps$ for some $k=1,2,\cdots,\frac{1}{40\eps}$, $\Omega(1/\eps^{3})$ queries are required to learn a distribution within $o(\eps)$ \levy-distance from $F_a$.

\end{proof}

\bibliographystyle{alpha}
\bibliography{refs}

\end{document}